\newtheorem{assumption}{Assumption}
\newtheorem{theorem}{Theorem}
\newtheorem{lemma}{Lemma}
\newtheorem{corollary}{Corollary}
\newtheorem{remark}{Remark} 
\newtheorem{definition}{Definition}  
\newcommand{\rev}[1]{{\color{blue}#1}} 
\newcommand{\rev}[1]{#1}
\begin{document}

\title{Paid Prioritization with Content Competition}

\author{\IEEEauthorblockN{Parinaz Naghizadeh}
\IEEEauthorblockA{\textit{Ohio State University}}
\and
\IEEEauthorblockN{Carlee Joe-Wong}
\IEEEauthorblockA{\textit{Carnegie Melon University}}
\and
\IEEEauthorblockN{Mung Chiang}
\IEEEauthorblockA{\textit{Purdue University}}
}

\maketitle

\begin{abstract}
We study the effects of allowing paid prioritization arrangements in a market with content provider (CP) competition. We consider competing CPs who pay prioritization fees to a monopolistic ISP so as to offset the ISP's cost for investing in infrastructure to support fast lanes. Unlike prior works, our proposed model of users' content consumption accounts for multi-purchasing (i.e., users simultaneously subscribing to more than one CP). \rev{This model allows us to account for the ``attention'' received by each CP, and consequently to draw a contrast between how subscription-revenues and ad-revenues are impacted by paid prioritization.} We show that there exist incentives for the ISP to build additional fast lanes subsidized by CPs with sufficiently high revenue (from either subscription fees or advertisements). 
We show that non-prioritized content providers need not lose users, yet may lose revenue from advertisements due to decreased attention from users. We further show that users will consume a wider variety of content in a prioritized regime, and that they can attain higher welfare provided that non-prioritized traffic is not throttled. We discuss some policy and practical implications of these findings \rev{and numerically validate them}. 
\end{abstract}

\begin{IEEEkeywords}
Paid prioritization, network economics, multi-purchasing, net neutrality, smart data pricing.
\end{IEEEkeywords}

\section{Introduction}\label{sec:intro}

Network (net) neutrality, or the principle that Internet service should be agnostic to the content of its traffic, has been a controversial part of Internet regulatory policies for several years. In the United States, much of the legal debate on net neutrality centers around the Federal Communication Commission (FCC)'s authority to impose such regulations on Internet Service Providers (ISPs). 
Until 2017, the feasibility of offering many policies, including paid prioritization, or allowing content providers (CPs) to pay ISPs in order to guarantee better quality-of-service (QoS) for their users~\cite{npr-repeal}, was widely debated. However, while net neutrality regulations remain in some parts of the world~\cite{india-netneutrality}, they expired in 2018 in the U.S., on the grounds that the FCC does not have the required authority to impose them on ISPs~\cite{npr-repeal}. \rev{The ongoing deployment of fifth-generation (5G) wireless networks, which provide differentiated QoS based on individual application needs, may reignite these debates by opening the door to payment-based QoS differentiation, a version of paid prioritization that arguably violates some interpretations of net neutrality regulations~\cite{yoo20195g}.}

\subsection{Network Neutrality Debates}

One of the key rationales outlined in the FCC's plans for their 2018 ruling was the lack of incentives by ISPs to invest in new infrastructure given existing regulations~\cite{fcc-newseum}. Indeed, most ISPs hailed the FCC ruling as paving the way for them to offer new services like paid prioritization, giving them new ways to conduct business beyond pricing users' access. In particular, several ISPs have taken the stance that content providers who are the ``big users of bandwidth'' should bear the costs of building the additional infrastructure needed to handle their increasing traffic~\cite{attpolicy,frontier-free-internet}. ISPs' added pricing power in light of the FCC's  rulings could therefore provide the incentives and capital that they require to invest in additional network infrastructure~\cite{forbes-netneutrality}.

While ISPs' stance and the FCC's policy arguments have been driven by the effect of regulations on the Internet infrastructure, allowing for paid prioritization has inevitably raised questions about its effects on content providers, and in particular, users' access to content. Most content providers have condemned the FCC ruling for allowing ISPs too much control over consumer choices; nonetheless, smaller competitors have been at the forefront of these oppositions, with larger providers being less vocal~\cite{guardian-netneutrality,netflix-big-enough}. 
For instance, the CEO of Netflix stated that, while an important issue, net neutrality was not a ``primary battle'' for them as they are ``big enough to get the deals [they] want''~\cite{netflix-big-enough}. 
Smaller CPs on the other hand, 
who cannot afford to pay for such 
prioritized service, could be put at a disadvantage if users gravitate to other content providers with better QoS. This effect would be particularly acute if ISPs drive down the QoS of non-prioritized services, effectively forcing CPs to pay for prioritization. Thus, we might expect that paid prioritization would have limited effect on larger CPs, but hurt smaller ones.

We \emph{analytically evaluate these arguments} for and against paid prioritization in a model with competing CPs, of varying size and revenue models, who enter paid prioritization agreements with an ISP. We find that (1) there indeed exist \emph{incentives for bigger CPs} (either those charging high enough subscription fees, or those making sufficient revenue from advertisements) to offset the ISP's costs in constructing additional infrastructure in order to prioritize their traffic. We show that (2) such arrangements can lead to an \emph{increase in users' welfare} \rev{not only due to improved QoS, but also due to the \emph{greater diversity in the content they consume}, since} (3) \emph{smaller CPs need not lose users}, as long as non-prioritized traffic maintains its current QoS level.  


\subsection{Our Contributions}

Several papers have examined the potential consequences of relaxing network neutrality regulations; see~\cite{greenstein2016net,schuett2010network} for surveys. 
Existing works have analyzed the effects of relaxing net neutrality regulations both in the absence of content provider competition over users  \cite{bourreau2015net,ma2011public,economides2012network}, as well as under competition between two perfectly substitutable content providers \cite{choi2015net,pil2010net,cheng2011debate,guo2017effects,kourandi2015net}. We extend this literature by contributing the following: 

{\bf A new model that captures both content provider competition as well as \emph{multi-purchasing} by users} (i.e., users subscribing to more than one of the competing providers). \rev{Such multi-purchasing is increasingly common in streaming markets; for example, a 2019 study showed that 25\% of Netflix iPhone users also use the recently introduced Disney$+$ app~\cite{vox-ott}, and 46\% of U.S. broadband users subscribe to more than one over-the-top streaming service~\cite{mobile-ott}.} 
Thus, we model users' heterogeneous intrinsic preferences for different content using a variant of the Hotelling model with multi-purchasing \cite{anderson2010hotelling,anderson2019importance}. 
We assume that users' 
decisions consist of selecting (at most) two CPs as well as their order of consumption. That is, users first choose to access a primary content provider (e.g., read the news from their preferred outlet, \rev{or browse for a movie to watch on Netflix}), and may further choose to access a secondary provider afterwards (e.g. later read headlines from a second source as well, \rev{or browse for other options on Disney+}). Our model accounts for the different utilities the users derive from a primary vs. secondary CP due to potential overlaps in their offered content.\footnote{Our analysis of users' choice of primary and secondary CPs 
is based on a first-order approximation of users' utility from multiple CPs and is adopted for tractability.} 

\rev{Intuitively, one might expect that the possibility of \linebreak multi-purchasing would blunt the impact of paid prioritization compared to past models that consider only single-purchasing by consumers: even if a CP did not have the revenue to purchase prioritized service, it might be able to maintain some of its revenue, and thus sustain its business, as consumers' secondary CP choice. It is less clear, however, how the resulting impact on CP and ISP revenue will affect ISPs' willingness to invest in infrastructure upgrades, or users' welfare. Our proposed model allows us to quantify these effects by
%
capturing} how CPs' revenues from different sources (namely, subscriptions and advertisement display) are affected by the ``attention'' they receive from their users. In particular, while all users pay subscription fees to a CP regardless of the order of access, a CP will attain lower advertisement revenue from its secondary users. This reduction in ad-revenue is due to users spending less time on secondary content, and/or potentially being exposed to the same advertisements on an earlier platform. Hence, users' click-through-rates drop as they reach later content. Our model allows us to evaluate the resulting changes in CPs' revenues.


{\bf Shedding new light on the potential positive/negative impacts of paid prioritization on CPs} (Corollaries~\ref{cor:cp-users-change} and~\ref{cor:cp-revenue}). We show that if non-prioritized traffic maintains the same QoS despite a competing CP's prioritization, then the non-prioritized CP can still attain \emph{the same} number of users. That being said, some of these users will designate the non-prioritized CP to be their secondary content of choice. Due to this lowered attention from users, non-prioritized CPs who are reliant on ad-revenue will realize less profit in a prioritized regime.  

We note that this finding is in contrast with prior works that had not accounted for users' multi-purchasing \cite{cheng2011debate,guo2017effects,pil2010net}. In particular, as users access only one of two competing CPs in these models, when one of the CPs is prioritized, its competitor will lose users (even if its QoS is not lowered); 
the competitor would then lose both advertisement and/or subscription revenue. Our analysis therefore provides new insights into how competing CPs are affected by paid prioritization depending on their revenue models and users' consumption behavior.  In particular, unlike what we might expect from CPs' stated arguments against paid prioritization, the resulting CP revenue is just as dependent on the \emph{type} of revenue as the size (in terms of revenue rates) of the CP.

\rev{These findings shed light on the potential impacts of paid prioritization on the different business models adopted by competitors in the  ``streaming wars''~\cite{the-verge-streaming-wars,ads-instead-of-fees}. Current providers' offerings include subscription based (e.g. Netflix, Disney+), ad-based (e.g. Tubi, the Roku Channel), and hybrid offerings (e.g. Peacock, Hulu, Quibi). The increasing number of (upcoming) ad-supported streaming services suggests that a switch to free ad-based or cheaper ad-supported models instead of pricey subscriptions may provide these services with a competitive advantage~\cite{ads-instead-of-fees,gold-rush,FC-majorbattle}. Our results highlight the potential effects of paid prioritization on such business models: they may make CPs more vulnerable to prioritized competitors.}

{\bf Analysis of user welfare} (Corollary~\ref{cor:user-welfare}). Lastly, we show that the described prioritization arrangements will lead to an increase in users' welfare. This is not only due to the improvement of QoS for the prioritized CPs' users, 
but also as more users will be multi-purchasing (and hence be exposed to a wider variety of content) when prioritization is allowed.  These may help to rebut CP fears that paid prioritization will stifle competition and inhibit the variety of content viewed by users. Figure~\ref{fig:results-illustration} illustrates our findings. 


\begin{figure}[t]
    \centering
    \includegraphics[width=0.8\columnwidth]{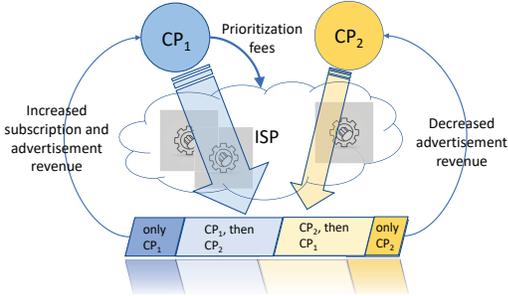}
    \caption{Changes in competing CPs' users and revenues following paid prioritization of $CP_1$ at the ISP in a 2-CP market. The top (solid) and bottom (shadowed) boxes show the users' content choices after and prior to prioritization, respectively. We show that when accounting for multi-purchasing of content, $CP_1$ gains users, while the competing $CP_2$ can maintain the same total number of users. Nonetheless, the non-prioritized $CP_2$ attains lower ad-revenue due to decreased attention from users.}
    \label{fig:results-illustration}
\end{figure}

\textbf{Policy and practical implications} (Section~\ref{sec:discussion}).  The above results heavily rely on a ``no throttling'' condition, i.e., that non-prioritized traffic is not slowed down. In particular, we show that in the absence of this assumption, non-prioritized CPs will lose users, and therefore their revenues from subscriptions will also decrease. We thus {\emph{validate CP fears that if ISPs drive down the QoS of non-prioritized traffic} (either on purpose, or due to restructuring of their resources), it could drive them out of business.} 
In addition, even though the variety of users' content consumption may still increase, depending on the extent of decrease in QoS of non-prioritized content, users' welfare may decrease as well. Regulators might then seek to impose this condition in order to ensure that paid prioritization benefits users and CPs as well as ISPs. {We elaborate on this and other implications of our findings in Section~\ref{sec:discussion}}. 



    
    
    

\textbf{Paper organization:} 
We review related work in Section~\ref{sec:related} before introducing our model in Section~\ref{sec:model}. We characterize the market equilibrium and discuss our main findings in Section~\ref{sec:analysis}. We discuss practical and policy implications of our findings, as well as potential model extensions and limitations in Section~\ref{sec:discussion}. We validate our results with numerical studies in Section~\ref{sec:simulations}, and conclude in Section~\ref{sec:conclusion}. All proofs are provided in the appendix.

\section{Related Work} \label{sec:related}

ISPs' incentives for capacity expansion have been studied with monopolistic  \cite{pil2010net,cheng2011debate,kramer2012network,ma2017paid} and competitive ISPs~\cite{bourreau2015net}. These works analyze ISPs' investment of  additional revenue from paid prioritization into expanding network capacity, and its effects on user welfare, in models with two substitutable CPs~\cite{pil2010net,cheng2011debate}, or in the absence of CP competition~\cite{kramer2012network,ma2017paid,bourreau2015net,tang2019regulating,ma2011public}. We complement this literature by considering both CP competition and users' multi-purchasing. 

Many works on paid prioritization do not consider CP competition. Some, in fact, bypass CPs by assuming that users choose their service classes depending on their CP preferences~\cite{odlyzko1999paris,zou2017optimal}. \rev{Other works instead bypass user models by assuming that CPs earn a fixed revenue per unit  capacity~\cite{nault2019balancing,tang2019regulating}.} Even models that allow CPs to pay ISPs for prioritization are generally considered in the context of non-competing CPs, e.g., if there is a continuum of CPs, and users choose their usage at each CP independently~\cite{economides2012economics,bourreau2015net,ma2011public}. 

The majority of existing work capturing CP competition~\cite{pil2010net,cheng2011debate,guo2017effects} has focused on two substitutable content providers, with competition modeled using the standard Hotelling model~\cite{hotelling1990stability} in which users have demands for one of the two CPs. An exception is the model of \cite{kourandi2015net} in which users can consume content from one or both providers. In \cite{kourandi2015net}, similar to our work, the CPs' ad-revenue depends on users' attention, with the revenue from secondary users attenuated. However, the focus of \cite{kourandi2015net} is on exclusive content distribution (hard fragmentation), with users' consuming all content available on their selected ISP. In contrast to these works, we explicitly model users' decisions to single vs. multi-purchase content using a variant of the Hotelling model. \rev{Our more refined model of user demand patterns allows us to find that a prioritized regime can incentivize users to subscribe to a greater variety of content.}

Our model is similar to the models of Hotelling with multi-purchasing adopted in \cite{anderson2010hotelling,anderson2019importance,athey2018impact}. The focus of these works is however on the pricing strategies of CPs, and on contrasting the value of single vs. multi-purchasing users to media platforms when obtaining  financing from advertisers. In contrast, we focus on paid prioritization arrangements and the ISP's investment decisions, and examine the implications for paid prioritization policies. In addition, we adopt a different model of CP distribution (see also Section~\ref{sec:extensions}). 

In terms of CPs' business models, the majority of existing work focuses on ad-revenue CPs~\cite{pil2010net,cheng2011debate,guo2017effects,kourandi2015net,bourreau2015net}. {Several of these works compare the benefits of prioritization to big vs. small CPs, with big CPs being those with higher ad-revenue rates.} 
{In contrast to these works,  given our model of single vs. multi-purchasing users,
we can distinguish between CPs' subscription and advertising revenues. This allows us to \rev{show that the type of CPs' business models, as well as CP size, determines the effects of paid prioritization.} 
} 


Finally, some models consider not paid prioritization but sponsored content, in which content providers subsidize users' costs of consuming data, but cannot influence the QoS 
users receive. Such sponsorship can lead to bias in favor of larger CPs~\cite{joe2018sponsoring}, even when the sponsorship indirectly affects user QoS through changing user demands~\cite{zhang2015sponsored}. \rev{Our work is complementary to this literature as well.}
\section{Model}\label{sec:model}

\subsection{The Stakeholders} \label{sec:stakeholders}
We consider an economy comprised of a monopolistic  Internet service provider, competitive content providers, and a continuum of users. {Throughout, we use she/her for the ISP, he/his for CPs, and it/its for users.} Table \ref{t:notation} summarizes the notation. 

{
\begin{table}[t]
\begin{small}
\centering
    \caption{Summary of notation}
    \label{t:notation}
\begin{tabular}{ | c | m{0.75\columnwidth} | } 
 \hline
 {Notation} & {Description}\\ 
 \hline
 $M$ & Number of CPs\\
 \hline
 $V$ & Users' base value from content consumption\\
 \hline
 $x$ & Users' position on the $[0,1]$ continuum \\
 \hline
 $m(x,j)$ & Distance of user at $x$ from $CP_j$ \\
 \hline
 $t$ & Transport/fit cost, determining users' intrinsic preference for content\\
 \hline
 $\theta$ & Users' residual benefit rate from secondary content\\ 
 \hline
 $F$ & Access fee charged to users by the ISP for connectivity\\ 
 \hline
 $S_j$ & Subscription fee charged to users by $CP_j$\\ 
 \hline
 $d_j$ & QoS/delay experienced by $CP_j$'s users\\ 
 \hline
 $d_0$ & Default QoS/delay for non-prioritized traffic\\
 \hline
 $p_j(d_j)$ & Prioritization price paid by $CP_j$ (per-user) to the ISP for attaining QoS/delay $d_j$\\
 \hline
 $C(d_j)$ & ISP's cost of building a fast lane with QoS/delay $d_j$\\ 
 \hline
 $z_j$ & The decision of $CP_j$ regarding prioritization, with $z_j=1$ (resp. $z_j=0$) for opting in (resp. not opting)\\
 \hline
 $n_j$ & Mass of $CP_j$'s users\\
 \hline
 $n_{jl}$ & Mass of users accessing $CP_j$ as their $l$-th content of choice ($n_{j1}$ and $n_{j2}$ for primary and secondary users, respectively)\\
 \hline
 $r_j$ & $CP_j$'s revenue rate from displaying advertisements\\ 
 \hline
 $\delta$ & Rate of reduction in ad-revenue from secondary users\\ 
 \hline
 $\lambda$ & Users' average traffic/demand rate\\ 
 \hline
 \end{tabular}
 \end{small}
\end{table}
}

\paragraph{Users} There is a continuum of users of mass 1 with heterogeneous preferences over the available content. We quantify this heterogeneity by assuming that users' content preference distribution follows (a variant of) the Hotelling model. \rev{Hotelling models are commonly adopted in the study of competitive markets in which users have heterogeneous preferences over the competing firms, including in prior work on net neutrality implications on content provider competition~\cite{guo2017effects,cheng2011debate,pil2010net}.} 

Formally, we assume users are distributed uniformly at random over the $[0,1]$ interval. Let $M$ CPs sit at equally distanced points throughout the line, with $CP_j, ~ j\in\{1, 2, \ldots, M\}$ sitting at $\frac{j-1}{M-1}$. Then, the (dis-)utility to a user at position $x$ from consuming content from $CP_j$ is proportional to the user's distance, $m(x,j):=|x-\frac{j-1}{M-1}|$ from the $CP$, and is given by $tm(x,j)$. Here, $t\geq 0$ is a unit fit/transportation cost, determining the users' sensitivity to their consumed content. 
Users multi-purchase, i.e., they may choose to consume content from more than one CP. 

In addition to these intrinsic preferences, users are sensitive to the quality-of-service experienced when consuming content from $CP_j$. {We let $d_j$ denote an implicit, monotonically increasing function of the actual delay or other QoS-related factors which may hinder users' content consumption. We do not explicitly model the mechanism through which QoS or delays are determined at the ISP (e.g. M/M/1 queues), {instead assuming the ISP has provided sufficient resources (if needed by building additional infrastructure) to serve the CP's users at a certain QoS}. We choose the notation/interpretation of delays $d_j$ to draw a parallel with existing models in the literature which have considered congestion delays as users' QoS  \cite{pil2010net,cheng2011debate,guo2017effects}.} 
Users will pay the ISP an access fee $F$ for connectivity, and further pay a subscription fee $S_j$ to their selected CP(s).

\paragraph{CPs} There are $M$ content providers\footnote{Although our model will assume users consume content from at most two CPs, we will differentiate between center and end-CPs (which can be interpreted as mainstream and specialized content, respectively). Therefore, our conclusions can be attained in a model of at least 3 CPs.}  who are competing for users' attention, and profiting from both advertisements and subscriptions. In particular, content provider $CP_j$ has a revenue rate $r_j$ \rev{per unit of user demand}, from displaying advertisements to each of his users. 
Higher rates indicate that the CP is more efficient at generating ad-revenue, due to, e.g., optimized ad-display algorithms. {CPs can further generate revenue from charging users a subscription fee $S_j\in \mathbb{R}_{\geq 0}$}. In a non-neutral Internet, CPs may be offered an option to pay the ISP for getting their content prioritized, as detailed shortly. In this paper, to understand the effects of such paid prioritization, we take subscription fees $S_j$ and ad-revenue rates $r_j$ to be fixed, and focus on CPs' prioritization decisions. 

\paragraph{ISP} We consider a monopolistic service provider. {This is motivated by a lack of competition in the US ISP market, with studies showing that just 30\% of US homes have access to more than one ISP that offers broadband speeds~\cite{pcmag-isp-competition}.} 
The ISP charges users a flat rate access fee $F\in \mathbb{R}_{\geq 0}$ for connectivity.\footnote{\rev{Note in particular that users' data consumption volume does not affect their access fees to the ISP, as for typical broadband or WiFi subscription plans. Cellular data plans may impose caps or throttling on ``heavy'' users, but these do not affect most users.
}} 
The ISP can also charge the CPs for prioritization of their content. Again, as our focus in this paper is on the effects of prioritization (and also, as we do not consider ISP competition), we take the fees $F$ to be fixed, and focus on the ISP's choice of prioritization offers as detailed below.

We assume the ISP offers a two-tiered prioritization scheme, with a ``fast lane'' and a ``default" option. Let $d_0$ denote the default QoS/delay offered by the ISP. In order to offer a fast lane with improved QoS/delay $d\leq d_0$, the ISP should invest in improving or building infrastructure. We let $nC(d)$ be the cost to the ISP for offering a fast-lane with QoS/delay $d$ which can serve $n$ users. We assume the cost function $C(\cdot)$ is decreasing, differentiable, and convex.\footnote{\rev{The infrastructure cost should account for  both the requested QoS/delay $d_j$, as well as the number of users $n_j(d_j)$ at equilibrium given $d_j$. For instance, consider reservation-based (PMP~\cite{odlyzko1999paris}) delays of the form $d_j={n_j}/{\Phi_j}$, where $\Phi_j$ is the network capacity dedicated to $CP_j$, and let $n'_j$ be $CP_j$'s users before prioritization. Then, to serve a prioritized $CP_j$, an additional network capacity of ${n_j(d_j)}/{d_j}-{n'_j}/{d_0}$ is needed.}} 
The ISP then offsets these costs through prioritization prices charged to the requesting CP(s). In particular, to provide fast-lane access with QoS/delay $d_j$ to his $n$ users, a $CP_j$ will be charged a prioritization fee of $np_j(d_j)$ by the ISP, with $p_j$ denoting a CP-specific, QoS-dependent, per-user prioritization rate.\footnote{\rev{We note that this form of fees allows the ISP to tailor her prioritization offerings and differentiate between CPs. This is akin to the current negotiations of CP-specific special peering arrangements. In our numerical simulations in Section~\ref{sec:simulations}, we show that such differentiation will ultimately improve users' welfare. This is because under uniform pricing, the ISP may choose to focus solely on extracting profit from the highest revenue CP in the market. This prevents other CPs' users from experiencing better QoS, despite the willingness of their providers to subsidize dedicated fast lanes.}}

\subsection{The Utility Functions} \label{sec:utility-functions}

\paragraph{Users} Consider a user placed at position $x$; recall that this position determines the user's intrinsic preference for each CP. We adopt a Hotelling model with multi-purchasing, similar to those proposed in \cite{anderson2010hotelling,anderson2019importance},\footnote{\rev{Our model extends the 2-CP model of \cite{anderson2010hotelling} to $M\geq 3$ CPs. The model of \cite{anderson2019importance} considers $M$ providers as well, but distributed on a Salop circle. In contrast, we consider the distribution on the unit interval to capture the spectrum of user preferences.}} to evaluate users' content consumption choices. In particular, 
let $CP_j$ and $CP_k$ be the primary and secondary content providers selected by a user; this means that the user first consumes content from $CP_j$ (e.g., reads news articles from its most preferred website), followed by content from $CP_k$ (e.g., later reads headlines from a second source). 
Then, a user at $x$ will derive a utility of 
$$U_x(j) = V- tm(x,j)-d_{j} -S_j~,$$ 
from its primary provider, $CP_j$, and a residual utility of 
$$\bar{U}_x(k) = \theta\left(V - tm(x,k)- d_{k}\right) -S_k~,$$ 
if it also accesses a secondary CP, $CP_k$.  
Here, $V$ denotes the base value of consuming any content, and $\theta<1$ determines the rate of reduction in content enjoyment beyond the users' primary choice. This is a first-order approximation; we assume the rate of reduction $\theta$ for the third most preferred content and beyond is sufficiently small, so that users ignore the utility derived from consumption of content beyond their primary and secondary CPs when making their consumption decisions.\footnote{Specifically, we can define residual benefit rates $\theta_p$, with $\theta_M<\cdots<\theta_2<1$, for a user's $p$-th CP of choice among the $M$ CPs. In this paper, similar to prior work on Hotelling models with multi-purchasing \cite{anderson2019importance}, we limit our analysis to primary and secondary CP choices for tractability.} 

The residual benefit rate $\theta$ models the assumption that users derive less base utility $V$ from providers beyond their primary content due to {spending less time on secondary content and/or the} potential overlaps with their primary CP's content. For instance, users would derive less utility from reading news headlines at a secondary CP after already reading stories on the same topics at a primary CP. Similarly, they might derive less utility from watching videos on a secondary CP if they have already watched videos at their primary CP. Knowing this, the users are also less sensitive to their intrinsic preference $m(x,k)$ and the service quality  $d_k$ for secondary content. Therefore, all three effects are attenuated for the users' secondary CP.  
Note also that the rate of reduction  does not affect the subscription fees, as these are sunk costs paid by the users regardless of the order in which they access content. 

The total utility of a user at $x$, who pays access fee $F$ to the ISP, from content consumption from only $CP_j$, or from a pair of primary and secondary providers $CP_j$ and $CP_k$, will be given by
\begin{align}
U_x(\{j\}) &= U_x(j) -F~,\notag\\
U_x(\{j,k\}) &= U_x(j) + \bar{U}_x(k) -F~,
\label{eq:user-utility}
\end{align}
respectively. {Under these utilities users are homogeneous in their base valuation, delay sensitivity, and rate of consumption of content from each CP, but differ in their interest in CPs' content}. \rev{We note that (standard) Hotelling models have also been used in prior work \cite{pil2010net,cheng2011debate,guo2017effects} to capture users' intrinsic preferences over two competing content providers, with users' utilities given by $U_x(\{j\})$ (i.e., under single-purchasing only). Our proposed model will match these prior models when setting $\theta=0$; we consider this special case in Section~\ref{sec:disc-MCPs}.} In the following section, we will evaluate users' optimal choice of content consumption under the more general multi-purchasing model (i.e., choice of content providers, as well as their order of consumption).

\paragraph{CPs} Let $n_{jl}$ be the mass of $CP_j$ users that access his content as their $l$-th content of choice, and $n_j =n_{j1}+n_{j2}$ be the mass of all $CP_j$ users. We refer to $n_{j1}$ and $n_{j2}$ as $CP_j$'s primary and secondary users, respectively. 
Let $z_{j}\in \{0,1\}$ denote whether $CP_j$ has opted for prioritization, and $d_j$ be his users' QoS/delay. Then, $CP_j$'s revenue is given by
\begin{align}
R_j(d_j) =  n_jS_j + \lambda (r_j(n_{j1}+\delta n_{j2}) - z_{j}n_{j}p_j(d_j))~.
\label{eq:cp-revenue}
\end{align}
This revenue consists of profit from subscription fees, advertising revenue from primary and secondary users, and paid prioritization fees (if any). Here, $\lambda$ is users' average traffic/demand rate, and $\delta\leq 1$ is a rate of reduction in ad-revenue from users who choose to access this CP as their secondary source. \rev{For example, users may spend less time on their secondary CP, and thus see and click on fewer ads; or the secondary CP may experience a lower click-through rate on the ads he displays, due to users' having already seen these ads on their primary CP. We abstract from modeling these specific effects, instead encapsulating them in the single parameter $\delta$.}


\paragraph{ISP} 
The ISP is a profit maximizer, with profit given by
\begin{align}
    \Pi = F + \lambda \sum_{j=1}^M z_{j} n_{j}(p_j(d_j) - C(d_{j}))~.
    \label{eq:isp-profit}
\end{align}
This profit consists of users' access fees, CPs' prioritization fees, and the costs of building additional infrastructure. 

\subsection{Market Equilibrium}\label{sec:eq-steps}
Throughout the next sections, we study the effects of allowing paid prioritization options on the market equilibrium. This equilibrium consists of the users' content choices, their experienced QoS, the CPs' choices regarding prioritization, and the prioritization fees set by the ISP. Our analysis is based on the following order of decisions:
\begin{itemize}
\item The ISP offers prioritization options to the CPs. This consists of an improved QoS and the prioritization price that would be charged for it. 
\item CPs choose whether to opt for prioritization. 
\item Users decide their content consumption (i.e., choice of CPs and their order of consumption) based on their expected QoS given CPs' choices of prioritization. 
\end{itemize}
To analyze the above market equilibrium, we use the standard approach of backward induction to find the subgame perfect  equilibrium of the resulting Stackelberg game. 
\section{Prioritization with Competing CPs}\label{sec:analysis}

In this section, we first present our main theorem on the characterization of the market equilibrium. 
We then discuss the implications of this theorem on the potential effects of paid prioritization on different stakeholders through a number of corollaries. 

\subsection{Market Equilibrium with  Competing CPs}
In evaluating the market equilibrium, we make the following assumptions on the problem parameters. 

\begin{assumption}\label{as:parameters}
We assume that the basic problem parameters $\{V,t,d_0,\theta,\{S_j\},F\}$ satisfy the following conditions:
\begin{enumerate}
    \item {[High base value from content]:} $V>d_{0}+\frac{t}{M-1}+F+\max_j S_j$~.
    \item {[Moderate residual benefit from secondary content]:} \linebreak $\frac{\max_j S_j}{V-d_0-\frac12(\max_j S_j +\frac{t}{M-1})} < \theta < \frac{\min_j S_j}{V- \frac{t}{M-1}}$.
\end{enumerate}
\end{assumption}
Intuitively, this assumption can be interpreted as follows. \rev{The first condition is similar to the common ``full coverage'' assumption adopted also in prior work \cite{guo2017effects,cheng2011debate,pil2010net}}.  Specifically, under part (1), any user can get a positive utility from consuming content from at least two of the CPs (regardless of QoS and fees). Therefore, any user in the population will be consuming at least one content, and will have at least two CP options to choose from. 
In addition, by assuming part (2), we ensure that the residual benefit from secondary content, $\theta$, is neither too low to preclude multi-purchasing (i.e., some users do consume multiple content), but not too high to make multi-purchasing trivial (i.e., there will exist some users who subscribe to only one content). \rev{This assumption is new compared to prior work, and is due to our introduction of potential dual-purchasing behavior by users. We detail the derivation of each of the above conditions in Appendix~\ref{app:assumption1}.} 

We would like to emphasize that the conditions set in  Assumption \ref{as:parameters} make our comparisons of the effects of introducing prioritization options broader. In particular, under this assumption, all potential single-purchasing and dual-purchasing content consumption patterns will have a non-zero mass of users. This will allow us to evaluate the effects of prioritization on the changes in each possible class of users.\footnote{\rev{We discuss some relaxations of these conditions in Appendix~\ref{app:assumption1-relaxed},
and show that our qualitative results continue to hold as long as some level of dual-purchasing exists at the non-prioritized outcome. We analyze the extreme case of $\theta=0$ (i.e. single-purchasing) in more detail in  Section~\ref{sec:disc-MCPs}.}} 

The following theorem characterizes the (unique) paid prioritization agreement offered by the ISP (including the QoS offered in the fast lane, and the price charged to a prioritized CP). It also determines the mass and composition of each CP's users. Note that the theorem differentiates between the two end CPs ($CP_1$ and $CP_M$) and other CPs, as the end CPs are only competing with one other content provider, while the middle CPs have two direct competitors. 

\begin{theorem}\label{thm:MCPs}
Assume the conditions in Assumption \ref{as:parameters} are satisfied. Let $\tau_j(d_j):=\frac{1}{t}\left(V-d_j-\frac{S_j}{\theta}\right)$. Then, 
\begin{enumerate}
\item A $CP_j$, $j\in[2,M-1]$, with QoS/delay $d_j$ will have a total of $n_j=2\tau_j(d_j)$ users, with $n_{1j}=\frac{1}{M-1}+\frac{d_{j+1}+d_{j-1}-2d_j}{2t}$ of them being primary users, and the remainder as secondary users. 
\item A $CP_j$, $j\in\{1,M\}$, with QoS/delay $d_j$ will have a total of $n_j=\tau_j(d_j)$ users, with $n_{1j}=\frac{1}{2}(\frac{1}{M-1}+\frac{d_{j+(-1)^{\scriptsize{\mathbbm{1}\{j=M\}}}}-d_j}{t})$ of them being primary users, and the remainder as secondary users. 
\end{enumerate}
Further, if $CP_j$ has a paid prioritization contract with the ISP,
\begin{enumerate}
\setcounter{enumi}{2}
    \item The QoS offered to $CP_j$'s users in the fast lane will be determined by
    \begin{align*}
        d^*_j :=&   \arg\min_{d_{j}\in(0, d_0]} ~
         \frac{S_j+\frac{1}{2}(1+\delta)\lambda r_j}{t}d_j + \lambda\tau_j(d_j)C(d_j).
    \end{align*}
    \item $CP_j$ will be charged a prioritization fee of
    \begin{align*}
    p_j(d_j^*) = \frac{S_j+\frac{1}{2}(1+\delta)\lambda r_j}{\lambda t}\cdot \frac{d_0-d_j^*}{\tau_{j}(d_j^*)}.
    \end{align*}
\end{enumerate}
\end{theorem}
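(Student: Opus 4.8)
The plan is to solve the Stackelberg game by backward induction, starting from the users' content-consumption decisions, then the CPs' prioritization choices, and finally the ISP's optimal offer. The backbone of the whole argument is a clean characterization of the user subgame as a function of the posted delays $\{d_j\}$; everything downstream is an optimization built on top of the resulting demand expressions.

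First I would solve the \emph{user subgame}. A user at position $x$ picks a primary CP $j$ and possibly a secondary CP $k$ to maximize $U_x(\{j\})$ or $U_x(\{j,k\})$ from \eqref{eq:user-utility}. Since the primary-choice utility $V-tm(x,j)-d_j-S_j$ and the residual utility $\theta(V-tm(x,k)-d_k)-S_k$ are both piecewise-linear in $x$, the optimal primary CP for a user is (generically) the nearest CP after correcting for the delay/fee terms, and the decision to add a secondary CP reduces to a sign condition on $\bar U_x(k)$. Under Assumption~\ref{as:parameters}(1) every user consumes at least one CP, and under part~(2) there is a nonzero mass of both single- and dual-purchasers. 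The key computation is to locate the indifference thresholds on $[0,1]$: (i) the boundary between two adjacent CPs as primary choices, and (ii) the boundary where a user stops being willing to add a given CP as secondary. A user adds $CP_j$ as secondary exactly when $\bar U_x(j)\ge 0$, i.e. when $m(x,j)\le \tfrac{1}{t}\!\left(V-d_j-\tfrac{S_j}{\theta}\right)=\tau_j(d_j)$. This is precisely why $\tau_j$ appears: it is the \emph{secondary-adoption radius} around $CP_j$. Counting the mass of users within distance $\tau_j$ of $CP_j$ (two-sided for an interior CP, one-sided for an end CP, accounting for the primary/secondary split via the adjacent-CP indifference points) yields the stated $n_j$ and $n_{j1}$ in parts~(1)--(2). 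I would verify the adjacent-CP boundary sits at the midpoint shifted by $\tfrac{d_{j\pm1}-d_j}{2t}$, which produces the $\tfrac{1}{M-1}+\tfrac{d_{j+1}+d_{j-1}-2d_j}{2t}$ term.

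Next, given these demands, I would set up the \emph{ISP's pricing problem} for a prioritized $CP_j$. The ISP is a Stackelberg leader and extracts all surplus the prioritization generates: she sets $p_j(d_j)$ so that $CP_j$ is just indifferent between opting in and staying at the default delay $d_0$. Writing $CP_j$'s revenue \eqref{eq:cp-revenue} at delay $d_j$ versus $d_0$ and imposing the participation constraint $R_j(d_j)-n_jp_j\lambda = R_j(d_0)$ gives $p_j$ as the per-user surplus from the QoS improvement. Substituting the demand expressions, the gains in subscription revenue and ad-revenue both scale with $\tau_j(d_j)-\tau_j(d_0)=\tfrac{d_0-d_j}{t}$ (note $\tau_j$ is affine in $d_j$ with slope $-1/t$), so the extractable surplus takes the product form $\tfrac{S_j+\frac12(1+\delta)\lambda r_j}{\lambda t}(d_0-d_j)$, and dividing by $\tau_j(d_j^*)$ to convert a total to a per-user rate yields the stated $p_j(d_j^*)$ in part~(4). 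The coefficient $\tfrac12(1+\delta)$ is the crux here: it is the average ad-revenue weight across the primary/secondary split, since prioritizing $CP_j$ shifts the composition of its users, and I would track the $n_{j1}+\delta n_{j2}$ term carefully to confirm this constant.

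Then I would plug $p_j$ back into the ISP's profit \eqref{eq:isp-profit} to obtain the \emph{ISP's QoS choice}. Her net gain from prioritizing $CP_j$ is $\lambda n_j(p_j-C(d_j))$; using the $p_j$ just derived and $n_j=2\tau_j$ (or $\tau_j$ for an end CP), maximizing profit over $d_j$ is equivalent to minimizing the objective $\tfrac{S_j+\frac12(1+\delta)\lambda r_j}{t}d_j+\lambda\tau_j(d_j)C(d_j)$ appearing in part~(3); I expect the $d_0$-dependent terms to be additive constants that drop out of the $\arg\min$. Existence of a minimizer on the compact interval $(0,d_0]$ follows from continuity, and convexity of $C$ (assumed in the model) plus affine $\tau_j$ makes the objective well-behaved, giving a clean first-order characterization.

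The main obstacle I anticipate is the \textbf{user-subgame case analysis under multi-purchasing}: establishing that, under Assumption~\ref{as:parameters}, the optimal primary choice is genuinely the nearest CP (so that the secondary-adoption regions around distinct CPs do not interfere in a way that breaks the simple radius count), and that changing $d_j$ does \emph{not} cause a prioritized CP to steal primary users from non-neighbors or create second-order reshuffling. This is exactly the delicate step the paper flags as a ``first-order approximation.'' Everything after the demand characterization is a pair of well-posed scalar optimizations, so I would invest most of the effort in pinning down the indifference boundaries and verifying that the two-sided/one-sided distinction for interior versus end CPs is the only structural difference — which is what produces the $\tau_j$ versus $2\tau_j$ dichotomy between parts~(1) and~(2).
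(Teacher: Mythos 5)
Your proposal is correct and takes essentially the same route as the paper's own proof: backward induction with the four-region user-subgame characterization (the secondary-adoption radius $\tau_j$ giving the total-demand boundary, and the delay-shifted midpoint $\frac{d_{j\pm1}-d_j}{2t}$ giving the primary/secondary split), a binding participation constraint $R_j(d_j)-\lambda n_j p_j = R_j(d_0)$ under which the monopolist ISP extracts the full surplus $2\frac{d_0-d_j}{t}\left(S_j+\frac12(1+\delta)\lambda r_j\right)$ (halved for end CPs, the factor 2 cancelling against $n_j=2\tau_j$ in the per-user price), and substitution of the binding price into the ISP's profit so that part (3) reduces to the stated scalar minimization after the $d_0$-constants drop out. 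The delicate step you flag --- verifying that the consumption regions do not interfere, so that users beyond distance $\tau_j$ also abandon $CP_j$ as a primary choice --- is exactly what the paper dispatches via the exhaustive case analysis of Lemma~\ref{lemma:content-choices-users} and the derivation of Assumption~\ref{as:parameters} (which pins the parameters to case (5-b)), so nothing essential is missing from your outline.
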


\paragraph*{Proof} See Appendix~\ref{app:thm1-proof-inpaper}. \hfill\qedsymbol



\rev{
\begin{remark}
At the heart of our analysis and findings is the term $\tau_{j}(d_j)=\frac{1}{t}\left(V-d_j-\frac{S_j}{\theta}\right)$. Intuitively, any user who is at distance of $\tau_{j}(d_j)$ or less from $CP_j$ (as measured in the Hotelling model) will derive benefit from $CP_j$'s content as a secondary user. This threshold is decreasing in $d_j$ and $S_j$, indicating that users are less interested in a content when it has lower QoS or higher subscription fee. The threshold is further increasing in $\theta$, meaning that as the residual benefit from secondary content increases, more users adopt it. 

Our findings in Parts (1) and (2) of Theorem \ref{thm:MCPs} indicate that this threshold can in fact also determine the total number of users that a CP can attract when competing with each of his neighboring CPs. This is because we find that once users lose interest in a $CP_j$ as their secondary provider, their distance from $CP_j$ is ``far'', in the sense that they would also prefer a competing $CP_{j\pm1}$ to $CP_{j}$ as their primary provider. In other words, we show that the CPs have a better chance of gaining the users who are less intrinsically interested in them as secondary users, as such users could still continue primarily consuming their most preferred content. 
\end{remark}

}

\subsection{Interpretation and Implications}\label{sec:disc-MCPs}

We now discuss the implications of Theorem~\ref{thm:MCPs} through a number of corollaries; all proofs are given in the appendix.

\subsubsection{Changes in CPs' users}
We begin by evaluating the change in users' content consumption. The following lemma establishes users' distribution before prioritization. 

\begin{lemma}\label{lemma:cp-users}
In a non-prioritized regime, each $CP_j$, $j\in[2,M-1]$, has $n_j=2\tau_j(d_0)$ users, with $n_{1j}=\frac{1}{M-1}$ of them being primary users, and the remainder as secondary users. For an end $CP_j$, $j\in\{1,M\}$,  $n_j=\tau_j(d_0)$ and $n_{1j}=\frac12\frac{1}{M-1}$.  
\end{lemma}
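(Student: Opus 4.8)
The plan is to obtain Lemma~\ref{lemma:cp-users} as a direct specialization of Theorem~\ref{thm:MCPs}. The key observation is that the non-prioritized regime is exactly the configuration in which no CP has purchased a fast lane, so every CP's traffic is served at the default QoS; that is, $d_j=d_0$ for all $j\in\{1,\dots,M\}$. Since Parts (1) and (2) of Theorem~\ref{thm:MCPs} characterize the outcome of the users' content-consumption subgame for an \emph{arbitrary} profile of delays $\{d_j\}$ (Parts (3)--(4) enter only once the ISP's and CPs' prioritization choices are also optimized), I would simply evaluate those two parts at the uniform profile $d_{j-1}=d_j=d_{j+1}=d_0$. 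No new game-theoretic argument is then needed.

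Carrying out the substitution: for a middle CP $j\in[2,M-1]$, plugging into $n_{1j}=\frac{1}{M-1}+\frac{d_{j+1}+d_{j-1}-2d_j}{2t}$ makes the correction term vanish, leaving $n_{1j}=\frac{1}{M-1}$, while $n_j=2\tau_j(d_j)$ becomes $n_j=2\tau_j(d_0)$. For an end CP $j\in\{1,M\}$, the term $\frac{d_{j\pm1}-d_j}{t}$ likewise vanishes, so $n_{1j}=\frac12\frac{1}{M-1}$ and $n_j=\tau_j(d_0)$. In each case the secondary-user mass follows immediately as the remainder $n_j-n_{1j}$, reproducing precisely the counts stated in the lemma.

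It is worth recording the geometric intuition behind why the primary masses collapse so cleanly. When all delays are equal, a user's primary (first-consumed) CP is the one minimizing the effective distance $tm(x,j)+d_j=tm(x,j)+d_0$, i.e.\ the nearest CP; crucially, the subscription fee $S_j$ does not break this tie, since it is a sunk cost paid regardless of consumption order. Hence the primary-market boundaries lie exactly at the midpoints between adjacent CPs, giving each interior CP a primary interval of width $\frac{1}{M-1}$ and each end CP half of that, while the secondary users are those lying within the threshold distance $\tau_j(d_0)$ of $CP_j$ (per the Remark following Theorem~\ref{thm:MCPs}). The only genuine difficulty is therefore upstream, inside the proof of Theorem~\ref{thm:MCPs} itself: one must verify that the threshold $\tau_j$ simultaneously governs both the secondary-adoption margin and the primary-competition margin, and that Assumption~\ref{as:parameters} keeps every consumption pattern at positive mass so that all these boundaries are interior. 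Granting Theorem~\ref{thm:MCPs}, Lemma~\ref{lemma:cp-users} reduces to the one-line substitution $d_j=d_0$ above.
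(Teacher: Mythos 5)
Your proposal is correct and matches the paper's own argument exactly: the paper proves the lemma in one line by substituting $d_j=d_0$ for all $j$ into Parts (1) and (2) of Theorem~\ref{thm:MCPs}, which is precisely your computation. Your added geometric intuition (primary boundaries at midpoints since fees are sunk costs, with Assumption~\ref{as:parameters} keeping all boundaries interior) is sound but not needed beyond the substitution.
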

 
This lemma follows directly from Theorem \ref{thm:MCPs} when substituting $d_j=d_0, \forall j$, in Parts (1) and (2). Note that without prioritization, all (non-end) CPs attain the same number of primary users, yet  those with smaller subscription fee $S_j$ (hence, larger $\tau_j$) attract more secondary users. 

The next corollary describes the changes in the users of a CP, as well as his competitors, following prioritization. 

\begin{corollary}\label{cor:cp-users-change}
Assume $CP_j$, $j\in[2,M-1]$, has a prioritization agreement with improved QoS/delay $d_j^*<d_0$, while his competitors maintain the default QoS/delay of $d_0$. Then,  
\begin{enumerate}
    \item The prioritized $CP_j$ gains $2\frac{d_0-d_j^*}{t}$ additional users. Half are additional primary users, while the other half are additional secondary users. 
    \item The competitors $CP_k,~k\in\{j-1,j+1\}$ will have the same total number of users. However, $\frac12\frac{d_0-d_j^*}{t}$ of $CP_k$'s primary users will switch their consumption pattern, relegating $CP_k$ to their secondary content of choice.  
\end{enumerate}
For an end $CP_j$, $j\in\{1,M\}$, the corollary holds by dropping the factor 2 in part (1).
\end{corollary}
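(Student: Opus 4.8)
The plan is to derive the corollary directly from Theorem~\ref{thm:MCPs} (the prioritized profile) and Lemma~\ref{lemma:cp-users} (the baseline), by plugging in the QoS profile dictated by the hypothesis and subtracting. Under the stated assumption the profile is $d_j = d_j^*$ for the prioritized CP and $d_k = d_0$ for every $k \neq j$. The single structural observation driving everything is that in Theorem~\ref{thm:MCPs} the \emph{total} user mass of a CP, $n_j = c\,\tau_j(d_j)$ with $c\in\{1,2\}$ according to whether it is an end or middle CP, depends only on that CP's own QoS, whereas the primary mass $n_{1j}$ depends additionally on its neighbors' QoS. Since $\tau_j(d)=\frac{1}{t}(V-d-\frac{S_j}{\theta})$ is affine in $d$ with slope $-\frac{1}{t}$, every difference collapses to $\tau_j(d_j^*)-\tau_j(d_0)=\frac{d_0-d_j^*}{t}$.

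First I would establish Part~(1) for a middle $CP_j$. The total change is $2\tau_j(d_j^*)-2\tau_j(d_0)=\frac{2(d_0-d_j^*)}{t}$, which is the claimed gain. For the primary mass I evaluate the Part-(1) formula with both neighbors still at $d_0$: the term $\frac{d_{j+1}+d_{j-1}-2d_j}{2t}$ becomes $\frac{2(d_0-d_j^*)}{2t}=\frac{d_0-d_j^*}{t}$, so relative to the baseline value $\frac{1}{M-1}$ from Lemma~\ref{lemma:cp-users} the primary mass rises by $\frac{d_0-d_j^*}{t}$. Subtracting this primary change from the total change leaves an identical secondary increase $\frac{d_0-d_j^*}{t}$, so the gain splits evenly, as asserted. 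The end-CP case of Part~(1) is the same computation with the factor $2$ dropped, since $n_j=\tau_j(d_j)$ there; the single-neighbor primary formula then contributes $\frac{d_0-d_j^*}{2t}$, again exactly half of the total gain.

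Next I would treat Part~(2), the competitors $CP_k$, $k\in\{j-1,j+1\}$. Their own QoS stays at $d_0$, so $\tau_k(d_0)$ and hence their total user mass are unchanged regardless of the neighbor's prioritization. For the primary mass, only the prioritized neighbor's QoS has moved (from $d_0$ to $d_j^*$), and it enters the primary-mass formula with coefficient $\frac{1}{2t}$; thus $n_{1k}$ changes by $\frac{1}{2t}(d_j^*-d_0)=-\frac{d_0-d_j^*}{2t}$, i.e.\ $CP_k$ loses $\frac{d_0-d_j^*}{2t}$ primary users. Because the total is fixed, precisely these users are relegated to secondary status, which is the claim. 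The one point requiring care is that a prioritized \emph{middle} $CP_j$ may be adjacent to an \emph{end} CP (when $j\in\{2,M-1\}$), so I would verify that the coefficient $\frac{1}{2t}$ of the moved neighbor also holds under Theorem~\ref{thm:MCPs} Part~(2), checking the neighbor-index bookkeeping encoded by $(-1)^{\mathbbm{1}\{k=M\}}$. This index accounting across the middle/end cases is the only genuine obstacle; the arithmetic itself is immediate from the affine form of $\tau_j$ and the linear appearance of the neighbors' delays in the primary-mass formulas.
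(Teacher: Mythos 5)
Your proof is correct and takes essentially the same route as the paper's: both derive the corollary by evaluating Theorem~\ref{thm:MCPs} at the profile $d_j=d_j^*$, $d_k=d_0$ for $k\neq j$, subtracting the non-prioritized baseline of Lemma~\ref{lemma:cp-users}, and exploiting that $\tau_j$ is affine in $d_j$ with slope $-\frac{1}{t}$ while neighbors' delays enter the primary-user formulas with coefficient $\frac{1}{2t}$. If anything, you are slightly more explicit than the paper, whose written proof leaves Part~(2) and the middle/end neighbor bookkeeping as a ``similar application'' of the theorem.
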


\textbf{Interpretation:} By opting for prioritization, a $CP_j$ can gain both primary and secondary users. The gain in primary users comes from the users who had dual-purchased $CP_j$'s content as their secondary content of choice prior to prioritization. Given the improved QoS, these users now re-order their consumption, choosing $CP_j$ as their primary source. The remaining gain in users comes from those who were single-purchasing from one of $CP_j$'s competitors before prioritization. Given the increase in $CP_j$'s quality, these users will now have sufficient incentive to subscribe to $CP_j$ as well, consuming his content as a secondary source. {Lemma~\ref{lemma:cp-users} and Corollary~\ref{cor:cp-users-change} are illustrated in Figure~\ref{fig:users-consumption-change}.} 

{\begin{remark}\label{r:non-exclusive}
If one of the competitors of $CP_j$, say $CP_{j+1}$, is also prioritized to $d_{j+1}^*<d_0$, then $CP_j$ will still gain $2\frac{d_0-d_j^*}{t}$ users. However, only $\frac12(\frac{d_0-d_j^*}{t}+\max\{\frac{d_{j+1}^*-d_j^*}{t},0\})$ of them will be additional primary users (who might re-order their content consumption to prioritize $CP_j$), depending on which CP has attained a better prioritized QoS. This means that ISPs' differentiation of CPs when offering fast lanes will not affect their ability to gain users (given the no-throttling assumption), but  affects the additional attention they can get. 
\end{remark}}

\textbf{Takeaway:} A prioritized CP gains new users, and can further get additional attention from his existing users depending on his competitors' QoS. Non-prioritized CPs do not lose users, but lose the attention of some of their existing users.

\begin{figure}[t]
\begin{center}
\begin{tikzpicture}[scale=0.7]

\draw [,-] (0,0) -- (10,0);
\draw [,-] (0,0.2) -- (0,-0.2);
\draw [,-] (10,0.2) -- (10,-0.2);
\draw [,-] (5.3,0.1) -- (5.3,-0.1);
\draw [,-] (1.6,0.1) -- (1.6,-0.1);
\draw [,-] (8.6,0.1) -- (8.6,-0.1);
\draw [,-] (5,0.1) -- (5,-0.1);
\draw [,-] (8,0.1) -- (8,-0.1);

\draw[fill=blue!30] (0,0.2) rectangle (1.6,0.8)  node[align=center] at (0.8,0.5) {$\{CP_j\}$};
\draw[fill=blue!10] (1.6,0.2) rectangle (5,0.8) node[align=center] at (3.3,0.5) {$\{CP_j,CP_{j+1}\}$};
\draw[fill=yellow!20] (5,0.2) rectangle (8,0.8) node[align=center] at (6.5,0.5) {$\{CP_{j+1},CP_j\}$};
\draw[fill=yellow!50] (8,0.2) rectangle (10,0.8) node[align=center] at (9,0.5) {$\{CP_{j+1}\}$};

\draw[fill=blue!30] (0,-0.2) rectangle (1.6,-0.8) node[align=center] at (0.8,-0.5) {$\{CP_j\}$};
\draw[fill=blue!10] (1.6,-0.2) rectangle (5.3,-0.8) node[align=center] at (3.45,-0.5) {$\{CP_j,CP_{j+1}\}$};
\draw[fill=yellow!20] (5.3,-0.2) rectangle (8.6,-0.8) node[align=center] at (6.95,-0.5) {$\{CP_{j+1},CP_j\}$};
\draw[fill=yellow!50] (8.6,-0.2) rectangle (10,-0.8) node[align=center] at (9.3,-0.5) {\tiny{$\{CP_{j+1}\}$}};

\node[align=center] at (-0.8,.5) {Before:};
\node[align=center] at (-0.8,-.5) {After:};

\end{tikzpicture}
\end{center}
    \caption{Changes in users' content consumption between $CP_j$ and $CP_{j+1}$ when $CP_j$ is prioritized. The top/bottom boxes represent the mass of users before/after prioritization, and $\{CP_i, CP_k\}$ indicates the mass of users with primary $CP_i$ and secondary $CP_k$. Note that $CP_{j+1}$ maintains the same total number of users, but has fewer primary users.} 
    \label{fig:users-consumption-change}
\end{figure}
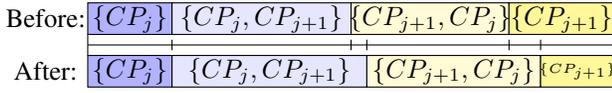

\subsubsection{Change in CPs' revenues} We now evaluate how the CPs' revenues are affected by prioritization decisions. 

\begin{corollary}\label{cor:cp-revenue}
Assume $CP_j$ establishes a prioritization contract with improved QoS/delay $d_j^*< d_0$, while his competitors maintain the default QoS/delay of $d_0$. Then, 
\begin{enumerate}
    \item The prioritized $CP_j$'s revenue is equal to the revenue prior to prioritization.
    \item The revenue of a (non-prioritized) competitor $CP_k$ ($k=j\pm1$) will decrease by $\frac{\lambda r_k (1-\delta)}{2t}(d_0-d_j^*)$. 
\end{enumerate}
\end{corollary}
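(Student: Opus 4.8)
The plan is to evaluate the revenue expression \eqref{eq:cp-revenue} for each CP at the pre-prioritization configuration (all delays equal to $d_0$) and at the post-prioritization configuration (delay $d_j^*$ for the prioritized $CP_j$, delay $d_0$ for every competitor), and then subtract. All the user masses needed are already in hand: Lemma~\ref{lemma:cp-users} supplies the pre-prioritization masses, while Theorem~\ref{thm:MCPs} (evaluated at $d_{j\pm1}=d_0$) and Corollary~\ref{cor:cp-users-change} supply the post-prioritization masses. The only additional ingredient is the prioritization fee $p_j(d_j^*)$ from Part~(4) of Theorem~\ref{thm:MCPs}, which enters \eqref{eq:cp-revenue} through the term $-\lambda z_j n_j p_j(d_j^*)$ for the prioritized CP alone.

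For Part~(1), I would decompose the change in $R_j$ into three pieces. Since $CP_j$ gains $2\frac{d_0-d_j^*}{t}$ total users (Corollary~\ref{cor:cp-users-change}), each paying $S_j$, its subscription revenue rises by $2S_j\frac{d_0-d_j^*}{t}$. Next, using that the gain splits evenly into $\frac{d_0-d_j^*}{t}$ new primary users (weight $1$) and $\frac{d_0-d_j^*}{t}$ new secondary users (weight $\delta$), together with $\tau_j(d_j^*)-\tau_j(d_0)=\frac{d_0-d_j^*}{t}$, the advertising revenue rises by $\lambda r_j(1+\delta)\frac{d_0-d_j^*}{t}$. Finally, the newly incurred prioritization cost is $\lambda n_j p_j(d_j^*)=\lambda\cdot 2\tau_j(d_j^*)\cdot p_j(d_j^*)$, in which the factor $\tau_j(d_j^*)$ cancels against the denominator of $p_j(d_j^*)$, leaving exactly $\frac{d_0-d_j^*}{t}\bigl(2S_j+(1+\delta)\lambda r_j\bigr)$. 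The gross gain and the fee coincide, so $R_j$ is unchanged. Conceptually this is no accident: the fee in Theorem~\ref{thm:MCPs} was constructed precisely so that the ISP, as Stackelberg leader, extracts exactly the surplus prioritization generates for $CP_j$, leaving his participation constraint binding.

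For Part~(2), the key observation from Corollary~\ref{cor:cp-users-change}(2) is that a non-prioritized competitor $CP_k$, $k=j\pm1$, keeps its delay at $d_0$, hence its total mass $n_k=2\tau_k(d_0)$ (and $\tau_k(d_0)$) unchanged; therefore its subscription revenue $n_k S_k$ is unaffected. The only change is the reclassification of $\frac12\frac{d_0-d_j^*}{t}$ of $CP_k$'s primary users into secondary users. In \eqref{eq:cp-revenue} a primary user contributes $r_k$ to the advertising term while a secondary user contributes only $\delta r_k$, so each reclassified user lowers the effective advertising mass $n_{k1}+\delta n_{k2}$ by $(1-\delta)$. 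Multiplying by $\lambda r_k$ and by the reclassified mass $\frac12\frac{d_0-d_j^*}{t}$ yields the claimed decrease $\frac{\lambda r_k(1-\delta)}{2t}(d_0-d_j^*)$.

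Finally, for the end-CP cases ($j\in\{1,M\}$ or $k\in\{1,M\}$) I would repeat the same two computations with the end-CP masses from Part~(2) of Theorem~\ref{thm:MCPs} and Lemma~\ref{lemma:cp-users}; the only effect in Part~(1) is the disappearance of a factor of $2$ (a prioritized end CP gains $\frac{d_0-d_j^*}{t}$ rather than $2\frac{d_0-d_j^*}{t}$ users), and the cancellation persists for the same reason, while the competitor's decrease in Part~(2) is unchanged since $CP_k$ still loses $\frac12\frac{d_0-d_j^*}{t}$ primary users. I expect the main obstacle to be purely bookkeeping: correctly tracking which users are counted as primary versus secondary before and after prioritization (in particular the re-ordering of former secondary users of $CP_j$ into primary users), since it is this weighting by $1$ versus $\delta$, rather than any change in total masses, that drives both parts.
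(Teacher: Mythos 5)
Your proposal is correct and takes essentially the same route as the paper: your part (2) is the paper's own argument (only the $(1-\delta)$-weighted primary-user term of $CP_k$'s revenue depends on $d_j$, with the reclassified mass $\tfrac{1}{2}\tfrac{d_0-d_j^*}{t}$ from Corollary~\ref{cor:cp-users-change}), and your explicit cancellation in part (1) simply unpacks the algebra behind the price constraint \eqref{eq:cp-ir}, which the paper invokes more briefly by observing that the profit-maximizing ISP sets $p_j(d_j^*)$ to bind the participation constraint $R_j(d_j^*) = R_j(d_0)$ --- the conceptual point you yourself note. Your end-CP bookkeeping (dropping the factor of $2$ in part (1), unchanged decrease in part (2)) also checks out.
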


\textbf{Interpretation:} First, we note that as the ISP is monopolistic, she  can choose to extract the entire surplus gained by a prioritized CP as a prioritization price, while still making the CP (weakly) incentivized to opt for prioritization. The loss in the competitor CP's revenue on the other hand is due to the reduction in primary users (as shown in Corollary \ref{cor:cp-users-change}). That is, although the total number of the competing CP's users does not change (hence, no change in subscription revenue), some users re-order their consumption to first access the prioritized CP, and therefore, reduce the competitor's ad-revenue. 

From Corollary \ref{cor:cp-revenue}, we observe that prioritization of a competitor will be most damaging to the revenue of content providers that are reliant on advertising. In particular, a CP making profit solely from  subscriptions would be indifferent to his competitors' prioritization, as, by Corollary \ref{cor:cp-users-change}, the total number of the CPs' users will remain unchanged. This conclusion is however dependent on our assumption that 
prioritization does not affect the QoS of non-prioritized content, as discussed in the following remark. 

\begin{remark}\label{r:throttling}
{By Theorem \ref{thm:MCPs}, the total number of a CP's users is given by $n_j=\tau_j(d_j)=\frac{1}{t}(V-d_j-\frac{S_j}{\theta})$. Hence, if $d_j$ is (purposefully) degraded, the CP will lose users. Further, we have assumed that separate fast lanes are constructed to handle prioritization (i.e., fixed $d_j=d_0$ for non-prioritized content). Nonetheless, prioritization might be achievable by restructuring existing ISP infrastructure. In particular, given a fixed capacity, fewer resources would be dedicated to non-prioritized traffic, thus reducing the QoS of non-prioritized content. This lowered QoS guarantees could reduce the total number of users of non-prioritized CPs at equilibrium, hence reducing their subscription-based revenues as well.}
\end{remark}

\textbf{Takeaway:} 
Despite not losing users, non-prioritized  CPs will have lower revenue due to reduced attention from their users, and hence, reduced advertising revenue. If the non-prioritized CPs' QoS decreases, they will lose revenue from subscriptions as well.  

\subsubsection{QoS attainable by each CP} We next determine how the subscription fee and revenue rate of a CP affect the QoS he can afford to attain for his users through prioritization. 

\begin{corollary}\label{cor:affordability}
Let $C(d_0)=0$ (i.e., there already exists infrastructure to serve the traffic at the standard QoS), and $\lim_{d\rightarrow 0} C(d) = \infty$ (negligible delays are prohibitively costly/infeasible). 
Then, 
\begin{enumerate}
    \item The users of $CP_j$ will get a fast lane with QoS/delay $d_j^*<d_0$ if and only if $|C'(d_0)|< \frac{S_j+\frac{1}{2}(1+\delta)\lambda r_j}{\lambda (V-d_0-\frac{S_j}{\theta})}$.
    \item If the above condition holds, the fast lane's QoS is increasing ($d_j^*$ is decreasing) in both $S_j$ and $r_j$. 
\end{enumerate} 
\end{corollary}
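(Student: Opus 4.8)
The plan is to treat both parts as statements about the unique minimizer $d_j^*$ of the one-dimensional objective from Part~(3) of Theorem~\ref{thm:MCPs}, namely $\Psi(d) := \frac{A}{t}d + \lambda\tau_j(d)C(d)$ with $A := S_j + \frac12(1+\delta)\lambda r_j$ and $\tau_j(d) = \frac1t(V - d - \frac{S_j}{\theta})$, minimized over $d \in (0, d_0]$. The key preliminary step is to show $\Psi$ is strictly convex there. First I would record that Assumption~\ref{as:parameters} forces $\tau_j(d) > 0$ throughout $(0,d_0]$: its lower bound on $\theta$ gives $\theta > \frac{S_j}{V-d_0}$, hence $V - d_0 - \frac{S_j}{\theta} > 0$, and since $\tau_j$ is decreasing in $d$ this positivity extends to all $d \le d_0$. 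Differentiating twice yields $\Psi''(d) = \frac{\lambda}{t}\big[-2C'(d) + (V-\frac{S_j}{\theta}-d)C''(d)\big]$, which is strictly positive because $C$ is decreasing ($C' < 0$) and convex ($C''\ge 0$) and the coefficient $V - \frac{S_j}{\theta} - d = t\tau_j(d) > 0$. Together with $\lim_{d\to 0^+}\Psi(d) = +\infty$ (from $C(d)\to\infty$ and $\tau_j(0) > 0$), this guarantees a unique minimizer and reduces the whole corollary to first-order analysis.

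For Part~(1) I would evaluate the one-sided derivative at the right endpoint. Since the minimizer of a strictly convex $\Psi$ with $\Psi(d)\to\infty$ as $d\to0^+$ lies in the interior of $(0,d_0]$ exactly when $\Psi'(d_0) > 0$ (and equals $d_0$ otherwise), we have $d_j^* < d_0$ iff $\Psi'(d_0) > 0$. Using $C(d_0) = 0$, the term $-C(d_0)$ drops and $\Psi'(d_0) = \frac{A}{t} + \frac{\lambda}{t}(V - d_0 - \frac{S_j}{\theta})C'(d_0)$. Writing $|C'(d_0)| = -C'(d_0)$ and rearranging $\Psi'(d_0) > 0$ gives exactly $|C'(d_0)| < \frac{A}{\lambda(V - d_0 - \frac{S_j}{\theta})} = \frac{S_j + \frac12(1+\delta)\lambda r_j}{\lambda(V-d_0-\frac{S_j}{\theta})}$, the claimed threshold; the strict inequality matches the boundary case $\Psi'(d_0) = 0$, where the minimizer sits at $d_0$ (no prioritization).

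For Part~(2), assuming the Part~(1) condition holds so that $d_j^*$ is the interior root of $\Psi'(d_j^*) = 0$, I would apply the implicit function theorem to $F(d, S_j, r_j) := \Psi'(d)$. Then $\partial_{r_j} d_j^* = -\frac{\partial_{r_j}F}{\partial_d F}$ and likewise for $S_j$, with $\partial_d F = \Psi''(d_j^*) > 0$ by the convexity established above. For $r_j$ only $A$ depends on it, giving $\partial_{r_j}F = \frac{\lambda(1+\delta)}{2t} > 0$, hence $\partial_{r_j}d_j^* < 0$. For $S_j$ both $A$ and $\tau_j$ depend on it; computing $\partial_{S_j}F = \frac1t - \frac{\lambda}{t\theta}C'(d_j^*)$ shows it is positive since $C'(d_j^*) < 0$, whence $\partial_{S_j}d_j^* < 0$ as well, proving that the fast lane's QoS improves ($d_j^*$ decreases) in both parameters.

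The main obstacle is the convexity/positivity step: the sign computations in both parts are mechanical once one knows that $\Psi$ is strictly convex and that the first-order condition characterizes $d_j^*$, but both facts rest on $\tau_j(d) > 0$ over the whole interval, which must be extracted from the two-sided bound on $\theta$ in Assumption~\ref{as:parameters}. The remaining care is bookkeeping: correctly pinning down when the constrained minimizer is interior versus at the boundary $d_0$, so that the \emph{iff} in Part~(1) and the applicability of the implicit function theorem in Part~(2) are justified rather than assumed.
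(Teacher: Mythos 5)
Your proof is correct and takes essentially the same route as the paper: part (1) by checking the sign of the convex objective's derivative at $d_0$ (where $C(d_0)=0$ kills the $-C(d_0)$ term, giving exactly the threshold $|C'(d_0)| < \frac{S_j+\frac12(1+\delta)\lambda r_j}{\lambda(V-d_0-S_j/\theta)}$), and part (2) by comparative statics on the interior first-order condition, with your implicit-function-theorem computations $\partial_{r_j}F>0$ and $\partial_{S_j}F = \frac1t - \frac{\lambda}{t\theta}C'(d_j^*)>0$ matching the paper's argument that increasing $r_j$ or $S_j$ shifts the first-order condition upward and hence decreases its root. Your explicit verification of strict convexity via $\Psi''$ and of $\tau_j(d)>0$ on $(0,d_0]$ from Assumption~\ref{as:parameters} simply makes rigorous what the paper asserts parenthetically.
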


\textbf{Interpretation:}  As intuitively expected, Corollary \ref{cor:affordability} states that not all CPs can afford to compensate the ISP for the construction of fast lanes. Recall that the cost function $C(\cdot)$ is decreasing, with a larger $|C'(\cdot)|$ indicating a more steep increase in cost for improving QoS. Therefore, only CPs with sufficiently high revenue rates (either high $S_j$ or high $r_j$) may be able to afford to offset the costs of fast lane construction. 
Further, higher revenue CPs can afford to offset the ISP's cost for higher QoS fast lanes. 

It is also interesting to note that as $V$ increases, or as $d_0$ decreases (i.e., if the default QoS is relatively high), each CP has a higher mass of users even in the absence of prioritization, which in turns also decreases the CP's interest in offsetting the ISP's costs for fast lane construction. 
 
\textbf{Takeaway:} As intuitively expected, only CPs with sufficiently high subscription fees or advertising revenue may be able to pay for fast lanes. Among paying CPs, those with higher subscription fees or higher ad-revenue rates can negotiate higher QoS.

\subsubsection{Users' welfare} We next compare users' welfare under the assumption that non-prioritized traffic is not slowed down. 

\begin{corollary}\label{cor:user-welfare}
Assume $CP_j$ has a prioritization agreement with improved QoS/delay $d_j^*<d_0$, while his competitors maintain the default QoS/delay $d_0$. Then, the introduction of prioritization increases users' welfare. 
\end{corollary}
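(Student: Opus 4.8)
The plan is to compute the total user welfare both before and after prioritization and show the difference is nonnegative. I would define welfare $W$ as the integral over the continuum $x \in [0,1]$ of each user's realized utility $U_x(\{j\})$ or $U_x(\{j,k\})$ from~(\ref{eq:user-utility}), using the consumption patterns characterized in Lemma~\ref{lemma:cp-users} (before) and Corollary~\ref{cor:cp-users-change} (after). A cleaner approach, which I would pursue first, is to track only the \emph{change} $\Delta W = W_{\text{after}} - W_{\text{before}}$ and argue it is positive by decomposing the affected population into the disjoint groups identified in Corollary~\ref{cor:cp-users-change}, since users whose consumption pattern is unchanged contribute zero to $\Delta W$ except through the QoS improvement. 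Because the access fee $F$ and subscription fees $S_j$ are sunk/unchanged for each retained pattern, they cancel in the difference, which simplifies the bookkeeping considerably.

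The key steps, in order, are as follows. First, identify the three groups whose welfare changes when $CP_j$ is prioritized from $d_0$ to $d_j^*<d_0$: (i) the prioritized CP's pre-existing users (both primary and secondary), who now enjoy strictly lower delay $d_j^*$ in the terms $-d_j$ and $-\theta d_j$ of their utility; (ii) the mass $\tfrac{d_0-d_j^*}{t}$ (or $\tfrac12$ this, per side) of new secondary users drawn from single-purchasers of a competitor, who by revealed preference gain nonnegative residual utility $\bar U_x(\cdot)\ge 0$ since they chose to multi-purchase; and (iii) the mass $\tfrac12\tfrac{d_0-d_j^*}{t}$ per competitor of users who re-order, swapping a primary-$CP_k$/secondary-$CP_j$ pattern for primary-$CP_j$/secondary-$CP_k$. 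Second, show each group's contribution to $\Delta W$ is nonnegative: group (i) trivially, since $d_j^* < d_0$ lowers both the primary and $\theta$-weighted secondary delay penalties; group (ii) by the fact that joining a secondary CP is a voluntary choice made only when it yields nonnegative marginal utility; and group (iii) because these users optimally chose to re-order, so their post-prioritization utility weakly exceeds what they would get retaining the old order, which in turn exceeds their old utility once $CP_j$'s QoS improves. Third, sum the contributions to conclude $\Delta W \ge 0$, with strict inequality because group (i) is nonempty whenever $d_j^* < d_0$.

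The main obstacle I anticipate is group (iii), the re-ordering users, where I must verify the welfare comparison is genuinely an improvement rather than merely a relabeling. The subtlety is that re-ordering changes which CP's utility gets multiplied by $\theta<1$: a user switching from (primary $CP_k$, secondary $CP_j$) to (primary $CP_j$, secondary $CP_k$) now weights $CP_k$'s value by $\theta$ rather than fully, which could in principle reduce welfare from the $CP_k$ component even as the $CP_j$ component improves. The resolution is to invoke optimality of the user's choice directly: since under the new QoS the user \emph{prefers} the re-ordered pattern, its new utility is by definition the maximum over both orderings and over single-purchasing, hence weakly above the old utility evaluated under the same new QoS, which is itself weakly above the genuinely old (pre-prioritization) utility because $d_j^*\le d_0$ enters every relevant term with a negative sign. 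I would make this precise by writing the re-ordering user's welfare as a maximum of the available pattern-utilities and bounding it below by the specific pre-prioritization pattern, thereby avoiding any delicate sign-chasing in the $\theta$-weighted terms.
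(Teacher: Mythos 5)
Your proof is correct, but it takes a genuinely different route from the paper's. The paper proves Corollary~\ref{cor:user-welfare} by explicit computation: it integrates users' utilities over each inter-CP segment of the Hotelling line, using the equilibrium cutoffs, to obtain the closed-form welfare expression
\[(1-\theta)\left(\tfrac{1}{M-1}\bigl(V-\tfrac{d_j+d_{j+1}}{2}-\tfrac{t}{2(M-1)}\bigr)+\tfrac{(d_{j+1}-d_j)^2}{4t}\right)+\theta\,\tfrac{t}{2}\bigl(\tau^S_j(d_j)^2+\tau^S_{j+1}(d_{j+1})^2\bigr),\]
and then shows this is decreasing in $d_j$ and $d_{j+1}$ on $(0,d_0]$, so lowering either delay raises welfare. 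Your argument is instead a pointwise revealed-preference (envelope) argument: each user's realized utility is the maximum over its available consumption bundles; lowering $d_j$ to $d_j^*$ weakly raises the utility of every bundle containing $CP_j$ and leaves the others unchanged (this is exactly where the no-throttling hypothesis enters), so the maximum weakly rises for every user and strictly for $CP_j$'s users. Your treatment of the re-ordering users in group (iii) is the one place where a naive term-by-term comparison would fail --- re-ordering shifts the $\theta$-weighting onto the competitor's value --- and your resolution (bound the new, optimal utility below by the \emph{old} pattern's utility evaluated at the new QoS, which in turn dominates the old utility since $d_j^*\le d_0$ enters every term with a negative sign) is exactly right; note that the specific masses you quote from Corollary~\ref{cor:cp-users-change} are not actually needed, since the argument is per-user. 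As for what each approach buys: yours is more elementary and more general --- it uses only users' optimality and pointwise dominance, not the uniform distribution or the explicit equilibrium cutoffs, so it extends verbatim to the non-uniform distribution of Theorem~\ref{thm:MCPs-nonuniform-inpaper} and to several simultaneously prioritized CPs; the paper's closed form, by contrast, is quantitative, and is what supports the accompanying remark that welfare can \emph{decrease} under throttling (if some competitor's $d_k$ rises while $d_j$ falls, the pointwise dominance underlying your argument breaks down, whereas the explicit expression lets one see the net effect can go either way).
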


\textbf{Interpretation:} When content from $CP_j$ is prioritized, his users experience higher QoS. As such, the utility from any user who was already consuming content from this CP before prioritization, as well as that of his new secondary users, increases following prioritization. 
We also show that multi-purchasing users who switch their order of consumption in favor of the prioritized content attain a higher utility. Therefore, users with any consumption pattern attain higher utility than prior to the introduction of prioritization. 

We note that there are two driving factors in the improvement of user welfare: (1) users' QoS improves (which is also the driving factor of welfare improvement in prior work on paid prioritization), and (2) additional welfare is generated due to the increased variety of content consumption from the new secondary users of $CP_j$ (which is new to our model); in particular, these users continue to attain the same utility from consuming primary content from a competitor, while now also getting added utility from secondary content consumption. 

We note that the fact that the non-prioritized contents retain the same QoS as before is crucial for the conclusion of Corollary \ref{cor:user-welfare}. If the traffic of competitors is throttled (or congested) following $CP_j$'s prioritization, the utility of users consuming competitors' content would degrade. 
Depending on the extent of this degradation, overall user welfare may decrease. A similar conclusion has been made under other models of CP competition and user content consumption \cite{kramer2012network,tang2019regulating}. 

\textbf{Takeaway:} As long as prioritization is attained through building additional infrastructure (while maintaining the same QoS for non-prioritized content), users' welfare increases under prioritization due not only to improved QoS (as is also the case in prior work), but also due to increased variety of content consumption (which is new to our model). This result may not hold if non-prioritized contents is throttled or slowed down.

\rev{
\subsubsection{Single vs dual-purchasing} Finally, we characterize the market equilibrium for a model without dual-purchasing, by setting $\theta=0$ in our proposed model. 

\begin{theorem}\label{thm:MCPs-single}
Assume that the first part of Assumption~\ref{as:parameters} is satisfied (i.e., full market coverage), and that $\theta=0$ (i.e., users only single-purchase). Then, 
\begin{enumerate}
\item A $CP_j$, $j\in[2,M-1]$, with QoS/delay $d_j$ will have a total of $n_{j}=\frac{1}{M-1}+\frac{d_{j+1}+d_{j-1}-2d_j}{2t}+\frac{S_{j+1}+S_{j-1}-2S_j}{2t}$ users. 
\item A $CP_j$, $j\in\{1,M\}$, with QoS/delay $d_j$ will have a total of  $n_{j}=\frac{1}{2}(\frac{1}{M-1}+\frac{d_{j+(-1)^{\scriptsize{\mathbbm{1}\{j=M\}}}}-d_j}{t}+\frac{S_{j+(-1)^{\scriptsize{\mathbbm{1}\{j=M\}}}}-S_j}{t})$ users. 
\end{enumerate}
Further, let $n_j(d_j, d_{j\pm 1})$ be the number of users determined in (1) and (2) above, evaluated at a given QoS/delays profile $d$. Then, 
\begin{enumerate}
\setcounter{enumi}{2}
    \item The QoS offered to prioritized CPs' users will be determined by
    \begin{align*}
        d^*_j :=& \max_{d_{j}\in(0, d_0]} \sum_{j=1}^{M} \big(\frac{d_0-d_j}{t}(S_j+\lambda r_j) \notag\\
        & \qquad\qquad + \lambda n_j(d_j, d_{j\pm 1}) C(d_{j}))\big)~.
    \end{align*}
    \item $CP_j$ will be charged a prioritization fee of
    \begin{align*}
    p_j(d_j^*) = \frac{S_j+\lambda r_j}{\lambda t}\cdot \frac{d_0-d_j^*}{n_j(d_j^*, d^*_{j\pm 1})},
    \end{align*}
\end{enumerate}
\end{theorem}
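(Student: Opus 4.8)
The plan is to run the same backward induction as in Theorem~\ref{thm:MCPs}, specialized to $\theta=0$. Setting $\theta=0$ collapses the residual utility to $\bar U_x(k)=-S_k\le 0$, so no user ever benefits from a second provider and every user single-purchases, solving $\max_j U_x(\{j\})=\max_j\bigl(V-tm(x,j)-d_j-S_j-F\bigr)$. The first step is to invoke part~(1) of Assumption~\ref{as:parameters}: the farthest any $x$ lies from its nearest CP is $\tfrac{1}{2(M-1)}$, so the full-coverage condition makes the maximized utility strictly positive and the entire unit mass is served. The line therefore splits into $M$ consecutive ownership intervals.

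For parts~(1)--(2) I would locate the Hotelling cut points. Between $CP_j$ (at $\tfrac{j-1}{M-1}$) and $CP_{j+1}$ (at $\tfrac{j}{M-1}$) the indifferent user solves $V-t(x-\tfrac{j-1}{M-1})-d_j-S_j=V-t(\tfrac{j}{M-1}-x)-d_{j+1}-S_{j+1}$, giving \[ x^*_{j,j+1}=\tfrac{1}{2}\Bigl(\tfrac{j-1}{M-1}+\tfrac{j}{M-1}\Bigr)+\tfrac{(d_{j+1}-d_j)+(S_{j+1}-S_j)}{2t}. \] An interior $CP_j$ owns $[x^*_{j-1,j},x^*_{j,j+1}]$, so $n_j=x^*_{j,j+1}-x^*_{j-1,j}$; the midpoints telescope to $\tfrac{1}{M-1}$ and the delay and fee terms assemble into the second differences $\tfrac{d_{j+1}+d_{j-1}-2d_j}{2t}$ and $\tfrac{S_{j+1}+S_{j-1}-2S_j}{2t}$, establishing part~(1). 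The end providers have a single active boundary ($CP_1$ owns $[0,x^*_{1,2}]$ and $CP_M$ owns $[x^*_{M-1,M},1]$), which halves the nominal share and leaves only the one-sided difference toward the unique neighbour, giving part~(2); the $(-1)^{\mathbbm 1\{j=M\}}$ merely selects that neighbour.

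With demand in closed form, I would analyze the ISP's subgame-perfect offer. The pivotal computation is the own-QoS demand response: differentiating $n_j$ in its own $d_j$ (neighbours fixed) gives $-\tfrac1t$ for an interior CP and $-\tfrac{1}{2t}$ for an end CP, so lowering the delay from $d_0$ to $d_j$ adds $\tfrac{d_0-d_j}{t}$ users (resp.\ $\tfrac{d_0-d_j}{2t}$ at the ends). Because $\theta=0$ leaves no attenuated secondary demand, each added user is worth $S_j+\lambda r_j$, so $CP_j$'s gross gain from accepting a fast lane is $\tfrac{d_0-d_j}{t}(S_j+\lambda r_j)$ (interior). A monopolistic ISP extracts this entire surplus: she sets $p_j$ so that $CP_j$ is exactly indifferent between opting in at $d_j$ and staying at $d_0$, i.e.\ $\lambda n_j(d_j,d_{j\pm1})\,p_j=\tfrac{d_0-d_j}{t}(S_j+\lambda r_j)$, which rearranges to the fee of part~(4). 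Substituting $\lambda n_j p_j$ into $\Pi=F+\lambda\sum_j z_j n_j\bigl(p_j-C(d_j)\bigr)$ turns the ISP's objective into $F+\sum_j z_j\bigl[\tfrac{d_0-d_j}{t}(S_j+\lambda r_j)-\lambda n_j(d_j,d_{j\pm1})C(d_j)\bigr]$, whose profile-wise maximization over $d_j\in(0,d_0]$ is exactly the program of part~(3) (revenue net of cost, with $d_j=d_0$ encoding a CP that is not prioritized).

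The main obstacle I anticipate is that, unlike the single-prioritized-CP reduction in Theorem~\ref{thm:MCPs}, the cost term $\lambda n_j(d_j,d_{j\pm1})C(d_j)$ couples adjacent delays, so the ISP's program does not separate across $j$ and must be optimized over the whole profile. The surplus-extraction step, however, stays per-CP precisely because the own-QoS response $\tfrac1t$ (resp.\ $\tfrac{1}{2t}$) is independent of the neighbours; the care is in verifying subgame perfection, namely that the ISP can commit to the full profile $(d_j^*,p_j)$, that each CP's opt-in is individually rational against the announced neighbouring QoS, and that no joint re-optimization raises $\Pi$. I would also carry the end-CP factor $\tfrac12$ through both the price and the program (the same ``drop the factor $2$'' adjustment as in Corollary~\ref{cor:cp-users-change}), and check that the boundary solution $d_j^*=d_0$ is consistent with $z_j=0$, so the characterization simultaneously covers prioritized and non-prioritized CPs.
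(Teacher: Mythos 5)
Your proposal is correct and follows essentially the same route as the paper's own proof: with $\theta=0$ the secondary utility $\bar U_x(k)=-S_k$ eliminates dual-purchasing (the paper phrases this as $\tau^S_j\to-\infty$, landing in case 2 of its Lemma~\ref{lemma:content-choices-users} under full coverage), the Hotelling cut points $\tfrac12(X_j+X_{j+1})+\tfrac{(d_{j+1}-d_j)+(S_{j+1}-S_j)}{2t}$ yield parts (1)--(2), the binding individual-rationality constraint $\lambda n_j(d_j,d_{j\pm1})p_j\le\tfrac{d_0-d_j}{t}(S_j+\lambda r_j)$ gives part (4), and substitution into $\Pi$ produces the coupled profile-wide program of part (3), whose non-separability across $j$ you correctly flag. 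Your one deviation, writing the ISP objective with $-\lambda n_j(d_j,d_{j\pm1})C(d_j)$ rather than the $+$ appearing in the statement, is in fact the economically correct sign (profit net of infrastructure cost, matching the analogous program in Theorem~\ref{thm:MCPs}), so it corrects an apparent typo rather than introducing an error.
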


\textbf{Interpretation:} From Parts (1) and (2), it is easy to see that prioritization of a $CP_j$'s competitor (decrease in $d_{j \pm 1}$) will \emph{reduce} the number of users of the CP. In contrast, when accounting for dual-purchasing, prioritization of competitors will only reduce the attention received by a CP, as this CP can still have a chance to serve as the secondary provider for his competitor's primary users. Further, from Parts (3) and (4), we observe that unlike Theorem~\ref{thm:MCPs}, prices charged to prioritized CPs depend on those charged to their competitors. In particular, from Part (4), we note that if $CP_j$'s competitor has paid for prioritization, then the ISP can charge a \emph{higher} prioritization fee to $CP_j$. This is because the ISP can exploit the fact that a CP is necessarily at a disadvantage (regardless of his business model) when competitors are prioritized, and hence has additional incentive/pressure to pay for prioritization. These conclusions are largely similar to prior work studying CP competition using single-purchasing Hotelling models~\cite{guo2017effects,cheng2011debate}. 

\textbf{Takeaway:} The negative impacts of prioritization of competitors on non-prioritized content is less severe when users purchase from multiple content providers. If users only single-purchased, non-prioritized content providers would lose users, and would have to pay higher prioritization fees for the same QoS once their competitors have been prioritized. 
}
\section{Discussion}\label{sec:discussion}

\subsection{Policy and Practical  Implications}\label{sec:practical-policy}

\paragraph{On the need for the ``no throttling'' assumption} We first note that an important assumption underlying our conclusions 
is that non-prioritized traffic is not throttled or otherwise slowed down by the ISP. From a policy perspective, this means that allowing paid prioritization, but imposing  no-throttling of the non-prioritized traffic, could lead to an outcome in which ISPs invest in additional infrastructure in a way that ultimately benefits users. A number of ISPs have in fact made implicit or explicit commitments to not throttle or block lawful content, even without net neutrality regulations~\cite{isp-promises}; this suggests that a ``no throttling'' regulation could be feasible from all stakeholders' perspective. 

We note that throttling of content (and also its prioritization) is only of importance if it is discernable by the users. 
While our current model considers a continuous QoS parameter $d_i$, our results will hold qualitatively with a discrete model of QoS sensitivity. 
In particular, non-prioritized CPs will not lose users if their users can not discern a drop in their contents' QoS. 

\paragraph{How should CPs pay for prioritization?} Our results suggest that CPs with sufficiently high revenue from either subscriptions or advertising can afford to pay prioritization fees. 
Although we have taken CPs' revenue rates to be fixed, this finding also implies that CPs can aim to increase their revenue (by increasing advertisements, improving their ad-display algorithms, increasing their subscription fees, or making their content more valuable to users) so as to be able to afford fast lanes. In particular, CPs could opt to offer both  ad-based and premium (ad-free) versions, with the latter providing subscribers with improved QoS. The paying subscribers would then be indirectly subsidizing the cost of additional infrastructure that supports their desired experience. However, the ability to introduce such fees, or increasing existing fees, will depend on the extent of competition and the subscription fees set by competitors in the content market. 

\paragraph{Who should be paying for additional Internet infrastructure?} In this paper, we have considered the construction of new infrastructure that is subsidized by CPs, in return for better QoS for their users. This approach is in line with views expressed by some ISPs 
who have suggested that large CPs should be paying the ISPs to build infrastructure that can support their traffic (by either cutting their profit margin, or charging \emph{their own} users sufficiently high subscription fees), as opposed to the ISP having to increase its access fees to \emph{all} users so as to be able to expand its infrastructure~\cite{attpolicy, frontier-free-internet}. The opposing view adopted by CPs 
is that the ISPs, and not the CPs, should be cutting their (already high) profit margins in response to the need for additional infrastructure. 
Regardless of which side pays for the infrastructure, the cost may still be passed down to users in form of increased fees (either ISPs' access fees or CPs' subscription fees). An analysis of changes in these fees, the effects of CP and ISP competition on their ability to adjust the fees, and the potential need for regulations to protect users from ultimately bearing the cost of added infrastructure, remain as interesting questions. 

\paragraph{On the trends of ISPs' spending in infrastructure} There has been substantial debate on how ISPs' investments in infrastructure are affected by regulations, with conflicting conclusions on whether broadband investments increased~\cite{forbes-down,r-street} or decreased~\cite{wired-not-true,sorry-ajit-drop} following the introduction of net neutrality laws, and also following the 2018 FCC ruling~\cite{vice-aftermath-investment}. 
We would also like to point out that while the institution of the type of paid prioritization arrangements we have considered in this paper could lead to additional infrastructure investments, the benefits will be limited to the prioritized CPs' users. Therefore, both short term and long term effects of the latest FCC rulings on improving broadband infrastructure and access for \emph{all users} remain to be seen. 
\subsection{Model Extensions and Future Directions}\label{sec:extensions}

We close this section by discussing potential extensions of our model, and opportunities for future work. 

We first elaborate on the potential effects of ISP competition on our conclusions. Due to the absence of ISP competition in our model, all infrastructure expansion costs are ultimately passed on to the CP. However, in a competitive ISP market, each ISP may also have an incentive to enter into exclusive prioritization agreements with (bigger) CPs, in a way that increases the ISP's users. 
As a result, ISPs may compete to attract CPs by partially (or fully) covering the costs of the required infrastructure investments. An analysis of such markets remains a main direction of our future work. 

We further note some potential alternatives for capturing CPs' competition. 
Our proposed model assumes that content providers are distributed along a \emph{line}. We adopted this choice as we believe it can capture some common forms of competition in the content market. For instance, in capturing ideological/political placement of news outlets, CPs' position could correspond to right, left, or centrist media. 
Alternatively, one can consider the distribution of CPs at equally distanced points on a circle's circumference (i.e. a Salop circle~\cite{anderson2019importance}), 
or more generally, let the $M$ providers sit at the vertices of an $M-1$-dimensional simplex, allowing users to fall at the intersection of any two (or more) types of content. 

In addition to altering the content providers' positions, our model can be modified to capture different user distributions {and demands}. In particular, similar to prior work, our analysis is based on a uniform distribution of users on the preference spectrum. Our model can be modified to place user mass away or closer to specific CPs, so as to capture perceived/observed user tendency towards certain content. 
\rev{We state the extension of Theorem~\ref{thm:MCPs} to general user distributions in Appendix~\ref{app:thm3-proof-inpaper}, and find that our results on changes in CPs' users and revenues continue to hold qualitatively.}
We leave further analysis with alternate user and content provider distributions as interesting directions of future work. 

\rev{Lastly, we note that our analysis has focused on competition between content providers having a non-zero mass of users prior to the introduction of paid prioritization options. This is reflective of the competition between established content providers. 
However, paid prioritization can disproportionately affect late entrants to content markets: as major CPs continually subsidize their own fast lanes, the default QoS available to non-prioritized content remains too low. In that case, an entrant CP (e.g. a startup) would have to offer a radically more valuable service (higher $V$) to be able to attract users. 
A model capturing the dynamics between existing vs future entrants remains as another interesting extension.} 
\section{Numerical Validation}\label{sec:simulations}
In this section, we first verify our results from Section~\ref{sec:analysis} through numerical simulations. We then discuss additional insights by considering the effects of different status quo QoS/delays $d_0$ on the ISP's revenue, and comparing the results of our model to those with CP-independent prioritization pricing schemes and prioritization through capacity re-allocation rather than expansion. 
We detail the choice of parameters used in these simulations in Table \ref{t:parameters}. 

\begin{table*}[t]
\caption{Choice of parameters for simulations. We consider the major streaming services (Netflix, Hulu, Disney+, etc), and choose our parameters based on information on their subscription fees, ad-revenues, user base, etc, as detailed below.}
\label{t:parameters}
\centering
\begin{tabular}{ | c | m{0.6\columnwidth} | m{0.9\columnwidth} |  c | } 
 \hline
 Notation & Description & Rationale for choice of value & Realistic value\\ 
 \hline
 $M$ & Number of CPs & Major competitors in streaming who are also content creators (potentially catering to different users' intrinsic preferences) are Netflix, Hulu, Amazon Prime, Apple TV+, Disney+, and HBO Max \cite{major-content-investing} & 6\\
 \hline
 $V$ & Users' base value from content consumption & We set this value relative to the access fees and subscription fees to capture users' voluntary participation in the market & 100\\
 \hline
 $t$ & Transport/fit cost, determining users' intrinsic preference for content & Studies  show that 46\% of U.S. broadband users subscribe to more than one over-the-top streaming service~\cite{mobile-ott}. We set $t$ to roughly get this level of multi-purchasing at the non-prioritized outcome & 500\\
 \hline
 $\theta$ & Users' residual benefit rate from secondary content & Based on a 2018 study saying Netflix users spend 10 hours/week and Amazon/Hulu users 5 hours/week on the platform~\cite{cnbc-hours} & 0.5\\ 
 \hline
 $F$ & Access fee charged to users by the ISP for connectivity & \$50 to \$60 monthly fees for broadband access (based on AT\&T and Comcast rates). Further, reports show that around 60\% of Internet traffic is from video streaming services~\cite{time-on-video}. We use this to estimate the portion of access fees related to streaming services 
 & 33\\ 
 \hline
 $S_j$ & Subscription fee charged to users by $CP_j$ & From \$6 to \$16, based on the range of Netflix and Hulu monthly subscription rates (Hulu's \$6 subscription includes ads) & 10 or [6, 16]\\
 \hline
 $d_0$ & Default QoS/delay for non-prioritized traffic & AT\&T's average broadband delay is 73ms~\cite{broadband-delay}, while the average 4G delay is around 54ms~\cite{cellular-delay}. To capture the effects of QoS on shaping users' choices, we scale this value to make it comparable against the base value and subscription fees  & 6 or [5.4, 7.3]\\
 \hline
 $r_j$ & $CP_j$'s revenue rate from displaying advertisements & We let the average value be around \$2.27, extrapolating from Hulu's \$1.5B annual revenue and 55M ad-supported viewers/month~\cite{digiday-ads} & 2.27 or [1, 4]\\ 
 \hline
 $\delta$ & Rate of reduction in ad-revenue from secondary users & We set this equal to $\theta$ based on \cite{cnbc-hours}, since both parameters are reflected in the time spent on the different CPs & 0.5 \\
 \hline
 $\lambda$ & Users' average traffic/demand rate & Netflix minimum speed is 0.5Mbps, 5Mbps for HD and 25 Mbps for Ultra HD. Our results are invariant to the scaling of this parameter & 5\\ 
 \hline
\end{tabular}
\end{table*}

As detailed in Table~\ref{t:parameters}, we choose our simulation parameters based on the competition between major streaming services who also invest substantially on content creation~\cite{major-content-investing}. By making different types of content available, these CPs are catering to different users' taste; our Hotelling model could hence be suitable for  modeling users' intrinsic preferences for the content offered by each service. Our choices of users' decision parameters in the Hotelling model and the CPs' revenue parameters are guided by information about these streaming services; see Table~\ref{t:parameters}. 

\begin{figure*}[t]
    \centering
\begin{minipage}[t]{0.48\textwidth}
\centering
   \includegraphics[width=0.72\columnwidth]{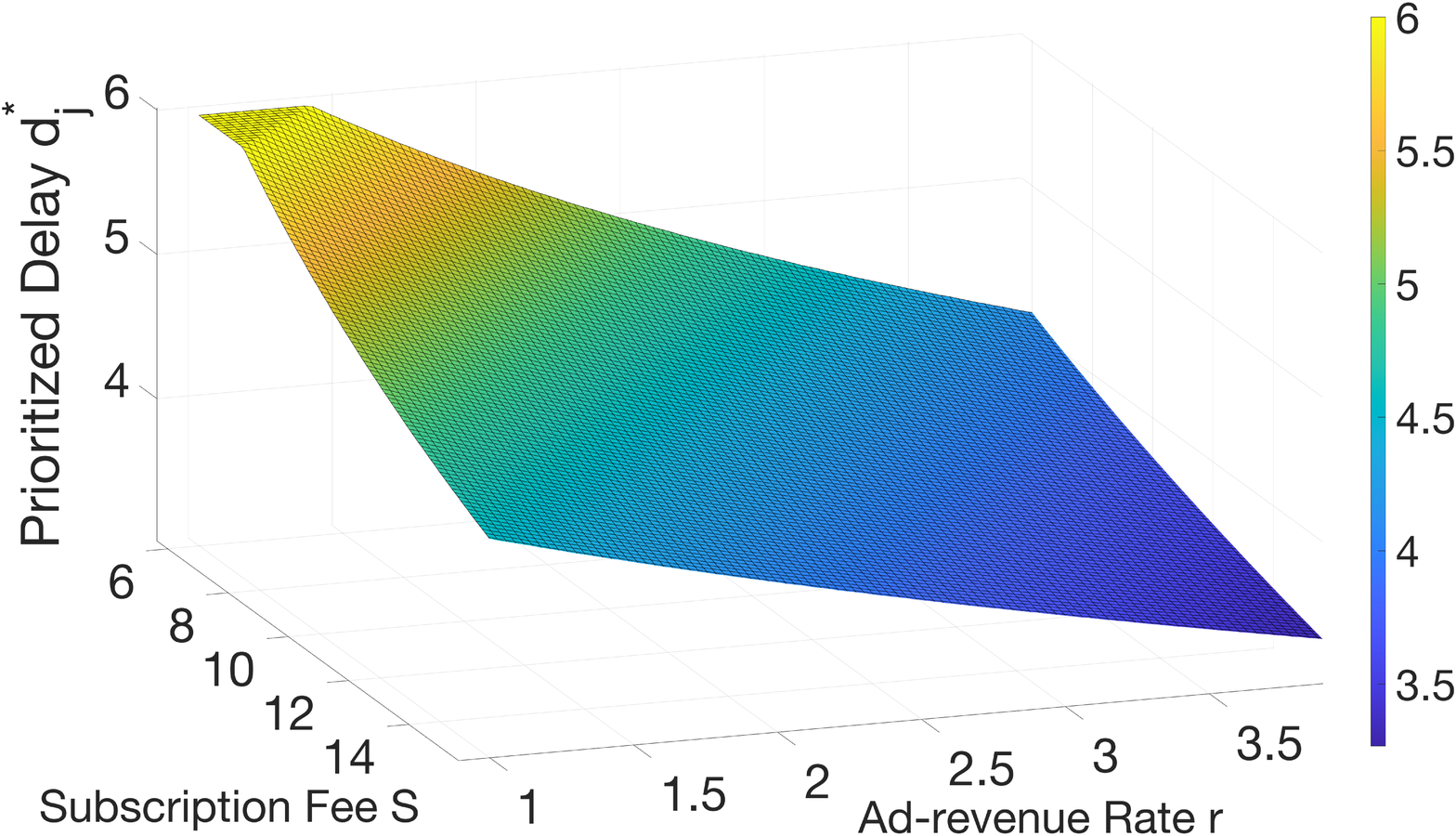}
   \caption{The prioritized CP's QoS/delay as a function of  $S_j$ and $r_j$, for $C(d)=1/d-1/d_0$. CPs with higher subscription fees or revenue rates can subsidize better QoS for their users.}
   \label{fig:delay-opt} 
\end{minipage}
\hspace{0.2in}%
\begin{minipage}[t]{0.48\textwidth}
\centering
   \includegraphics[width=0.72\columnwidth]{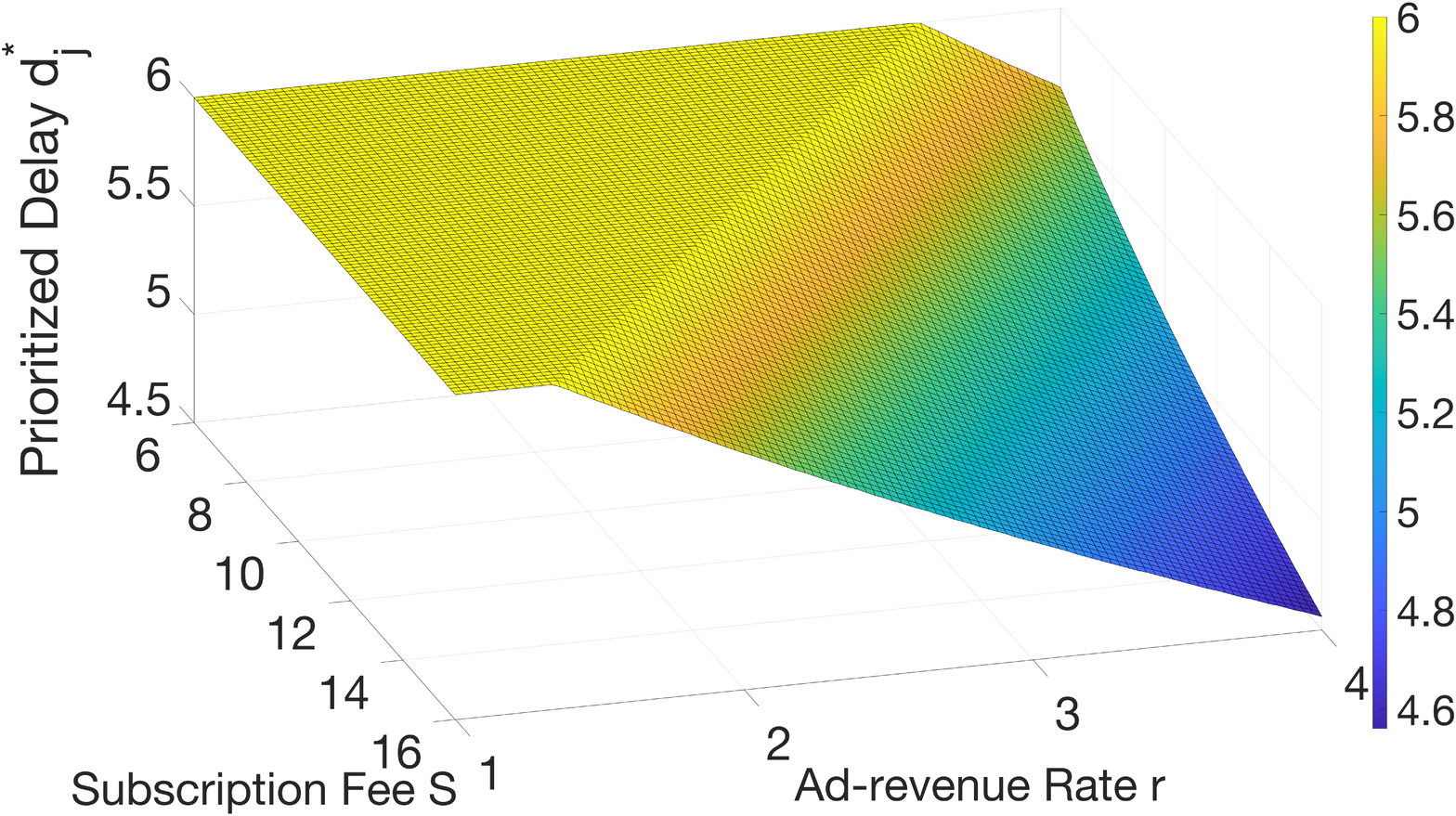}
   \caption{The prioritized CP's QoS/delay as a function of  $S_j$ and $r_j$, for $C(d)=2(1/d-1/d_0)$. Compared to Figure~\ref{fig:delay-opt}, CPs need higher revenue rates to afford prioritization, since it is more costly for the ISP to build new infrastructure.}
   \label{fig:delay-opt-2} 
\end{minipage}
\end{figure*}

\paragraph*{Offered QoS/delay $d^*$} We first consider the CPs' ability to afford fast lanes, and illustrate the QoS offered to users of fast lanes at the market equilibrium. We first consider the ISP cost function $C(d)=1/d - 1/d_0$. Figure~\ref{fig:delay-opt} illustrates the optimal $d_j^*$ offered to $CP_j$ as a function of $S_j$ and $r_j$, where the default QoS/delay is $d_0=6$, and we vary $S_j\in[6,16]$ and $r_j\in[1,4]$, guided by the estimates described in Table~\ref{t:parameters}. The numerical results are consistent with the solution to the ISP's optimization problem derived in part 3 of Theorem~\ref{thm:MCPs}. We observe that, as shown in Corollary~\ref{cor:affordability}, a $CP_j$ will only be able to afford fast lanes under sufficiently high $S_j$ or $r_j$, with the offered delay decreasing in both parameters. In particular, the flat top left of the surface indicates that for low subscription fees and ad-revnue rates, the CP can not afford prioritization and hence keeps the default QoS/delay. 

We conduct the same analysis for an ISP cost function $C(d)=2(1/d - 1/d_0)$, i.e., when it is more costly for the ISP to build fast lanes. The results are illustrated in Figure~\ref{fig:delay-opt-2}. As expected, the CPs need to have even higher subscription fees and revenue rates to afford prioritization.

\begin{figure*}[t]
    \centering
\begin{minipage}[t]{0.32\textwidth}
\centering
   \includegraphics[width=\columnwidth]{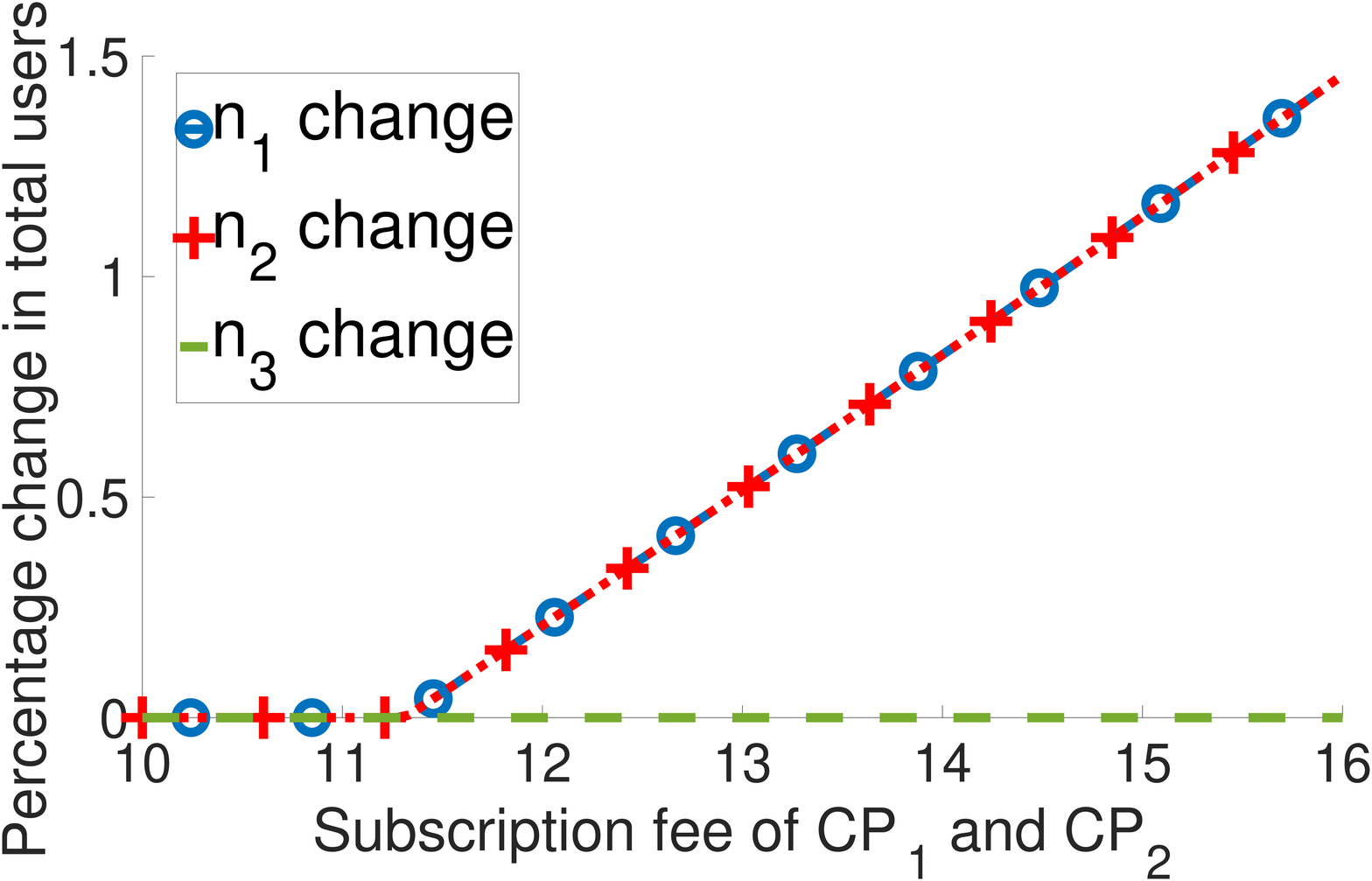}
   \caption{Changes in $CP_1, CP_2, CP_3$ total users following prioritization of $CP_1, CP_2$. Prioritized CPs gain additional users, while non-prioritized $CP_3$ can maintain the same total number of users.}
   \label{fig:cps-users}
\end{minipage}
\hspace{0.1in}%
\begin{minipage}[t]{0.32\textwidth}
\centering
   \includegraphics[width=\columnwidth]{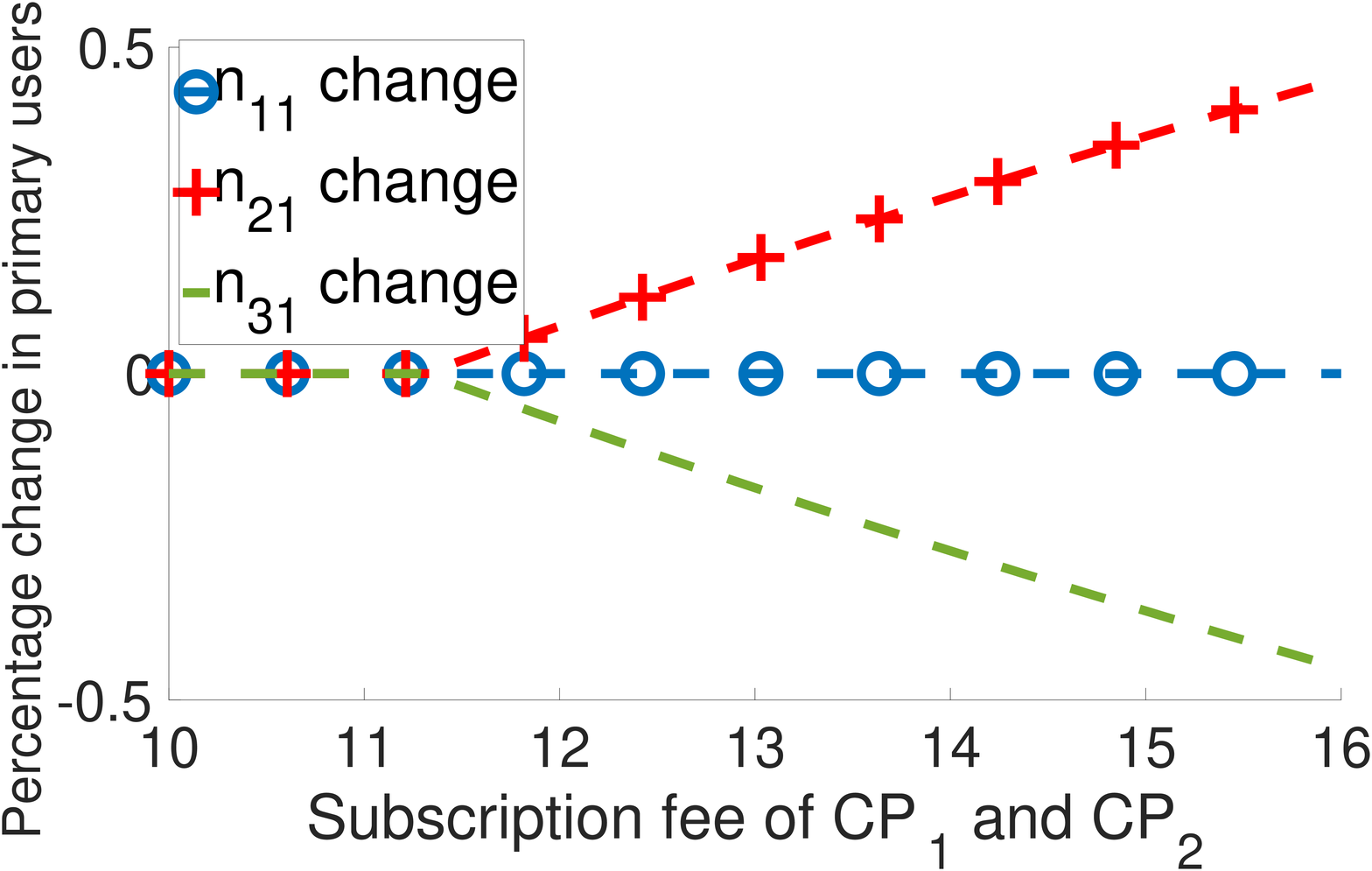}
   \caption{Changes in $CP_1, CP_2, CP_3$ primary users following prioritization of $CP_1, CP_2$. Non-prioritized $CP_3$ gets fewer primary users. Prioritized $CP_1$ will not gain additional attention as his competitor $CP_2$ is prioritized as well.}
   \label{fig:cps-users-prim}
\end{minipage}
\hspace{0.1in}%
\begin{minipage}[t]{0.32\textwidth}
\centering
   \includegraphics[width=\columnwidth]{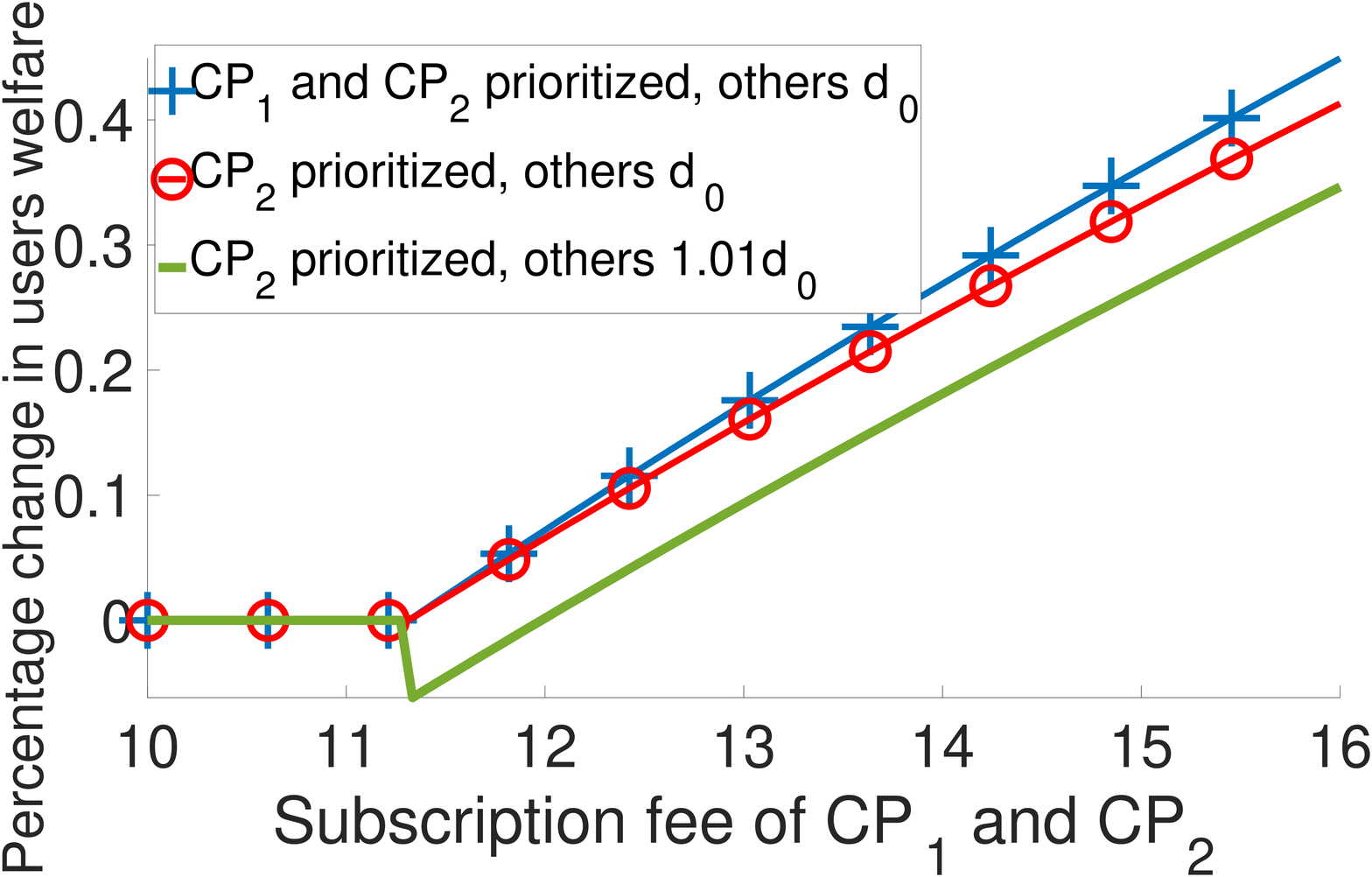}
   \caption{Change in users' welfare. Prioritization can increase users' welfare. However, if non-prioritized traffic is throttled, users' welfare may decrease.}
   \label{fig:welfare}
\end{minipage}
\end{figure*}

\paragraph*{Changes in CPs' users} We next illustrate the changes in CPs' users, in particular when only a subset of them have sufficiently high revenue rates to afford prioritized fast lanes. We consider the ISP cost function $C(d)=2(1/d - 1/d_0)$, and set $r=2.27$ for all CPs. We then vary $S\in[10,16]$ for $CP_1$ and $CP_2$, and fix $S=6$ for all other CPs. Then, the users of $CP_j, j\in\{3,\ldots, 6\}$ will have the default QoS/delay $d_0$, while those of $CP_1$ and $CP_2$ may be prioritized (c.f. Figure~\ref{fig:delay-opt-2}). Note also that $CP_1$'s (only) competitor  will be prioritized as well, while $CP_2$ will still have a non-prioritized competitor, $CP_3$. 

Figures~\ref{fig:cps-users} and~\ref{fig:cps-users-prim} illustrate the changes in $CP_1$, $CP_2$, and $CP_3$ total users and primary users, respectively, as $CP_1$ and $CP_2$'s increasing subscriptions enable them to afford prioritization, while $CP_3$ remains non-prioritized. 
As we would expect from Corollary~\ref{cor:cp-users-change}, \rev{despite users' larger subscription fees $S_1$ and $S_2$,} $CP_1$ and $CP_2$ gain users as they pay for greater prioritization, while the non-prioritized $CP_3$ retains the same total number of users (Figure~\ref{fig:cps-users}), yet loses primary users (Figure~\ref{fig:cps-users-prim}). Note also that, consistent with Remark~\ref{r:non-exclusive}, $CP_1$ does not gain any primary users as he has \emph{the same} prioritization contract as his competitor $CP_2$. However, $CP_2$ can gain primary users due to his improved standing compared to 
$CP_3$.


\paragraph{Users' welfare} Next, we consider users' welfare under prioritization. We consider the same parameters as those used to compare the changes in CPs' users above, so that only $CP_1$ and $CP_2$ may afford prioritization. 
In Figure~\ref{fig:welfare}, we illustrate the change in users' welfare for three scenarios: prioritization of both $CP_1$ and $CP_2$, exclusive prioritization of $CP_2$, and exclusive prioritization of $CP_2$ while throttling the non-prioritized contents by 1\%, so that $d_j=1.01d_0, j\neq 2$. These observations are consistent with  Corollary~\ref{cor:user-welfare}. In particular, users' welfare increases with prioritization, with higher welfare if more CPs are prioritized. 
However, if non-prioritized traffic is slowed down, as shown in Figure~\ref{fig:welfare}, prioritization of one content while throttling competing ones may result in a reduction of users' welfare. Note also that for sufficiently high $S_2$, more users adopt the prioritized content and receive considerably better QoS compared to the default setting, hence increasing overall welfare in spite of throttling of non-prioritized content.

\rev{
\paragraph{Effects of the initial QoS/delay $d_0$ on the ISP's revenue} We next consider the effects of the status quo of the service offered in the non-prioritized regime on the change in the ISP's revenue once paid prioritization is offered. Figure~\ref{fig:isp-revenue} illustrates the increase in the  ISP's revenue relative to the non-prioritized regime, as a function of $d_0$, for $S_j\in\{6,10,16\}$, when the ISP cost is given by $C(d)=1/d-1/d_0$. We observe that the ISP can benefit from prioritization when the initially offered QoS is lower. In particular, if $d_0$ is lower to begin with, the ISP's cost for further improving QoS is too high, and hence no prioritization contracts are offered. Further, we see that the ISP's revenue is increasing in the CPs'  subscription fees. This trend, which is consistent with Corollary~\ref{cor:cp-revenue}, is because the ISP can charge higher prioritization prices to higher revenue CPs.
}

\begin{figure*}[t]
    \centering
\begin{minipage}[t]{0.32\textwidth}
\centering
   \includegraphics[width=\columnwidth]{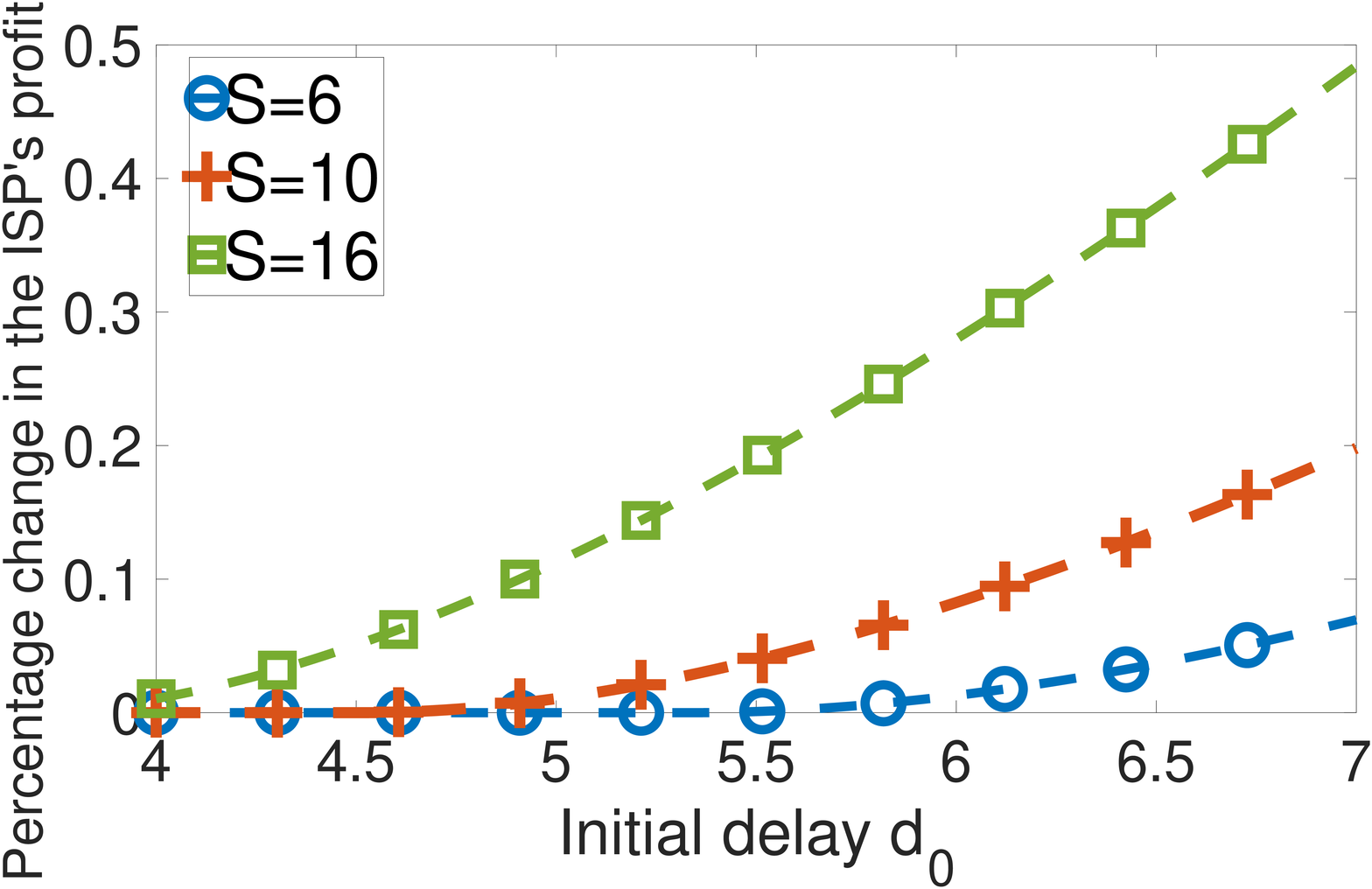}
   \caption{Change in the ISP's revenue as a function of initial QoS/delay $d_0$. The ISP benefits more from offering prioritization when the default QoS is lower (higher delay $d_0$), with the profit increasing in CPs' subscription fees.}
   \label{fig:isp-revenue}
\end{minipage}
\hspace{0.1in}%
\begin{minipage}[t]{0.32\textwidth}
\centering
   \includegraphics[width=\columnwidth]{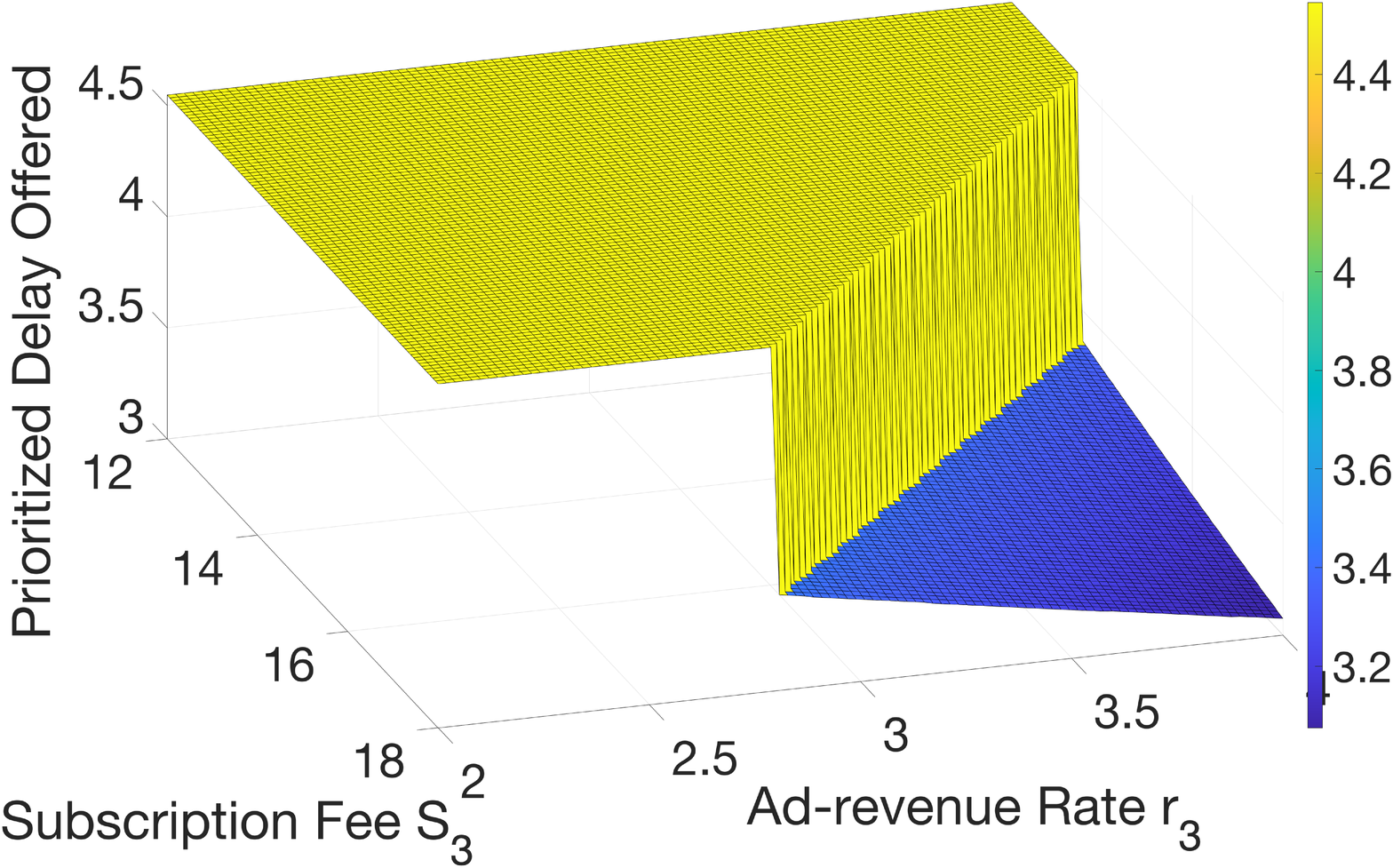}
   \caption{Changes in equilibrium prioritization delay when prioritization pricing is CP-independent. Once $CP_3$ has a sufficiently higher revenue rate than other CPs, the ISP focuses her offerings on meeting the demand of this dominant CP's users.}
   \label{fig:uniformprice}
\end{minipage}
\hspace{0.1in}%
\begin{minipage}[t]{0.32\textwidth}
\centering
   \includegraphics[width=\columnwidth]{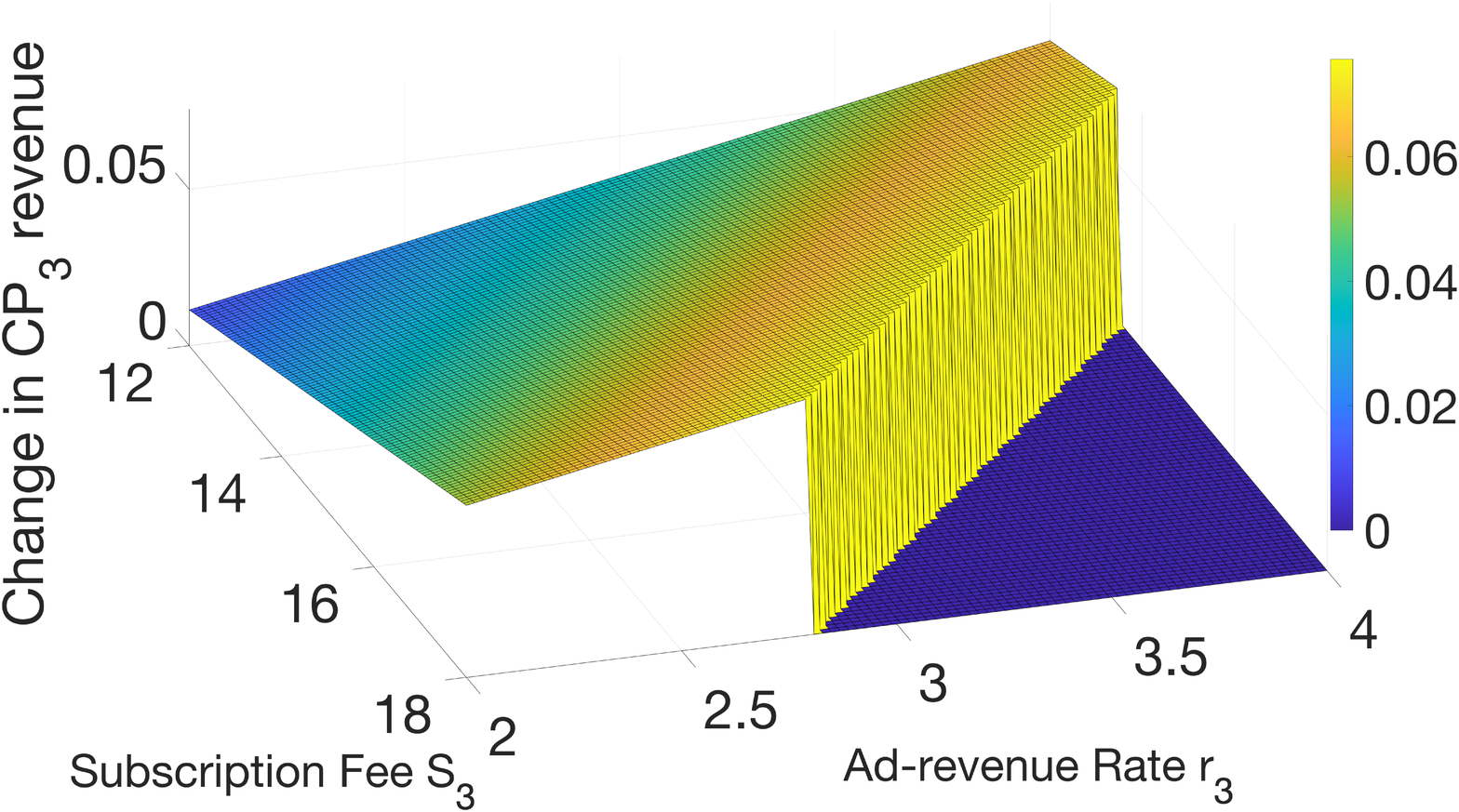}
   \caption{Changes in the revenue of a dominant $CP_3$ when 
   prioritization pricing is CP-independent. Although uniform pricing results in lower QoS than attainable by $CP_3$, this CP can derive higher profit due to lower prices set to attract other CPs.}
   \label{fig:uniformprice-2}
\end{minipage}
\end{figure*}

\rev{
\paragraph*{Offering CP-independent  prioritization options} So far, we have assumed that the ISP can tailor the pricing charged for prioritization contracts to each CP, so as to extract the maximum profit from each provider. We now relax this assumption, by imposing that the ISP should offer a single menu of prioritization prices $p(d)$, without discriminating between content providers. In particular, we let $CP_3$ be a content provider who has higher subscription fees and/or ad-revenue rates compared to other CPs, and hence can offset the ISP's cost for construction of higher quality fast lanes. In this case, the ISP can decide whether to offer a high priced menu to extract the maximum profit from this CP, or to offer a lower priced menu to get more CPs to opt for prioritization. 

Figures~\ref{fig:uniformprice} and~\ref{fig:uniformprice-2} illustrate the offered fast lane QoS/delays, and the revenue of $CP_3$, respectively, for $C(d)=1/d-1/d_0$, $S_3\in[12,18]$, and $r_3\in[2,4]$. We observe that once $CP_3$ has  significantly higher revenue rates compared to other CPs, the ISP switches her offering of prioritization, so as to extract higher profit by (only) meeting the demands of this CP's users (Figure~\ref{fig:uniformprice}). Further, as shown in Figure~\ref{fig:uniformprice-2}, even though uniform pricing results in lower QoS than attainable by $CP_3$, this CP can derive higher profit in the meantime as a result of lower prices set to meet other CPs' demands.

These results suggest that if one of the streaming providers is significantly more profitable than other competitors, and the ISP can not tailor her offerings to different CPs, then she will target her prioritization pricing to profit from this CP alone. Interestingly, if prioritization offerings could be tailored to CPs, then all other CPs could negotiate  prioritization contracts as well (albeit with lower quality fast lanes). Together with our findings in Corollary~\ref{cor:user-welfare} and Figure~\ref{fig:welfare}, this means that offering tailored prioritization options will \emph{improve} the welfare of users. 
}

\begin{figure*}[t]
    \centering
\begin{minipage}[t]{0.32\textwidth}
\centering
   \includegraphics[width=\columnwidth]{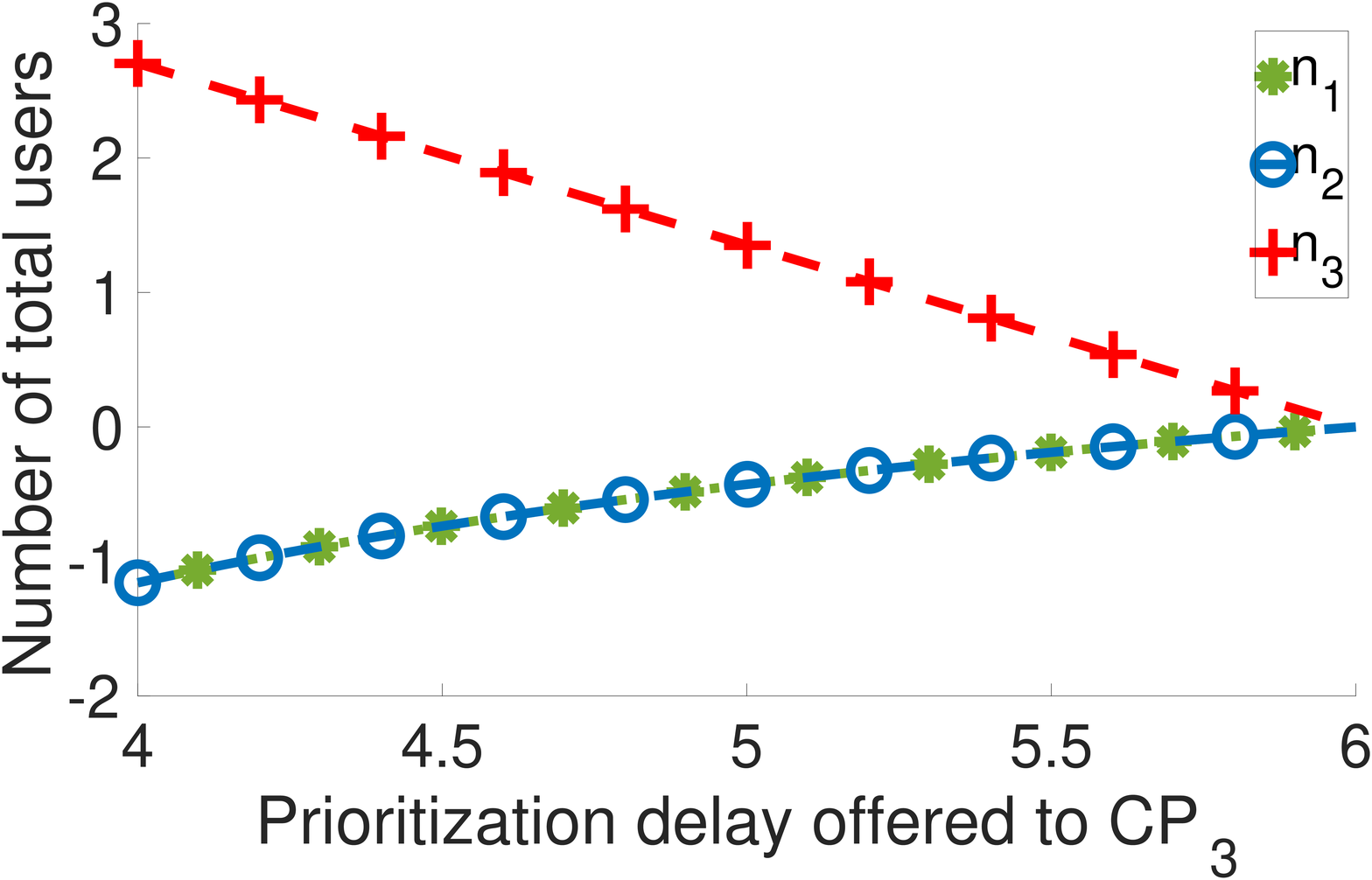}
   \caption{Change in CPs' total number of users, when $CP_3$ is prioritized through capacity reallocation. Non-prioritized content gets lower users as it is inevitably throttled.}
   \label{fig:allocation-users}
\end{minipage}
\hspace{0.1in}%
\begin{minipage}[t]{0.32\textwidth}
\centering
   \includegraphics[width=\columnwidth]{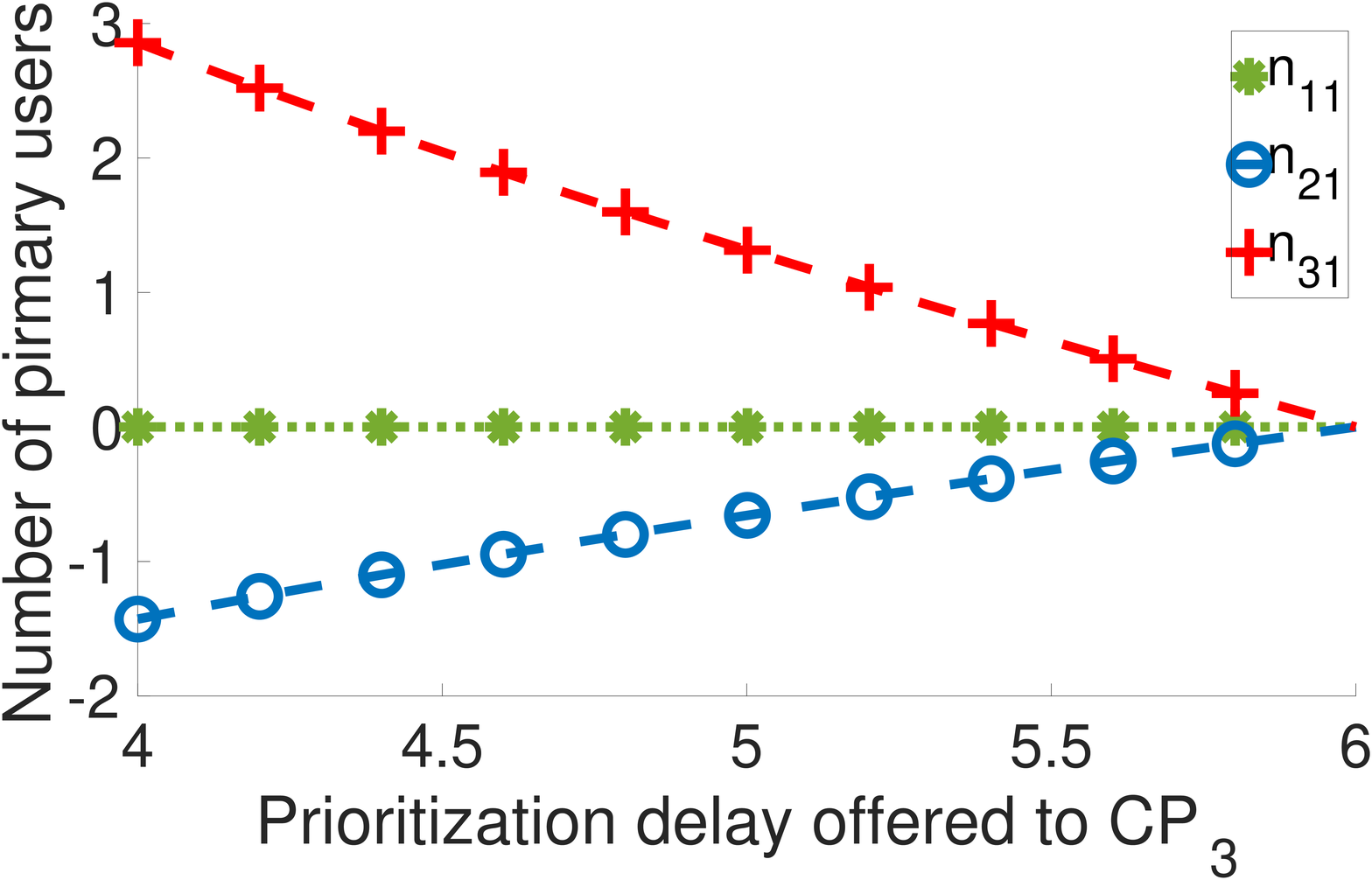}
   \caption{Change in CPs' primary users, when $CP_3$ is prioritized through capacity reallocation. Non-prioritized $CP_2$, the direct competitor of $CP_3$, is losing primary users.}
   \label{fig:allocation-primary}
\end{minipage}
\hspace{0.1in}%
\begin{minipage}[t]{0.32\textwidth}
   \centering
   \includegraphics[width=\columnwidth]{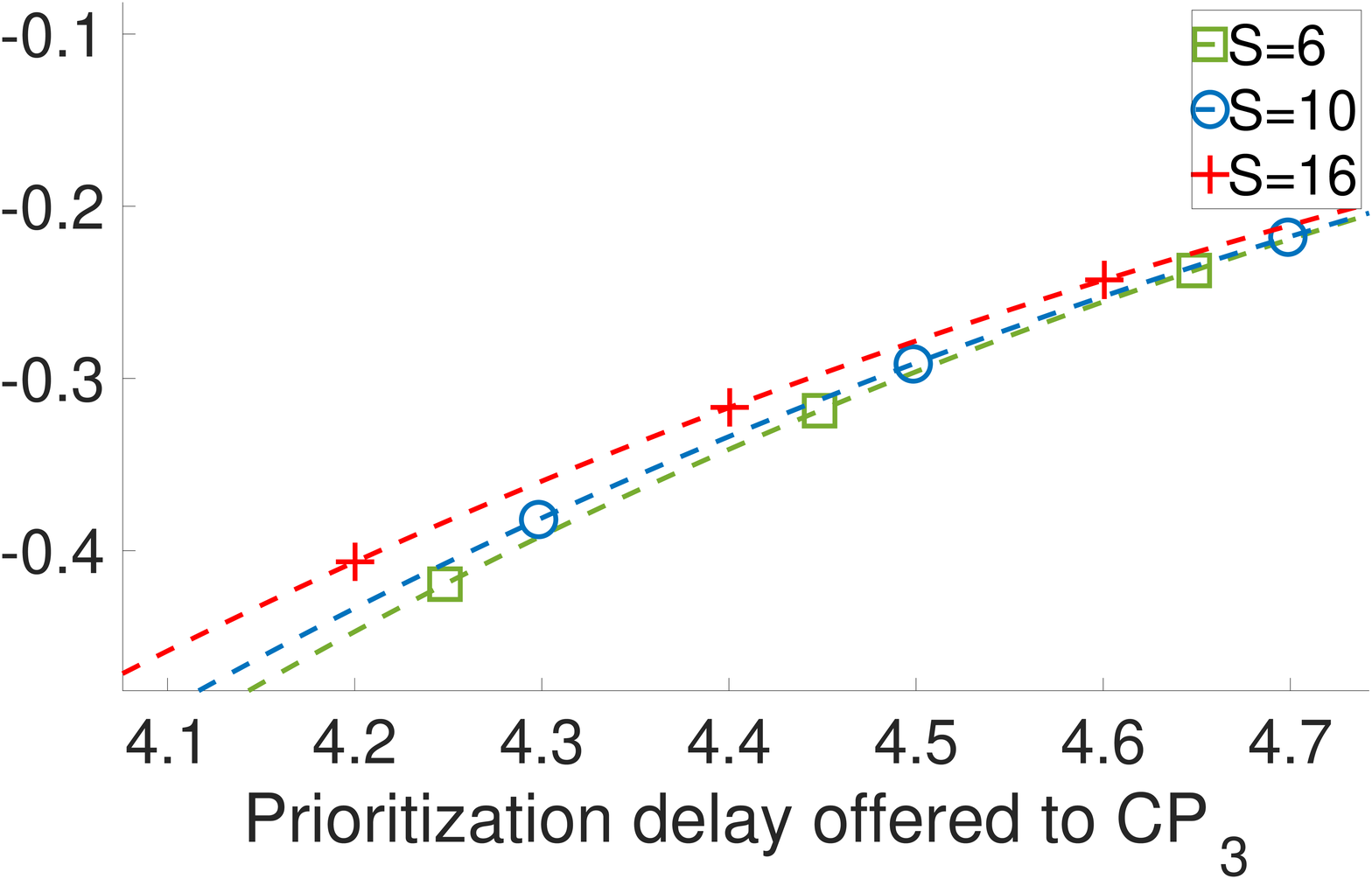}
   \caption{Welfare decreases as one of the content providers, $CP_3$, is prioritized through capacity reallocation, which inevitably leads to throttling of non-prioritized traffic. }
   \label{fig:allocation-welfare}
\end{minipage}
\end{figure*}

\rev{
\paragraph{Prioritization through capacity re-allocation.} Finally, we allow the ISP to offer improved QoS by re-allocating her existing capacity, rather than through building additional infrastructure. This will inevitably result in throttling of non-prioritized traffic. In particular, we consider reservation-based (PMP~\cite{odlyzko1999paris}) delays of the form $d_j={n_j}/{\Phi_j}$, where $\Phi_j$ is the network capacity dedicated to $CP_j$, and $n_j$ is the number of $CP_j$ users. We assume one of the middle CPs, say $CP_3$, is offered an exclusive prioritization contract with improved QoS/delay $d_3\in[4,6]$. To provide this fast lane, the ISP dedicates additional capacity $\Phi_3$ to this CP, at the expense of reduced capacity (hence, worse QoS) for other CPs' users. 

Figures~\ref{fig:allocation-users} and~\ref{fig:allocation-primary}  
show the change in the total number of users and primary users, respectively, of $CP_1$, $CP_2$, and $CP_3$ as a function of the delay offered to $CP_3$ who is exclusively prioritized. First, from Figure~\ref{fig:allocation-users}, we observe that non-prioritized CPs attain fewer users at equilibrium due to the degradation of their QoS. 
Figure~\ref{fig:allocation-primary} shows that $CP_1$ will not lose primary users (as it is competing with $CP_2$ who has the same degraded QoS), yet $CP_2$ loses primary users to his higher QoS competitor $CP_3$. 
As expected, the gap increases as $CP_3$ is offered a better prioritization deal (i.e., lower $d_3$). 

Figure~\ref{fig:allocation-welfare} illustrates the change in users' welfare following exclusive prioritization of $CP_3$ for three different values of $S_3$. We observe that users' welfare decreases when $CP_3$ is prioritized due to the throttling, and hence reduced QoS experienced, by other CPs' users. However, if $S_3$ has high subscription fees, he can attract fewer users (even if prioritized), and therefore the exclusive prioritization of his content will be less disruptive to other CPs' users. 
}
\section{Conclusion}\label{sec:conclusion}
We studied the effects of paid prioritization on an ISP's incentives for infrastructure investments in a market with content provider competition. We proposed a new model that captures users' heterogeneous preference over competing CPs, as well as their  multi-purchasing of content. We show that 
there indeed exist incentives for larger CPs (with high subscription and/or advertisement revenue) to offset the costs of fast lane construction. More importantly, our proposed model allowed us to analyzed the effects of prioritization on competing CPs with different revenue models, showing that the resulting CP revenue is just as dependent on the source of revenue as the size of the CP. 
In particular, we show that non-prioritized CPs need not lose users, but may have reduced ad-revenue due to decreased attention from their users. We establish that such arrangements can increase users' welfare as long as non-prioritized traffic is not throttled, and discuss the policy and practical implications of our findings. 

\rev{Our multi-purchasing model, which enabled us to account for users' attention to different content, and hence differentiate between the competing CPs' subscription and advertisement revenue sources, could provide new insights in other studies of paid prioritization and net neutrality implications. These include extending the current model to also capture ISP competition, 
and analyzing CPs' incentives for modifying their business models.} 

\bibliographystyle{IEEEtran}
\bibliography{sdp-prioritization}

\newpage
\appendix
\rev{

\subsection{Users' choice of content consumption}\label{app:content-choice-users}

We begin by identifying users' choices of content consumption (without imposing any assumptions on the problem parameters). We will use this lemma to derive the conditions of Assumption \ref{as:parameters} and in proving Theorem \ref{thm:MCPs}. We use the following definition throughout the proof. 
\begin{definition}\label{def:thresholds-app}
Define the primary and secondary thresholds for $CP_j$ as 
\begin{align*}
    \tau^P_j := \frac{1}{t}(V_j - d_j - S_j)~,\\
    \tau^S_j := \frac{1}{t}(V_j - d_j - \frac{S_j}{\theta})~,
\end{align*}
respectively. 
\end{definition}
Note that both these thresholds are functions of $d_j$, and that $\tau^P_j>\tau^S_j$. These thresholds denote the maximum distance from $CP_j$ (as defined in the Hotelling model) at which a user would derive utility from consumption of content from the CP as its primary and secondary content choice, respectively. We use these indifference points to determine users' content consumption choices between any two CPs, as shown in the following lemma. Note that users who are single/dual purchasing between two content providers $CP_j$ and $CP_k$ have four consumption options: $\{j\},\{j,k\},\{k,j\},\{k\}$,  where $\{j,k\}$ (resp. $\{j\}$) is a bundle in which the user first (resp. only) consumes content from $CP_j$, and vice versa. Note also that the following lemma only identifies the best choice among these four options; users will still choose to opt out of purchasing their best bundle unless it provides them with non-negative utility.

\begin{lemma}\label{lemma:content-choices-users}
Consider two content providers $CP_j$ and $CP_k$, $k>j$. Denote the CPs' positions in the Hotelling model by $X_j:=\frac{j-1}{M-1}$ and $X_k:=\frac{k-1}{M-1}$. Then, the choices of users $X_j<x<X_k$ between the four potential consumption options is as follows:
\begin{enumerate}

    \item If $X_j+\tau^S_j<X_j+\tau^P_j<X_k-\tau^P_k<X_k-\tau^S_k$:\\
    all users with $x<X_j+\tau^P_j$ single purchase on $CP_j$ and those with $x>X_k-\tau^P_k$ single purchase on $CP_k$.
    
    \item If $X_j+\tau^S_j<X_k-\tau^P_k<X_j+\tau^P_j<X_k-\tau^S_k$:\\
    all users with $x<\frac12 (X_k-\tau^P_k) + \frac12 (X_j+\tau^P_j)$ single purchase on $CP_j$ and those with $x>\frac12 (X_k-\tau^P_k) + \frac12 (X_j+\tau^P_j)$ single purchase on $CP_k$.
    
    \item If $X_k-\tau^P_k<X_j+\tau^S_j<X_j+\tau^P_j<X_k-\tau^S_k$, there are two possible cases:
    \begin{enumerate}
        \item If $\frac12 (X_k-\tau^P_k) + \frac12 (X_j+\tau^P_j)<X_j+\tau_j^S$, then users with $x<\frac12 (X_k-\tau^P_k) + \frac12 (X_j+\tau^P_j)$ single purchase on $CP_j$, those with $\frac12 (X_k-\tau^P_k) + \frac12 (X_j+\tau^P_j)<x<X_j+\tau_j^S$ dual purchase with primary $CP_{k}$, and those with $X_j+\tau_j^S<x$ single purchase on $CP_k$. 
        \item If $X_j+\tau_j^S<\frac12 (X_k-\tau^P_k) + \frac12 (X_j+\tau^P_j)$, then users with $x<\frac12 (X_k-\tau^P_k) + \frac12 (X_j+\tau^P_j)$ single purchase on $CP_j$, those with $\frac12 (X_k-\tau^P_k) + \frac12 (X_j+\tau^P_j)<x$ single purchase on $CP_k$. 
    \end{enumerate}
    
    \item If $X_k-\tau^P_k<X_j+\tau^S_j<X_k-\tau^S_k<X_j+\tau^P_j$, there are three possible cases:
    \begin{enumerate}
        \item If $\frac12 (X_k-\tau^P_k) + \frac12 (X_j+\tau^P_j)<X_j+\tau_j^S$, then users with $x<\frac12 (X_k-\tau^P_k) + \frac12 (X_j+\tau^P_j)$ single purchase on $CP_j$, those with $\frac12 (X_k-\tau^P_k) + \frac12 (X_j+\tau^P_j)<x<X_j+\tau_j^S$ dual purchase with primary $CP_{k}$, and those with $X_j+\tau_j^S<x$ single purchase on $CP_k$. 
        \item If $X_j+\tau_j^S<\frac12 (X_k-\tau^P_k) + \frac12 (X_j+\tau^P_j)<X_k-\tau_k^S$, then users with $x<\frac12 (X_k-\tau^P_k) + \frac12 (X_j+\tau^P_j)$ single purchase on $CP_j$, those with $\frac12 (X_k-\tau^P_k) + \frac12 (X_j+\tau^P_j)<x$ single purchase on $CP_k$. 
        \item If $X_k-\tau_k^S<\frac12 (X_k-\tau^P_k) + \frac12 (X_j+\tau^P_j)$, then users with $x<X_k-\tau_k^S$ single purchase on $CP_j$, those with $X_k-\tau_k^S<x<\frac12 (X_k-\tau^P_k) + \frac12 (X_j+\tau^P_j)$ dual purchase with primary $CP_j$, and those with $\frac12 (X_k-\tau^P_k) + \frac12 (X_j+\tau^P_j)<x$ single purchase on $CP_k$.
    \end{enumerate}
    
    \item If $X_k-\tau^P_k<X_k-\tau^S_k<X_j+\tau^S_j<X_j+\tau^P_j$, there are three cases:
    \begin{enumerate}
        \item If $\frac12 (X_k-\tau^P_k) + \frac12 (X_j+\tau^P_j)<X_k-\tau_k^S$, then users with $x<\frac12 (X_k-\tau^P_k) + \frac12 (X_j+\tau^P_j)$ single purchase on $CP_j$, those with $\frac12 (X_k-\tau^P_k) + \frac12 (X_j+\tau^P_j)<x<\frac12(X_j+X_k)-\frac{(V_k-V_j)-(d_k-d_j)}{2t}$ dual purchase with primary $CP_j$, those with $ \frac12(X_j+X_k)-\frac{(V_k-V_j)-(d_k-d_j)}{2t}<x<X_j+\tau_j^S$ dual purchase with primary $CP_{j+1}$, and those with $X_j+\tau_j^S<x$ single purchase on $CP_k$.
        \item If $X_k-\tau_k^S<\frac12 (X_k-\tau^P_k) + \frac12 (X_j+\tau^P_j)<X_j+\tau_j^S$, then users with $x<X_k-\tau_k^S$ single purchase on $CP_j$, those with $X_k-\tau_k^S<x<\frac12(X_j+X_k)-\frac{(V_k-V_j)-(d_k-d_j)}{2t}$ dual purchase with primary $CP_j$, those with $\frac12(X_j+X_k)-\frac{(V_k-V_j)-(d_k-d_j)}{2t}<x<X_j+\tau_j^S$ dual purchase with primary $CP_k$, and those with $x>X_j+\tau_j^S$ single purchase on $CP_k$. 
        \item If $X_j+\tau_j^S<\frac12 (X_k-\tau^P_k) + \frac12 (X_j+\tau^P_j)$, then users with $x<X_k-\tau_k^S$ single purchase on $CP_j$, those with $X_k-\tau_k^S<x<\frac12(X_j+X_k)-\frac{(V_k-V_j)-(d_k-d_j)}{2t}$ dual purchase with primary $CP_j$, those with $ \frac12(X_j+X_k)-\frac{(V_k-V_j)-(d_k-d_j)}{2t}<x<\frac12 (X_k-\tau^P_k) + \frac12 (X_j+\tau^P_j)$ dual purchase with primary $CP_{j+1}$, and those with  $\frac12 (X_k-\tau^P_k) + \frac12 (X_j+\tau^P_j)<x$ single purchase on $CP_{j+1}$. 
    \end{enumerate}        
\end{enumerate}
\end{lemma}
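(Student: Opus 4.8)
The plan is to turn the choice among the four options $\{j\},\{j,k\},\{k,j\},\{k\}$ into the comparison of four affine functions of the user position $x$ and to locate their crossing points, which are exactly the breakpoints appearing in the statement. For $x\in(X_j,X_k)$ I would introduce the two ``gross'' primary values $u:=V_j-d_j-t(x-X_j)$ and $v:=V_k-d_k-t(X_k-x)$, so that the four utilities, net of the common access fee $F$, are $u-S_j$, $u-S_j+\theta v-S_k$, $v-S_k+\theta u-S_j$, and $v-S_k$. Each is affine in $x$: the two options with $CP_j$ as primary are decreasing in $x$ and the two with $CP_k$ as primary are increasing, so that near $X_j$ the relevant contenders are $\{j\}$ and $\{j,k\}$ while near $X_k$ they are $\{k\}$ and $\{k,j\}$.

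First I would record the pairwise differences, since their sign changes are the breakpoints. One checks that $U_x(\{j,k\})-U_x(\{j\})=\theta v-S_k$, positive exactly when $x>X_k-\tau^S_k$; that $U_x(\{k,j\})-U_x(\{k\})=\theta u-S_j$, positive exactly when $x<X_j+\tau^S_j$; that $U_x(\{j,k\})-U_x(\{k,j\})=(1-\theta)(u-v)$, so that $CP_j$ is the preferred primary of a dual purchase precisely to the left of $\tfrac12(X_j+X_k)-\tfrac{(V_k-V_j)-(d_k-d_j)}{2t}$, where $u=v$; and that $U_x(\{j\})-U_x(\{k\})$ vanishes at $\tfrac12(X_j+\tau^P_j)+\tfrac12(X_k-\tau^P_k)$. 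Finally, the single-purchase utilities $u-S_j$ and $v-S_k$ change sign (opt-out) at the viability thresholds $X_j+\tau^P_j$ and $X_k-\tau^P_k$. Thus the secondary thresholds $\tau^S$ decide whether a CP is worth adding as secondary content, the primary thresholds $\tau^P$ decide viability, and the two remaining points decide which CP is the primary and, when both are consumed, in which order.

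With this dictionary the argument is a finite case analysis driven by the interleaving of the four points $X_j+\tau^S_j$, $X_j+\tau^P_j$, $X_k-\tau^P_k$, $X_k-\tau^S_k$ along the interval, which is exactly the five-way split of the statement; within each case the position of the primary-indifference point, and of the order-switch point $u=v$ in the sub-region where both residuals are positive, produces the subcases. On each elementary subinterval cut out by the breakpoints currently ``in play,'' I would identify the maximizer by reading off the four sign conditions above: whether each residual $\theta u-S_j$, $\theta v-S_k$ is positive tells us whether to append a secondary, and the primary/order comparisons fix which provider is consumed first. Keeping $d_j$ general, these per-pair indifference points are what one then aggregates over neighbouring CPs to obtain the user masses claimed in Theorem~\ref{thm:MCPs}.

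I expect the main obstacle to be bookkeeping rather than any single hard estimate. There are five configurations with up to three subcases apiece, and the genuinely delicate points are (i) correctly resolving the order of a dual purchase in the window where both residuals are positive, where the primary must be chosen by the gross-value comparison $u$ versus $v$ rather than by the subscription-inclusive single-purchase comparison, and (ii) handling the opt-out boundaries $\tau^P$ consistently, so that users deriving negative utility from both single options are excluded and no elementary interval is silently assigned the wrong maximizer when several breakpoints nearly coincide. Once the monotonicity and the four sign identities are in hand, each of the ten scenarios reduces to a routine interval-by-interval verification.
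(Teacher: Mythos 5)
Your setup is the same as the paper's own proof, which consists of writing the four bundle utilities and stating that ``a comparison across these leads to the lemma''; your contribution is to make that comparison explicit, and your four sign identities are all correct. The gap is in the final step, where you claim the maximizer on each elementary subinterval can be ``read off'' from those four sign conditions, with the residual signs deciding whether to append a secondary and a primary/order comparison deciding which CP comes first. Four pairwise differences do not exhaust the six comparisons among $\{j\},\{j,k\},\{k,j\},\{k\}$: you never control $U_x(\{j\})$ versus $U_x(\{k,j\})$, nor $U_x(\{j,k\})$ versus $U_x(\{k\})$. These are exactly the comparisons that matter in the ``mixed'' windows where precisely one residual is positive, since there the best $j$-primary bundle is the single purchase $\{j\}$ while the best $k$-primary bundle is the dual $\{k,j\}$ (or vice versa). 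Their difference, e.g.\ $U_x(\{j\})-U_x(\{k,j\})=(1-\theta)u-(v-S_k)$, vanishes at a point that is in general distinct from every breakpoint on your list: it coincides with the single-purchase indifference point $\frac12(X_j+\tau^P_j)+\frac12(X_k-\tau^P_k)$ only when $\theta u=S_j$ there, and with the $u=v$ point only when $\theta v=S_k$ there. Your separable heuristic fails because appending a secondary to one primary but not the other shifts the primary comparison; the choice of bundle is not decomposable into ``pick primary, then append secondaries.''

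This is not hypothetical bookkeeping risk; it bites in the stated subcases. Take $t=1$, $\theta=\tfrac12$, $X_j=0$, $X_k=1$, $V_j-d_j=1.15$, $V_k-d_k=1.5$, $S_j=0.1$, $S_k=1$: then $X_k-\tau^P_k=0.5<X_j+\tau^S_j=0.95<X_j+\tau^P_j=1.05<X_k-\tau^S_k=1.5$ (case 3) with midpoint $m=0.775<0.95$ (subcase 3(a)), yet at $x=0.75<m$ one computes $U_x(\{k,j\})=0.35>U_x(\{j\})=0.30$; the true boundary is the $U_x(\{j\})=U_x(\{k,j\})$ crossing at $x\approx0.717$, not $m$. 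So a rigorous execution of your plan must add the two missing crossings, where $(1-\theta)u=v-S_k$ and $(1-\theta)v=u-S_j$, to the breakpoint list — and doing so will in fact produce boundaries differing from the lemma's stated ones in subcases 3(a), 4(a), 4(c), 5(a), 5(c), i.e.\ your method, done carefully, exposes an imprecision in the statement itself. Cases 1, 2, 3(b), 4(b) survive your four-sign analysis because in the relevant intervals the mixed comparison is dominated (there $U_x(\{j\})-U_x(\{k,j\})$ can be bounded via $(U_x(\{j\})-U_x(\{k\}))-(\theta u-S_j)$ with both terms of known sign), and case 5(b) — the only one used downstream — is exact once one also places the $u=v$ point between the two secondary thresholds, which is precisely the extra condition the paper builds into the derivation of Assumption~1 in Appendix~B. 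You should make that dominance argument explicit rather than asserting the four conditions suffice; as written, the step ``identify the maximizer by reading off the four sign conditions'' is where the proof fails.
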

\begin{proof}
The proof involves comparing the utility that users derive at each consumption bundle, and finding the one that maximizes their payoff. The utility from each of the four bundles is given by:

\begin{align*}
U_x(\{j\}) & =  V_j - tm(x,j) - d_{j} - S_j - F~,\\
U_x(\{j,k\}) & =  V_j - tm(x,j) - d_{j} - S_j + \notag\\
&\hspace{0.2in} \theta(V_k - tm(x,k) - d_{k}) - S_{k} - F~,\notag\\
U_x(\{k,j\}) & =  V_k - tm(x,k) - d_{k} - S_{k} + \notag\\
&\hspace{0.2in} \theta(V_j - tm(x,j) - d_{j}) - S_j - F~,\\
U_x(\{k\}) & =  V_k - tm(x,k) - d_{k} - S_{k} - F~.
\end{align*}
A comparison across these leads to the lemma. 
\end{proof}
}
\subsection{Derivation of the conditions in Assumption \ref{as:parameters}}\label{app:assumption1}

We are primarily interested in analyzing a market in which both single-purchasing and dual-purchasing options can attract a non-zero mass of users, so as to analyze the effects of prioritization on changes in all possible consumption bundles. We will further be interested in outcomes under which the users have (on average) a higher number of content options to choose from, implying the highest level of competition between the CPs, so that users are not trivially defaulting to certain bundles due to lack of other options. 

From Lemma \ref{lemma:content-choices-users} we observe that for such outcomes, the parameters should be consistent with those under part (5). Further, the combination in (5-b) provides us with the set of parameters under which the users have on average a higher number of bundles to choose from, implying the highest level of competition between the CPs. We therefore derive the conditions under which the user choices under (5-b) in Lemma \ref{lemma:content-choices-users} are realized. We repeat the user choices here for each of reference: for users with $X_j<x<X_{j+1}$, with $V_j=V_{j+1}$: 
\begin{itemize}
    \item Users with $X_j<x<X_{j+1}-\tau_{j+1}^S$ single purchase on $CP_j$, 
    \item Users with $X_{j+1}-\tau_{j+1}^S<x<\frac12(X_j+X_{j+1})+\frac{d_{j+1}-d_j}{2t}$ dual purchase with primary $CP_j$,
    \item Users with $\frac12(X_j+X_{j+1})+\frac{d_{j+1}-d_j}{2t}<x<X_j+\tau_j^S$ dual purchase with primary $CP_{j+1}$,
    \item Users with $X_j+\tau_j^S<x<X_{j+1}$ single purchase on $CP_{j+1}$. 
\end{itemize}

\emph{Ensuring both single and dual purchasing:}
First, note the two extremes at which users single purchase. In order to ensure non-zero mass on these types of users, we need to have $\tau_j^S<\frac{1}{M-1}, \forall j$. This reduces to
\[(2-UB)\quad \theta < \frac{\min_j S_j}{V-\frac{t}{M-1}}\]

In addition, we require  $X_{j+1}-\tau_{j+1}^S<\frac12(X_j+X_{j+1})+\frac{d_{j+1}-d_j}{2t}<X_j+\tau_j^S$ for the two dual purchasing regions to be well-defined. Moreover, we require that $X_{j+1}-\tau_{j+1}^S<\frac12 (X_{j+1}-\tau^P_{j+1}) + \frac12 (X_j+\tau^P_j)<X_j+\tau_j^S$ to be within region (5-b) to begin with. These two conditions together simplify to requiring that
\begin{align*}
    X_{j+1} - X_j &< \min\{2\tau_j^S - \frac{d_{j+1}-d_j}{t} - \frac{S_{j+1}-S_j}{t}, \\
    & 2\tau_{j+1}^S + \frac{d_{j+1}-d_j}{t} + \frac{S_{j+1}-S_j}{t},\\
    & 2\tau_{j}^S - \frac{d_{j+1}-d_j}{t},  2\tau_{j+1}^S + \frac{d_{j+1}-d_j}{t}\}
\end{align*}
We minimize the upper bound above in order to ensure the condition is satisfied for all pairs of CPs. This reduces to finding $\theta$ such that 
\[\frac{t}{M-1}<2V-2d_0-\max_j S_j-2\frac{\max_j S_j}{\theta}\]
Therefore, the lower bound on $\theta$ is given by
\[(2-LB)\quad \theta > \frac{\max_j S_j}{V-d_0-\frac12(\max_j S_j +\frac{t}{M-1})}~.\]

\emph{High base value from content:} Lastly, we derive the full user market coverage condition and CP competition conditions. Specifically, we want all users, including the ones with the lowest utility, to derive positive benefit from participation (i.e., full market coverage), and not be forced to choose a CP when doing so for maximal competition (i.e., have at least two CP choices). This can be expressed as:
\begin{align*}
    \min_{\{x,j\}}\{V- tm(x,j)-d_{j} -S_j - F\}\geq 0
\end{align*}
In finding the minimum user utility, we note that this would require users to be at maximum distance from one of their two CP choices closest to them,  $m(x,j)=\frac{1}{M-1}$, have the worst case QoS/delay, $d_0$, and pay the highest CP subscription fee, $S_j$. This leads to requiring
\[(1)\quad V \geq \frac{t}{M-1} + d_0 + \max_j S_j + F\]
In terms of the notation of Lemma ~\ref{lemma:content-choices-users}, this condition is equivalent to requiring $\tau^P_j\geq \frac{1}{M-1}$ for all $CP_j$. 

\subsection{Proof of Theorem \ref{thm:MCPs}}\label{app:thm1-proof-inpaper}

\paragraph*{Users' content consumption decisions}
We begin by finding users' consumption patterns. Take users with $\frac{j-1}{M-1}\leq x\leq \frac{j}{M-1}$ (that is, those who are primarily interested in $CP_j$ and $CP_{j+1}$'s content). These users will have four potential consumption patterns: $\{j\}$, $\{j+1\}$, $\{j,j+1\}$ and $\{j+1,j\}$, where $\{j,-j\}$ (resp. $\{j\}$) is a bundle in which the user first (resp. only) consumes content from $CP_j$. 

Let $\tau^S_{j}(d_j) = \frac{1}{t}(V-d_j-\frac{S_j}{\theta})$. Under Assumption \ref{as:parameters}, the consumption choices of these users can, as shown in Appendix~\ref{app:content-choice-users}, is given by
\begin{itemize}
    \item Users with $\frac{j-1}{M-1}<x<\frac{j}{M-1}-\tau^S_{j+1}(d_{j+1})$ single purchase on $CP_j$, 
    \item Users with $\frac{j}{M-1}-\tau^S_{j+1}(d_{j+1})<x<\frac{j-\frac12}{M-1}+\frac{d_{j+1}-d_j}{2t}$ dual purchase with primary $CP_j$,
    \item Users with $\frac{j-\frac12}{M-1}+\frac{d_{j+1}-d_j}{2t}<x<\frac{j-1}{M-1}+\tau^S_j(d_j)$ dual purchase with primary $CP_{j+1}$,
    \item Users with $\frac{j-1}{M-1}+\tau^S_j(d_j)<x<\frac{j}{M-1}$ single purchase on $CP_{j+1}$. 
\end{itemize}

Therefore, depending on the relative delays $d_{j}$ and $d_{j+1}$ once prioritization decisions are made, users' content consumption can be illustrated as in Figure \ref{fig:users-consumption}. 

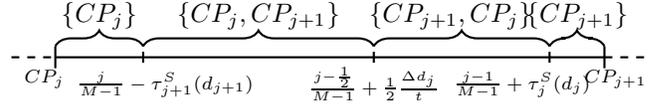
\begin{figure}[h!]
\begin{center}
\begin{tikzpicture}[scale=0.73]
\node[align=center] at (-0.2,-0.4)
{\scriptsize{$CP_j$}};
\node[align=center] at (10.2,-0.4)
{\scriptsize{$CP_{j+1}$}};
\node[align=center] at (5.8,-0.5) {\scriptsize{$\frac{j-\frac12}{M-1}+\frac12\frac{\Delta d_{j}}{t}$}};
\node[align=center] at (2,-0.5) {\scriptsize{$\frac{j}{M-1}-\tau^S_{j+1}(d_{j+1})$}};
\node[align=center] at (8.5,-0.5)
{\scriptsize{$\frac{j-1}{M-1}+\tau^S_{j}(d_j)$}};

\draw [thick,-] (0,0) -- (10,0);
\draw [thick,-] (0,0.2) -- (0,-0.2);
\draw [thick,-] (10,0.2) -- (10,-0.2);
\draw [thick,-] (5.8,0.1) -- (5.8,-0.1);
\draw [thick,-] (1.6,0.1) -- (1.6,-0.1);
\draw [thick,-] (9,0.1) -- (9,-0.1);

\draw [thick,dashed] (-0.8,0) -- (0,0);
\draw [thick,dashed] (10,0) -- (10.8,0);

\draw [thick,decorate,decoration={brace,amplitude=6pt,raise=0pt}] (0,0.2) -- (1.6,0.2);
\node[align=center] at (0.8,0.75) {$\{CP_j\}$};

\draw [thick,decorate,decoration={brace,amplitude=6pt,raise=0pt}] (1.6,0.2) -- (5.8,0.2);
\node[align=center] at (3.7,0.75) {$\{CP_j,CP_{j+1}\}$};

\draw [thick,decorate,decoration={brace,amplitude=6pt,raise=0pt}] (5.8,0.2) -- (9,0.2);
\node[align=center] at (7.2,0.75) {$\{CP_{j+1},CP_j\}$};

\draw [thick,decorate,decoration={brace,amplitude=6pt,raise=0pt}] (9,0.2) -- (10,0.2);
\node[align=center] at (9.5,0.75) {$\{CP_{j+1}\}$};
\end{tikzpicture}
\end{center}
    \caption{Users' content consumption choices.}
    \label{fig:users-consumption}
\end{figure}

\paragraph*{CPs' prioritization decisions}
Given users' consumption choices, we observe that if a $CP_j$ lowers his offered delay $d_{j}$, then he will serve a larger mass of the user market. This includes changes in both primary and secondary users. 
We evaluate the changes in a CP's revenue when opting for prioritization (accounting for the change in his users), so as to determine the incentive of the CP for paying prioritization fees. 

We first consider the revenue each $CP_j$ can attain from the users he is attracting under competition with one of his neighbors, $CP_{j+1}$. For any given pair of delays $d_j$ and $d_{j+1}$, from the $\frac{1}{M-1}$ users that reside between $CP_j$ and $CP_{j+1}$ in the Hotelling model, $CP_j$ serves a total of  $ \tau^S_{j}(d_j) = \frac{1}{t}(V-d_j-\frac{S_j}{\theta})$ users, with $\frac12(\frac{1}{M-1}+\frac{d_{j+1}-d_{j}}{t})$ of them being primary users, and the remainder  as secondary users.
The portion of $CP_j$ revenue that is accrued through this competition is therefore given by 
\begin{align}
\bar{R}_{j,j+1}(d_j)&=\tau^S_j(d_j)(S_j + \lambda\delta r_j - \lambda z_jp_j(d_j))\notag\\ 
&\qquad +\lambda r_j(1-\delta)\frac12(\frac{1}{M-1}+\frac{d_{j+1}-d_j}{t})~.
\label{eq:Rj-Mcp-j+1}
\end{align}
Note that the revenue when $CP_j$ does not opt for prioritization can be obtained by setting $d_j=d_0$ and $z_j=0$ in \eqref{eq:Rj-Mcp-j+1}. Similarly, $CP_j$'s portion of revenue accrued from competition with $CP_{j-1}$, $\bar{R}_{j,j-1}(d_j)$, can be derived analogously by swapping the indices $j+1$ with $j-1$ in \eqref{eq:Rj-Mcp-j+1}. Therefore, the total revenue to a middle-CP is given by 
\begin{align}
    R_j(d_j) &= 2\tau^S_j(d_j)(S_j + \lambda\delta r_j - \lambda z_jp_j(d_j))\notag\\ 
&~ +\lambda r_j(1-\delta)(\frac{1}{M-1}+\frac{d_{j+1}+d_{j-1}-2d_j}{2t})~. \label{eq:Rj-Mcp}
\end{align}
For the two end-CPs, $CP_1$ and $CP_M$, who only have one competitor, the total revenue is given by $R_1(d_1)=\bar{R}_{1,2}(d_1)$ and $R_M(d_M)=\bar{R}_{M,M-1}(d_M)$, respectively. 

Then, $CP_j$ will have an incentive to pay the required prioritization price if the CP's revenue given in \eqref{eq:Rj-Mcp}
increases following prioritization compared to his no prioritization revenue at $d_0$. In particular, the prioritization price for delay $d_j$ should be such that $R_j(d_j)\geq R_j(d_0)$. Simplifying this, the price charged to $CP_j$ for delay $d_j$ should satisfy the following constraint: 
\begin{align}
p_j(d_{j}) &\leq \frac{(\tau^S_{j}(d_j)-\tau^S_j(d_0))(S_j+\lambda\delta r_j)+ \frac{1}{2}(1-\delta)\lambda r_j\frac{d_0-d_j}{t}}{\lambda\tau^S_{j}(d_j)}\notag\\
& = \frac{S_j+\frac{1}{2}(1+\delta)\lambda r_j}{\lambda t}\cdot \frac{d_0-d_j}{\tau^S_{j}(d_j)}~.
\label{eq:cp-ir}
\end{align}

\paragraph*{The ISP's choices}
Lastly, the ISP's profit is given by
\begin{align*}
    \Pi &= F + \sum_{j=2}^{M-1} \lambda z_{j} 2\tau^S_{j}(d_j)(p_j(d_j) - C(d_{j})) \notag\\
    &\qquad + \sum_{j\in\{1,M\}} \lambda z_{j} \tau^S_{j}(d_j)(p_j(d_j) - C(d_{j}))~.
\end{align*}
where $C(d_j)$ is the per-user, per-unit of traffic, cost for building additional infrastructure to provide a delay $d_{j}$ to the $2\tau^S_{j}(d_j)$ users (or $\tau^S_{j}(d_j)$ users for end CPs) of a prioritized provider $CP_j$. 

We first note that as the ISP's profit is increasing in the prioritization price of any CP, she will charge the maximum possible price the CP is willing to pay. That is, the price charged to $CP_j$ will bind \eqref{eq:cp-ir}. 

The ISP further needs to determine her proposed QoS/delay $d_j$ to $CP_j$. Substituting for the price binding \eqref{eq:cp-ir}, this will be given by the solution to
\begin{align*}
    \max_{d_{j}\in(0, d_0]} ~ \frac{d_0-d_j}{t}(S_j+\frac12\lambda(1+\delta) r_j) - \lambda \tau^S_{j}(d_j) C(d_{j})~.
\end{align*}
If the prioritization arrangements are exclusive, the ISP will make the offer to the CP(s) from whom she can extract the most profit. 


\subsection{Proof of Corollary \ref{cor:cp-users-change}}
Assume $CP_j$ has opted for prioritization to get QoS/delay $d_j^*$, while all other CPs maintain the based QoS/delay $d_0$. From Theorem \ref{thm:MCPs} and Lemma \ref{lemma:cp-users}, the change in $CP_j$'s users is given by
\[2\tau_j(d^*_j)-2\tau_j(d_0)=2\frac{d_0-d^*_j}{t}~.\]
The change in primary users is given by
\[n_{1j}(d^*_j) - n_{1j}(d_0) = \frac{d_{j+1}+d_{j-1}-d_{j}^*}{2t}~.\]
As long as the other CPs are not prioritized, the above indicates a change of $\frac{d_0-d_j^*}{t}$ in $CP_j$'s primary users. Therefore, half the total gains are in primary users, and the remainder are in secondary users. A similar application of the theorem establishes the result for end-CPs. 

\subsection{Proof of Corollary \ref{cor:cp-revenue}}
For part (1), from the proof of Theorem \ref{thm:MCPs}, we know that the ISP's revenue is strictly increasing in the price charged to CPs for prioritization. The CPs' incentive for paying for prioritization on the other hand is given by $R_j(d_j)\geq R_j(d_0)$. Therefore, the ISP charges each CP for improved delays in a way that that his revenue at any offered delay is equal to $R_j(d_0)$. 

For part (2), from the proof of Theorem \ref{thm:MCPs}, the revenue of $CP_{j+1}$ is given by
\begin{align}
    R_{j+1}(d_{j+1}) &= \notag\\ &2\tau^S_{j+1}(d_{j+1})(S_{j+1} + \lambda\delta r_{j+1} - \lambda z_{j+1}p_{j+1}(d_{j+1}))\notag\\ 
&~ +\lambda r_{j+1}(1-\delta)(\frac{1}{M-1}+\frac{d_{j+2}+d_{j}-2d_{j+1}}{2t})~. \label{eq:Rj+1-cor2}
\end{align}
We assume $d_j=d_j^*$, and that $d_k=d_0, \forall k\neq j$. By inspecting  \eqref{eq:Rj+1-cor2}, we note that the change in the revenue of $CP_{j+1}$ from changes of $d_j$ stems from the last term. When $CP_j$ is prioritized, the change in $CP_{j+1}$'s revenue will be given by
\[\lambda r_{j+1}(1-\delta)\frac{d_0-d^*_{j}}{2t}\]
A similar argument holds for $CP_{j-1}$.

\subsection{Proof of Corollary \ref{cor:affordability}}
The ISP's choice of delay is determined from 
\begin{align*}
    d^*_j =&   \arg\min_{d_{j}\in(0, d_0]} ~
         \frac{S_j+\frac{1}{2}(1+\delta)\lambda r_j}{t}d_j + \lambda\tau_j(d_j)C(d_j).
\end{align*}
The first derivative of this (convex) objective function is given by
\[\frac{S_j+\frac{1}{2}(1+\delta)\lambda r_j}{t} + \lambda\tau_j(d_j)C'(d_j) - \lambda \frac{1}{t} C(d_j)~.\]
For the optimization problem to have a root $d_j^*>d_0$, we need for the first derivative to be positive at $d_0$, so that it is optimal to decrease the delay. That is
\[\frac{S_j+\frac{1}{2}(1+\delta)\lambda r_j}{t} + \lambda\tau_j(d_0)C'(d_0) > 0~.\]
Or equivalently, that
\[|C'(d_0)|<\frac{S_j+\frac{1}{2}(1+\delta)\lambda r_j}{t\lambda\tau_j(d_0)}~.\]

In addition, the optimal QoS/delay $d_j^*$ satisfies
\[(S_j+\frac{1}{2}(1+\delta)\lambda r_j) + \lambda(V-d^*_j-\frac{S_j}{\theta})C'(d^*_j) - \lambda  C(d^*_j)=0~.\]
As both $C(d)$ and $C'(d)$ are decreasing in $d$, increasing the constant $r_j$ decreases the root $d_j^*$ of the above. Further, noting that $C'<0$, increasing $S_j$ also increases the LHS of the equation above, and therefore the root $d_j^*$ of the above will decrease. Both of these indicate that the CP with higher $S_j$ or $r_j$ can afford better QoS/lower delay. 

\subsection{Proof of Corollary \ref{cor:user-welfare}}
Users' welfare can be found as the summation of the welfare of the primary and secondary users of each CP. For instance, for users with $\frac{j-1}{M-1}\leq x\leq \frac{j}{M-1}$ (those between $CP_j$ and $CP_{j+1}$), the welfare of the primary users of $CP_j$ is
\[\int_{0}^{\tiny{\frac{0.5}{M-1}+\frac{\Delta d_i}{2t}}} (V-d_j-{S_j}-tx)dx~,\]  
and that from secondary users is 
\[\theta \int_{\frac{0.5}{M-1}+\frac{\Delta d_i}{2t}}^{\tau^S_j(d_j)} (V-d_j-\frac{S_j}{\theta}-tx)dx~.\] 
For all users $x\in[\frac{j-1}{M-1},  \frac{j}{M-1}]$, the sum of four such expressions can be simplified to
\begin{align*}
&(1-\theta)\left(\tfrac{1}{M-1}(V-\tfrac{d_j+d_{j+1}}{2}-\tfrac{t}{2(M-1)})+\tfrac{(d_{j+1}-d_j)^2}{4t}\right) \notag\\
&\qquad +\theta \frac{t}{2}(\tau^S_j(d_j)^2+\tau^S_{j+1}(d_{j+1})^2)~.
\end{align*}

It is straightforward to show that the above expression is decreasing in $d_j$ and $d_{j+1}$ over $(0,d_0]$. Therefore, having a fast lane for either CP will increase welfare. Further, if $d_j$ increases while $d_{j+1}$ decreases (i.e., $CP_{j+1}$ prioritized while $CP_j$ throttled), the overall welfare could decrease.  

\section{Proof of Theorem \ref{thm:MCPs-single}}
We begin by finding users' consumption patterns. As $\theta=0$, we have $\tau_j^S\rightarrow -\infty$ for all CPs, and therefore we are in either cases 1 or 2 of Lemma~\ref{lemma:content-choices-users}. Further, as the condition of part (1) of Assumption~\ref{as:parameters} is satisfied, we are in case 2. This means that for users with $\frac{j-1}{M-1}\leq x\leq \frac{j}{M-1}$ (that is, those who are primarily interested in $CP_j$ and $CP_{j+1}$'s content), the content consumption choices are as follows: 
\begin{itemize}
    \item Users with $\frac{j-1}{M-1}<x<\frac{j-0.5}{M-1}+\frac{d_{j+1}-d_j}{2t}+\frac{S_{j+1}-S_j}{2t}$ single purchase on $CP_j$, 
    \item Users with $\frac{j-0.5}{M-1}+\frac{d_{j+1}-d_j}{2t}+\frac{S_{j+1}-S_j}{2t}<x<\frac{j}{M-1}$ single purchase on $CP_{j+1}$. 
\end{itemize}

\paragraph*{CPs' prioritization decisions}
Given users' consumption choices, we observe that if a $CP_j$ lowers his offered delay $d_{j}$, then he will serve a larger mass of the user market. 
We evaluate the changes in a CP's revenue when opting for prioritization (accounting for the change in his users), so as to determine the incentive of the CP for paying prioritization fees. 

We first consider the revenue each $CP_j$ can attain from the users he is attracting under competition with one of his neighbors, $CP_{j+1}$. The portion of $CP_j$ revenue that is accrued through this competition is therefore given by 
\begin{align}
\bar{R}_{j,j+1}(d_j)&= (\frac{0.5}{M-1}+\frac{d_{j+1}-d_j}{2t}+\frac{S_{j+1}-S_j}{2t})\notag\\
& \hspace{0.25in}(S_j+\lambda r_j -\lambda z_j p_j(d_j))
\label{eq:Rj-Mcp-j+1-single}
\end{align}
Note that the revenue when $CP_j$ does not opt for prioritization can be obtained by setting $d_j=d_0$ and $z_j=0$ in \eqref{eq:Rj-Mcp-j+1-single}. Similarly, $CP_j$'s portion of revenue accrued from competition with $CP_{j-1}$, $\bar{R}_{j,j-1}(d_j)$, can be derived analogously by swapping the indices $j+1$ with $j-1$ in \eqref{eq:Rj-Mcp-j+1-single}. Therefore, the total revenue to a middle-CP is given by 
\begin{align}
    R_j(d_j) &=  (\frac{1}{M-1}+\tfrac{d_{j+1}+d_{j-1}-2d_j}{2t}+\tfrac{S_{j+1}+S_{j-1}-2S_j}{2t})\notag\\
    &(S_j+\lambda r_j -\lambda z_j p_j(d_j))~.
    \label{eq:Rj-Mcp-new}
\end{align}
For the two end-CPs, $CP_1$ and $CP_M$, who only have one competitor, the total revenue is given by $R_1(d_1)=\bar{R}_{1,2}(d_1)$ and $R_M(d_M)=\bar{R}_{M,M-1}(d_M)$, respectively. 

Then, $CP_j$ will have an incentive to pay the required prioritization price if the CP's revenue given in \eqref{eq:Rj-Mcp-new}
increases following prioritization compared to his no prioritization revenue at $d_0$. Let $n_j(d_j,d_{j\pm 1}):=\frac{1}{M-1}+\frac{d_{j+1}+d_{j-1}-2d_j}{2t}+\frac{S_{j+1}+S_{j-1}-2S_j}{2t})$ for mid-CPs, and $n_j(d_j,d_{j\pm 1}):=\frac{0.5}{M-1}+\frac{d_{j\pm 1}-2d_j}{2t}+\frac{S_{j\pm 1}-S_j}{2t})$ for end CPs. Then, the price charged to $CP_j$ for delay $d_j$ should satisfy the following constraint: 
\begin{align}
\lambda n_j(d_j,d_{j\pm 1})p_j(d_{j}) &\leq \frac{d_0-d_j}{t}(S_j+\lambda r_j)~.
\label{eq:cp-ir-new}
\end{align}
Note that in contrast to the dual-purchasing scenario, this price depends on the prioritization deals offered to $CP_j$'s competitors. 

\paragraph*{The ISP's choices}
Lastly, the ISP's profit is given by
\begin{align*}
    \Pi &= F + \sum_{j=1}^{M} \lambda z_{j} n_j(d_j, d_{j\pm 1})(p_j(d_j) - C(d_{j}))~.
\end{align*}
where $C(d_j)$ is the per-user, per-unit of traffic, cost for building additional infrastructure to provide a delay $d_{j}$ to the $n_j(d_j, d_{j\pm 1})$ users of a prioritized provider $CP_j$. 

We first note that as the ISP's profit is increasing in the prioritization price of any CP, she will charge the maximum possible price the CP is willing to pay. That is, the price charged to $CP_j$ will bind \eqref{eq:cp-ir-new}.

The ISP further needs to determine her proposed QoS/delay $d$s to all CPs simultaneously. Substituting for the price binding \eqref{eq:cp-ir-new}, this will be given by the solution to
\begin{align*}
    \max_{d, ~d_{j}\in(0, d_0]} \sum_{j=1}^{M} \big(\frac{d_0-d_j}{t}(S_j+\lambda r_j) + \lambda n_j(d_j, d_{j\pm 1}) C(d_{j}))\big)~.
\end{align*}

\subsection{Analysis under relaxations of Assumption \ref{as:parameters}}\label{app:assumption1-relaxed}

In this appendix, we discuss how several of our results continue to hold qualitatively under relaxations of Assumption \ref{as:parameters}. 

\paragraph{Small $\theta$: no dual-purchasing} We begin by considering the violations of the lower bound set on $\theta$ in Part (2). When $\theta$ is significantly small, no dual-purchasing behavior will exist at equilibrium. The analysis of this case will lead to the single-purchasing equilibrium outlined in Theorem~\ref{thm:MCPs-single}. In particular, this happens if $\theta < \frac{S_j}{V-\frac12 \frac{t}{M-1}-\frac12(d_{j+1}+d_j)}, \forall j$, and correspond to the user choices outlined in cases 1 and 2 of Lemma~\ref{lemma:content-choices-users}. As discussed in Section~\ref{sec:disc-MCPs}, the lack of dual-purchasing means that all CPs (regardless of size and revenue model) will be negatively impacted by prioritization of their competitors. 

\paragraph{Large $\theta$: no single-purchasing} We next consider the violation of the upper bound set on $\theta$ in Part (2). We consider a residual benefit rate such that $\theta>\frac{S_j}{V-d_j-\frac{t}{M-1}}, \forall j$. Then, all users will be dual-purchasing content, with or without prioritization. In particular, each middle CP (end CP) will have a total of $n_j=2\frac{1}{M-1}$ ($n_j=\frac{1}{M-1}$) users and all these users will be dual-purchasing. The number of each CP's primary providers will be given by the expressions in Theorem~\ref{thm:MCPs}. Our results will continue to hold qualitatively in this scenario. That is, if a CP's competitors are prioritized, the non-prioritized CP will not lose users, but will lose primary users by the same amount outlined in Corollary~\ref{cor:cp-users-change}, and have a revenue change as outlined in Corollary~\ref{cor:cp-revenue}.

Finally, we consider relaxing the full coverage condition of Part (1), by assuming $V<d_0+\frac12\frac{t}{M-1}+\min_j S_j+F$. In this case, there exist users who do not have an incentive to purchase from any of the content providers in the non-prioritized equilibrium. By introducing prioritization, a significant decrease in $d_k, k\in\arg\min_j S_j$ can incentivize some users to adopt $CP_k$ as their primary provider. The changes in existing users' preferences for content will follow the same arguments as those of Theorem~\ref{thm:MCPs}. This means that, consistent with our original findings, a prioritized CP will gain additional users, while a non-prioritized CP will not lose any users. 
\subsection{Market equilibrium under a non-uniform user distribution}\label{app:thm3-proof-inpaper}

In this appendix, we extend our analysis to non-uniform distributions of users in the Hotelling model. This allows us to account for the popularity of certain CPs even prior to prioritization. Let $F(x)$ denote the CDF of the distribution of users in the Hotelling model.  Then, the market equilibrium is characterized by the following theorem. 

\begin{theorem}\label{thm:MCPs-nonuniform-inpaper}
Assume the conditions in Assumption \ref{as:parameters} are satisfied. Let $\tau_j(d_j):=\frac{1}{t}\left(V-d_j-\frac{S_j}{\theta}\right)$. Then, 
\begin{enumerate}
\item A $CP_j$, $j\in[2,M-1]$, with QoS/delay $d_j$ will have a total of
\begin{align*}
n_j &= F(\frac{j-1}{M-1}+\tau_j(d_j)) - F(\frac{j-1}{M-1}-\tau_j(d_j))~,
\end{align*}
users, with
\begin{align*}
    n_{j1} &= F(\frac{j-\frac12}{M-1}+\frac{d_{j+1}-d_j}{2t}) - F(\frac{j-\frac32}{M-1}+\frac{d_{j}-d_{j-1}}{2t})~,
\end{align*} of them being primary users, and the remainder as secondary users. 
\item $CP_1$ and $CP_M$ will have a total of
\begin{align*}
n_1 &= F(\tau_1(d_1))~,\\
n_M &= 1 - F(1-\tau_{M}(d_{M}))
\end{align*}
users, with
\begin{align*}
    n_{11} &= F(\frac{\frac12}{M-1}+\frac{d_{2}-d_1}{2t})~,\\
    n_{M1} &= 1 -  F(\frac{M-\frac32}{M-1}+\frac{d_{M}-d_{M-1}}{2t})~,
\end{align*} 
of them being primary users, and the remainder as secondary users, respectively. 
\end{enumerate}
Further, in a prioritized regime, letting $\bar{R}(d_j,d_{j\pm 1})=n_j(d_j)S_j + \lambda(r_j(n_{j1}(d_j,d_{j\pm 1})+\delta n_{j2}(d_j,d_{j\pm 1}))$:
\begin{enumerate}
\setcounter{enumi}{2}
    \item The QoS offered to $CP_j$'s users in the fast lane will be determined by
    \begin{align*}
        d^* :=&   \arg\max_{d, d_{j}\in(0, d_0]} ~
        \sum_{j=1}^{M} \bar{R}(d_j,d_{j\pm 1}) - \bar{R}(d_0,d_{j\pm 1}) \\
        & \hspace{1.8in}- \lambda n_j(d_j)C(d_{j})~.
    \end{align*}
    \item $CP_j$ will be charged a prioritization fee of
    \begin{align*}
    p_j(d_j^*,d_{j\pm 1}) =  \frac{\bar{R}_j(d^*_j,d^*_{j\pm 1})-\bar{R}_j(d_0,d^*_{j\pm 1})}{\lambda n_j(d^*_j)}~. 
    \end{align*}
    \end{enumerate}
\end{theorem}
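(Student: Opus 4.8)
The plan is to run the backward-induction argument of Theorem~\ref{thm:MCPs} essentially verbatim, isolating the single place where the user distribution enters. The pivotal observation is that every boundary separating two consumption patterns in Lemma~\ref{lemma:content-choices-users} (and in the refined four-region structure derived in Appendix~\ref{app:thm1-proof-inpaper}) is an \emph{indifference point}, obtained by equating two utilities of the form in \eqref{eq:user-utility}. Since those utilities depend only on a user's position $x$ and the CP parameters $(V,t,d_j,S_j)$, and never on how users are distributed, the locations of all thresholds are identical to the uniform case. In particular a middle $CP_j$ at $X_j=\frac{j-1}{M-1}$ still serves exactly the users within Hotelling distance $\tau_j(d_j)=\tau^S_j(d_j)$ of $X_j$, i.e.\ the interval $[X_j-\tau_j(d_j),\,X_j+\tau_j(d_j)]$, and its primary users are exactly those lying between the two dual-purchase-primary boundaries $\frac{j-\frac32}{M-1}+\frac{d_j-d_{j-1}}{2t}$ and $\frac{j-\frac12}{M-1}+\frac{d_{j+1}-d_j}{2t}$ established in the proof of Theorem~\ref{thm:MCPs}.

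Given these fixed boundaries, Parts~(1) and~(2) follow by replacing ``length of interval'' with ``mass under the distribution'': I would integrate the density over each region, which is the difference of the CDF $F$ at the endpoints. This yields $n_j=F(X_j+\tau_j(d_j))-F(X_j-\tau_j(d_j))$ and the stated $n_{j1}$ for middle CPs, with $n_{j2}=n_j-n_{j1}$; the end CPs $CP_1,CP_M$ are handled identically after truncating the relevant interval at the domain endpoints $0,1$ (using $F(0)=0$, $F(1)=1$), which is why their expressions collapse to a single CDF evaluation. As a sanity check, substituting the uniform CDF $F(x)=x$ turns every CDF difference into an interval length and recovers Parts~(1) and~(2) of Theorem~\ref{thm:MCPs}.

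For Parts~(3) and~(4) I would reuse the Stackelberg step of Appendix~\ref{app:thm1-proof-inpaper}. Writing $CP_j$'s revenue from \eqref{eq:cp-revenue} as $R_j(d_j)=\bar R_j(d_j,d_{j\pm1})-\lambda n_j(d_j)z_jp_j(d_j)$, with $\bar R_j$ as defined in the statement, the CP opts in precisely when $R_j(d_j)\ge R_j(d_0)=\bar R_j(d_0,d_{j\pm1})$. Because the ISP's profit \eqref{eq:isp-profit} is increasing in each $p_j$, she binds this individual-rationality constraint, and solving for $p_j$ gives Part~(4) directly. Substituting the binding price back into \eqref{eq:isp-profit} cancels the $\lambda n_jp_j$ terms against the extracted surplus $\bar R_j(d_j,d_{j\pm1})-\bar R_j(d_0,d_{j\pm1})$, leaving the ISP maximizing $F+\sum_j\big(\bar R_j(d_j,d_{j\pm1})-\bar R_j(d_0,d_{j\pm1})-\lambda n_j(d_j)C(d_j)\big)$ over the admissible delay vector, which is exactly Part~(3).

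The main obstacle is that, unlike the uniform case, the program in Part~(3) no longer decouples across CPs. In Theorem~\ref{thm:MCPs} the primary-user count was \emph{affine} in the delays, so the neighbor terms $d_{j\pm1}$ canceled in the difference $\bar R_j(d_j,\cdot)-\bar R_j(d_0,\cdot)$ and the ISP's objective reduced to a sum of per-CP problems depending only on $d_j$. With a general $F$ this cancellation fails: $\bar R_j$ depends on $d_{j\pm1}$ through $F$ evaluated at the primary boundaries, so the fee in Part~(4) genuinely depends on the competitors' equilibrium delays $d^*_{j\pm1}$ and the ISP must solve one coupled optimization over the whole vector $d$. I would therefore \emph{not} attempt a closed-form solution or an affordability reduction as in Corollary~\ref{cor:affordability}; the statement only asserts that $d^*$ is the argmax of the stated coupled program and that the fee takes the displayed ratio form, both of which fall out of the substitution above. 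A secondary point to check is that Assumption~\ref{as:parameters} still keeps all consumption regions nonempty so the $n_{j1}/n_{j2}$ split is well defined; this holds because the assumption constrains $\tau^P_j,\tau^S_j$ relative to the inter-CP spacing, quantities that are again independent of $F$.
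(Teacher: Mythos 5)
Your proposal matches the paper's proof essentially step for step: the paper likewise observes that the indifference thresholds from Lemma~\ref{lemma:content-choices-users} are distribution-independent so that Parts~(1)--(2) follow by evaluating the CDF $F$ at the same boundaries, then binds the individual-rationality constraint $\bar{R}_j(d_j,d_{j\pm 1})-\lambda n_j(d_j)p_j(d_j)\geq \bar{R}_j(d_0,d_{j\pm 1})$ to obtain Part~(4), and substitutes the binding price into the ISP's profit to get the coupled program of Part~(3). Your observation that the affine cancellation of $d_{j\pm 1}$ from the uniform case fails for general $F$, forcing a joint optimization over the whole delay vector, is exactly the point the paper makes when noting that the ISP must set all delays simultaneously because each CP's fee depends on his competitors' contracts.
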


\textbf{Proof:} Our proof steps are largely similar to that of Theorem~\ref{thm:MCPs}, with the difference that the number of users attained by each CP is now dependent on the user distribution CDF. 

\paragraph*{Users' content consumption decisions}
We begin by finding users' consumption patterns. Take users with $\frac{j-1}{M-1}\leq x\leq \frac{j}{M-1}$ (that is, those who are primarily interested in $CP_j$ and $CP_{j+1}$'s content). These users will have four potential consumption patterns: $\{j\}$, $\{j+1\}$, $\{j,j+1\}$ and $\{j+1,j\}$, where $\{j,-j\}$ (resp. $\{j\}$) is a bundle in which the user first (resp. only) consumes content from $CP_j$. 

Let $\tau^S_{j}(d_j) = \frac{1}{t}(V-d_j-\frac{S_j}{\theta})$. As the cutoff decision points identified in Lemma~\ref{lemma:content-choices-users} are independent of users' distributions, the consumption choices of these users will again be given by
\begin{itemize}
    \item Users with $\frac{j-1}{M-1}<x<\frac{j}{M-1}-\tau^S_{j+1}(d_{j+1})$ single purchase on $CP_j$, leading to $F(\frac{j}{M-1}-\tau_{j+1}(d_{j+1})) - F(\frac{j-1}{M-1})$ such users, 
    \item Users with $\frac{j}{M-1}-\tau^S_{j+1}(d_{j+1})<x<\frac{j-\frac12}{M-1}+\frac{d_{j+1}-d_j}{2t}$ dual purchase with primary $CP_j$, leading to $F(\frac{j-\frac12}{M-1}+\frac{d_{j+1}-d_j}{2t})-F(\frac{j}{M-1}-\tau^S_{j+1}(d_{j+1}))$ such users, 
    \item Users with $\frac{j-\frac12}{M-1}+\frac{d_{j+1}-d_j}{2t}<x<\frac{j-1}{M-1}+\tau^S_j(d_j)$ dual purchase with primary $CP_{j+1}$, leading to $F(\frac{j-1}{M-1}+\tau^S_j(d_j))-F(\frac{j-\frac12}{M-1}+\frac{d_{j+1}-d_j}{2t})$ such users,
    \item Users with $\frac{j-1}{M-1}+\tau^S_j(d_j)<x<\frac{j}{M-1}$ single purchase on $CP_{j+1}$, leading to $F(\frac{j}{M-1})-F(\frac{j-1}{M-1}+\tau^S_j(d_j))$ such users. 
\end{itemize}

To find the mass of primary and secondary users of each CP, we should integrate the above regions over the cdf of users' distribution in the Hotelling model, for all regions in which the CPs have competitors. This leads to total and primary users: %
\begin{align*}
n_j &= F(\frac{j-1}{M-1}+\tau_j(d_j)) - F(\frac{j-1}{M-1}-\tau_j(d_j))~,\\
n_{j1} &= F(\frac{j-\frac12}{M-1}+\frac{d_{j+1}-d_j}{2t}) - F(\frac{j-\frac32}{M-1}+\frac{d_{j}-d_{j-1}}{2t})~,
\end{align*}
for a non-end $CP_j$. The users for the end CPs can be found similarly. 

\paragraph*{CPs' prioritization decisions}
Given users' consumption choices, we observe that if a $CP_j$ lowers his offered delay $d_{j}$, then he will serve a larger mass of the user market. This includes changes in both primary and secondary users. 
We evaluate the changes in a CP's revenue when opting for prioritization (accounting for the change in his users), so as to determine the incentive of the CP for paying prioritization fees. 

We know that the net revenue of $CP_j$ (without prioritization fees) is given by
\begin{align}
\bar{R}(d_j,d_{j\pm 1})&:=n_j(d_j)S_j + \lambda(r_j(n_{j1}(d_j,d_{j\pm 1})+\delta n_{j2}(d_j,d_{j\pm 1}))~.
\label{eq:Rj-nonuniform}
\end{align}
Note that the revenue when $CP_j$ does not opt for prioritization can be obtained by setting $d_j=d_0$ and $z_j=0$ in \eqref{eq:Rj-nonuniform}. 

Then, $CP_j$ will have an incentive to pay the required prioritization price if the CP's revenue given in \eqref{eq:Rj-nonuniform} 
increases following prioritization compared to his no prioritization revenue at $d_0$. In particular, the prioritization price for delay $d_j$ should be such that $\bar{R}_j(d_j,d_{j\pm 1})-\lambda n_j(d_j)p_j(d_j)\geq \bar{R}_j(d_0,d_{j\pm 1})$, or equivalently, that
\begin{align}
p_j(d_{j}) &\leq \frac{\bar{R}_j(d_j,d_{j\pm 1})-\bar{R}_j(d_0,d_{j\pm 1})}{\lambda n_j(d_j)}
\label{eq:cp-ir-nonuniform}
\end{align}

\paragraph*{The ISP's choices}
Lastly, the ISP's profit is given by
\begin{align*}
    \Pi &= F + \sum_{j=1}^{M} \lambda z_{j} n_{j}(d_j)(p_j(d_j) - C(d_{j}))~.
\end{align*}
where $C(d_j)$ is the per-user, per-unit of traffic, cost for building additional infrastructure to provide a delay $d_{j}$ to the $n_j(d_j)$ users of a prioritized provider $CP_j$. 

We first note that as the ISP's profit is increasing in the prioritization price of any CP, she will charge the maximum possible price the CP is willing to pay. That is, the price charged to $CP_j$ will bind \eqref{eq:cp-ir-nonuniform}. The ISP will then jointly maximize the delays offered to all CPs, as the price charged to each CP depends on that charged to the CPs' competitors. 

If the prioritization arrangements are exclusive, the ISP will make the offer to the CP(s) from whom she can extract the most profit. \hfill \qedsymbol

\vspace{0.1in}

\paragraph*{Comparison with findings under uniform user distributions (Theorem~\ref{thm:MCPs})} From the above theorem, we observe that as CPs' total number of users $n_j$ depends only on their own  $d_j$, similar to the uniform case, no CP will lose users when his competitors are prioritized. However, a non-prioritized $CP_j$ will again lose primary users $n_{1j}$ to a prioritized competitor; for instance, when $d_{j+1}$ is lowered, $F(\frac{j-\frac12}{M-1}+\frac{d_{j+1}-d_j}{2t})$ decreases, lowering the primary users of $CP_j$. We further observe that this decrease is sharper for CPs that are originally more ``popular'', in that more users are intrinsically interested in their content, as the cdf of users' distribution $f(\cdot)$ (hence, the rate of change of $F(\cdot)$) is higher around these CPs. Conversely, CPs that are competing with more popular content providers stand to gain more users through paid prioritization. However, in contrast to Theorem~\ref{thm:MCPs}, the prioritization prices offered to each CP will depend on those offered to their competitors.

\end{document}